\documentclass[aps,prl,twocolumn,nofootinbib,superscriptaddress,floatfix]{revtex4-2}

\usepackage[T1]{fontenc}
\usepackage{amsmath,amssymb,amsthm,mathtools,bm}
\usepackage{graphicx}
\usepackage{xcolor}
\usepackage{tikz}
\usetikzlibrary{arrows.meta,positioning,calc,fit}
\usepackage{microtype}
\usepackage[colorlinks=true,linkcolor=blue,citecolor=blue,urlcolor=blue]{hyperref}
\usepackage{enumitem}

\newcommand{\C}{\mathbb C}
\newcommand{\F}{\mathbb F}
\newcommand{\cH}{\mathcal H}
\newcommand{\cK}{\mathcal K}
\newcommand{\cB}{\mathcal B}
\newcommand{\cA}{\mathcal A}
\newcommand{\SEP}{\mathrm{SEP}}
\newcommand{\LOCC}{\mathrm{LOCC}}
\newcommand{\ind}{\mathrm{ind}}
\newcommand{\col}{\mathrm{col}}
\newcommand{\ad}{\mathrm{ad}}
\newcommand{\Tr}{\operatorname{Tr}}
\newcommand{\Sym}{\operatorname{Sym}}
\newcommand{\Stab}{\operatorname{Stab}}

\newcommand{\ket}[1]{|#1\rangle}
\newcommand{\bra}[1]{\langle#1|}

\newcommand{\proj}[1]{|#1\rangle\!\langle#1|}
\newcommand{\SMref}[1]{SM-\ref{#1}}

\newtheorem{theorem}{Theorem}
\newtheorem{proposition}{Proposition}
\newtheorem{lemma}{Lemma}
\newtheorem{corollary}{Corollary}
\theoremstyle{remark}

\begin{document}

\title{Power and Limits of Collective Local Measurements in Multicopy State Discrimination}

\author{Mao-Sheng Li}
\affiliation{School of Mathematics, South China University of Technology, Guangzhou 510641, China}
\author{Yan-Ling Wang}
\email{wangylmath@yahoo.com}
\affiliation{School of Computer Science and Technology, Dongguan University of Technology, Dongguan 523808, China}
 
\author{Zhu-Jun Zheng}
\email{zhengzj@scut.edu.cn}
\affiliation{School of Mathematics, South China University of Technology, Guangzhou 510641, China}

\date{September 6, 2026}

\begin{abstract}
More than two decades ago, Bennett \emph{et al.}
[\href{https://doi.org/10.1103/PhysRevA.59.1070}
{Phys. Rev. A \textbf{59}, 1070 (1999)}]
asked whether perfect local discrimination of orthogonal quantum states
can require more than two copies.  This question was subsequently
answered for adaptive protocols that process the copies separately
[\href{https://doi.org/10.1103/PhysRevLett.126.210505}
{Phys. Rev. Lett. \textbf{126}, 210505 (2021)}],
but remained open when each laboratory is allowed to process its local
copies collectively.  Here we resolve this stronger setting and identify
collective access across repeated local inputs as a distinct resource. For every fixed odd-prime local dimension, there exist complete maximal-stabilizer eigenbases whose copy complexity under individual-copy separable measurements diverges with system size.  Under collective processing this behavior changes sharply: every maximal-stabilizer eigenbasis in odd-prime local dimension is perfectly decoded by one-round collective LOCC using at most three copies.  Collective processing, however, does not remove
multicopy hardness in general.  For every fixed local dimension $d\ge2$, we prove the existence, within an explicit phase family, of complete bases whose copy complexity remains unbounded even under collective separable measurements, with a  square root of the number of subsystems as lower-bound scale.  Thus sample number, spatial measurement power, and coherent access across repeated local inputs are distinct resources in distributed quantum measurement.
\end{abstract}

\maketitle

 \paragraph{Introduction.---}
 The operational meaning of quantum-state distinguishability depends not
 only on the states, but also on the measurements that are physically
 available.  Mutually orthogonal states are perfectly distinguishable by
 a global measurement, yet spatial separation can make the same states
 locally indistinguishable.  This gap, first exposed by nonlocality
 without entanglement, has made local state discrimination a basic
 setting for studying the power and limitations of restricted quantum
 measurements~\cite{Bennett1999,BennettUPB1999,Walgate2000,
 	WalgateHardy2002,Ghosh2001,Horodecki2003}.  A broad theory has since
 emerged, including sharp bounds and separations for LOCC, separable
 (SEP), and positive-partial-transpose (PPT)
 measurements~\cite{Watrous2005,Hayashi2006,Duan2009,Yu2012,
 	Cosentino2013,Chitambar2014,Bandyopadhyay2015,
 	BandyopadhyayWalgate2009}, as well as structural criteria for
 entangled, maximally entangled, and mixed-state
 ensembles~\cite{Fan2004,Ghosh2004,Calsamiglia2010,Hashimoto2021,
 	LiShiWang2022,WangEtAl2025}.  In parallel, increasingly strong forms
 of local and genuine nonlocality have been identified for orthogonal
 product and entangled state
 sets~\cite{Halder2019,Rout2019,Bhattacharya2020,
 	BandyopadhyayHalder2021,RoutEtAl2021,ZhangStrong2019,
 	ZhouPlane2022,Zhou2023,HuGaoYan2024,HuGaoYan2025,
 	LiShiZhang2023,HeShiZhang2024,ZhenFeiZuo2022,LuCaoZuoFei2024,
 	ShiStrongUPB2022,CaoLiZuo2023,XiongLiZhengLi2023,LiZheng2022,
 	MurshidEtAl2026}.  These developments also clarify the role of
 auxiliary resources: shared entanglement can restore otherwise
 inaccessible measurements, while measurement incompatibility can itself
 become operationally useful~\cite{BandyopadhyayHalderNathanson2016,
 	BandyopadhyayHalderNathanson2018,SenHalderSen2024,
 	BandyopadhyayRusso2024}.  The same locality restrictions underlie
 locally accessible information~\cite{Badziag2003} and quantum data
 hiding~\cite{Terhal2001,DiVincenzoLeungTerhal2002,
 	EggelingWerner2002,Matthews2009}.
 
 The multicopy setting introduces a resource beyond the spatial
 measurement class: repeated preparations of the unknown state.
 Bennett \emph{et al.} asked whether perfect local discrimination of
 orthogonal states can require more than two copies~\cite{Bennett1999}.
 For pure states, Walgate \emph{et al.} showed that any two orthogonal
 states are perfectly distinguishable by LOCC~\cite{Walgate2000},
 which yields an $N-1$-copy upper bound for any set of $N$ orthogonal
 pure states by successive elimination.  Bandyopadhyay subsequently
 emphasized the finite-copy distinction between pure and mixed
 ensembles~\cite{Bandyopadhyay2011}, while Banik \emph{et al.} showed
 that even for a two-qubit orthonormal basis three adaptively processed
 copies can be necessary~\cite{Banik2021}.  Thus copy number is already
 a nontrivial resource for perfect local discrimination of orthogonal
 pure states.
 
    \begin{figure}[t]
 	\centering
 	\includegraphics[width=\columnwidth]{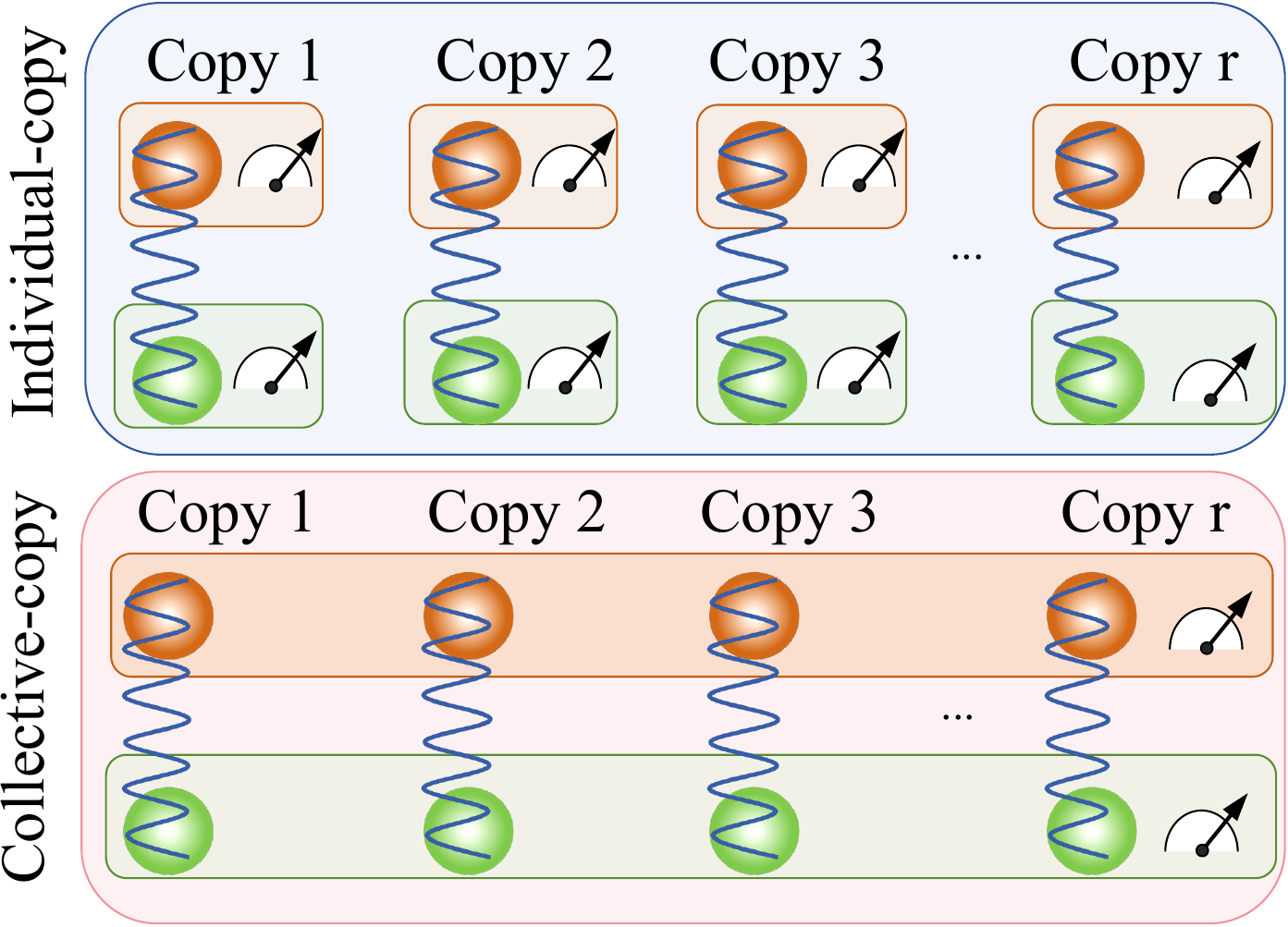}
 	\caption{Individual-copy and collective local measurements in multicopy
 		state discrimination.  In the individual-copy model (top), local
 		quantum operations do not couple distinct copies, although different
 		copies may be coordinated classically.  In the collective model
 		(bottom), laboratory $j$ may jointly process its local registers.   }
 	\label{fig:models}
 \end{figure}

 The same copies, however, need not be used in the same way.  As
 illustrated in Fig.~\ref{fig:models}, they may be processed separately,
 possibly with classical feed-forward, or collectively within each
 laboratory.  The latter enlarges the local quantum system on which a
 laboratory may act, without changing either the number of copies or the
 spatial measurement class. Collective advantages across copies are well established in the
 discrimination of nonorthogonal states
 ~\cite{PeresWootters1991, WalgateScott2008,Higgins2011,Jagannathan2022,
 	Conlon2023,TianEtAl2024,ConlonEtAl2025}.  For distributed orthogonal pure states, however, the role of collective
 local processing in multicopy discrimination has remained unclear.  This
leads to two complementary questions: can collective local measurements
change the \emph{scaling} of copy complexity, and how far can this
advantage extend?  In particular, does there exist a fixed number of
collectively processed copies that suffices for every complete
orthonormal basis?

 In this work, we answer these questions.  For every fixed odd-prime local dimension
 $p$, there exist complete maximal-stabilizer eigenbases whose copy
 complexity under individual-copy SEP measurements has a lower bound of
 order $\sqrt{n/\log n}$ and hence diverges with the number of parties $n$.
 Yet every maximal-stabilizer eigenbasis in odd-prime dimension is
 perfectly distinguished by one-round collective LOCC using at most
 three copies; for $p\equiv1\pmod4$, two copies suffice.  In that regime
 the separation is also information-theoretically maximal: two
 collective copies recover the full basis label, whereas individual-copy
 SEP below its copy-complexity scale reveals only a vanishing fraction
 of it.  Collective local measurements, however, are not universally
 sufficient.  For every fixed local dimension $d\ge2$, we construct
 complete bases within an explicit finite-field phase family whose copy
 complexity remains unbounded even under collective SEP measurements,
 with a $\sqrt n$ lower-bound scale and an exponential strong converse
 below it.  Thus copy number, spatial measurement power, and collective
 local processing across copies are distinct resources in multicopy
 state discrimination.

\paragraph{Measurement models and overlap bounds.---} 

Consider an $n$-partite system
\begin{equation}
 \cH=\bigotimes_{j=1}^{n}\cH_j,\qquad
 d_j=\dim\cH_j,\qquad
 D=\prod_{j=1}^{n}d_j,
 \label{eq:system}
\end{equation}
and a complete orthonormal basis $\cB=\{\ket{\psi_y}\}_{y=1}^{D}$, with $\rho_y=\proj{\psi_y}$. Throughout, pure states are represented by normalized state vectors unless explicitly stated otherwise. We use finite-round LOCC throughout; all LOCC achievability protocols below are in fact one-round. Besides LOCC, we use SEP as a relaxation for proving impossibility.

With several copies, the spatial measurement class does not by itself specify the available resources: one must also specify whether different copies may be processed jointly within each laboratory. In the \emph{individual-copy} architecture, no quantum operation couples distinct copies at a laboratory. In the \emph{collective} architecture, laboratory $j$ may instead act jointly on its entire local $r$-copy register,
\begin{equation}
 \cH^{\otimes r}\simeq\bigotimes_{j=1}^{n}\cH_j^{\otimes r}.
 \label{eq:groupedpartition}
\end{equation}
Here $\simeq$ denotes the canonical identification obtained by regrouping tensor factors according to laboratory.
Combining the copy architecture $\alpha\in\{\ind,\col\}$ with the spatial class $\mathsf M\in\{\LOCC,\SEP\}$ gives four measurement models. Adaptive LOCC~\cite{Banik2021}, in which copies are processed successively with classical feed-forward, is a subclass of individual-copy LOCC. The relevant inclusions are
\begin{equation}
\begin{array}{ccccc}
&&\SEP_{\ind}^{(r)}&\subseteq&\SEP_{\col}^{(r)}\\[-0.2em]
&&\rotatebox[origin=c]{90}{$\subseteq$}&&\rotatebox[origin=c]{90}{$\subseteq$}\\[-0.2em]
\LOCC_{\ad}^{(r)}&\subseteq&\LOCC_{\ind}^{(r)}&\subseteq&\LOCC_{\col}^{(r)} .
\end{array}
\label{eq:model-inclusions}
\end{equation}
Here $\SEP_{\ind}^{(r)}$ denotes the SEP relaxation that preserves the individual-copy structure; its precise definition and the inclusions above are proved in \SMref{app:trace}. In general, $\SEP_{\ind}^{(r)}$ and $\LOCC_{\col}^{(r)}$ are incomparable.

For a uniformly distributed basis label $Y\in\{1,\ldots,D\}$, let $\{M_y\}_{y=1}^D$ denote the $D$-outcome decision positive-operator-valued measure (POVM) used to guess $Y$. Define
\begin{align}
 P_{\alpha}^{\mathsf M}(r;\cB)
 &:=
 \sup_{\{M_y\}\in\mathsf M_{\alpha}^{(r)}}
 \frac1D\sum_y\Tr(M_y\rho_y^{\otimes r}),
 \label{eq:Pdef}\\
 N_{\alpha}^{\mathsf M}(\cB)
 &:=
 \min\{r\ge1:P_{\alpha}^{\mathsf M}(r;\cB)=1\}.
 \label{eq:PNdef}
\end{align}
Thus $N_{\alpha}^{\mathsf M}$ is the copy complexity of perfect discrimination in the corresponding model.

Lower bounds on $N_{\alpha}^{\mathsf M}$ follow from product overlaps. For a pure state $\ket{\psi}$, let
\begin{equation}
 \gamma(\ket{\psi})
 :=
 \max_{\substack{\ket{\phi}=\otimes_j\ket{\phi_j}\\ \langle\phi_j|\phi_j\rangle=1}}
 |\langle\phi|\psi\rangle|^2,
 \qquad
 \gamma(\cB):=\max_y\gamma(\ket{\psi_y}).
 \label{eq:gamma}
\end{equation}
For collective processing the product partition is different, since the $r$ systems held by one laboratory may be entangled with one another. We therefore define
\begin{equation}
 \Gamma_r(\ket{\psi})
 :=
 \max_{\substack{\ket{\Phi}=\otimes_j\ket{\Phi_j}\\ \langle\Phi_j|\Phi_j\rangle=1}}
 |\langle\Phi|\psi^{\otimes r}\rangle|^2,
 \qquad
 \ket{\Phi_j}\in\cH_j^{\otimes r},
 \label{eq:Gamma-state}
\end{equation}
and $\Gamma_r(\cB):=\max_y\Gamma_r(\ket{\psi_y})$.  The maxima in Eqs.~\eqref{eq:gamma} and \eqref{eq:Gamma-state} are attained because the relevant finite products of unit spheres are compact and the overlap functions are continuous.

\begin{proposition}[Trace-overlap bounds]\label{prop:trace-overlap}
For every complete orthonormal basis $\cB$ and every $r\ge1$,
\begin{equation}
 P_{\ind}^{\SEP}(r;\cB)\le D^{r-1}\gamma(\cB)^r,
 \label{eq:copytrace-main}
\end{equation}
and
\begin{equation}
 P_{\col}^{\SEP}(r;\cB)\le D^{r-1}\Gamma_r(\cB).
 \label{eq:collective-trace-main}
\end{equation}
Consequently, $D^{r-1}\gamma(\cB)^r<1$ implies $N_{\ind}^{\SEP}(\cB)>r$, whereas $D^{r-1}\Gamma_r(\cB)<1$ implies $N_{\col}^{\SEP}(\cB)>r$.
\end{proposition}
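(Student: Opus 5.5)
The plan is the standard trace bound for separable measurements: decompose each POVM element into product rank-one terms, bound each term's overlap with the target state by the product overlap, and sum. For the collective model, take a separable decision POVM $\{M_y\}$ on $\bigotimes_j \cH_j^{\otimes r}$, so that each element is a sum of product operators. Writing $M_y=\sum_k c_{y,k}\proj{\Phi_{y,k}}$ with $c_{y,k}\ge0$ and $\ket{\Phi_{y,k}}=\otimes_j\ket{\Phi_{y,k,j}}$ normalized, $\ket{\Phi_{y,k,j}}\in\cH_j^{\otimes r}$, we obtain
\begin{equation*}
\Tr(M_y\rho_y^{\otimes r})=\sum_k c_{y,k}\,|\langle\Phi_{y,k}|\psi_y^{\otimes r}\rangle|^2\le \Gamma_r(\cB)\sum_k c_{y,k}=\Gamma_r(\cB)\Tr M_y .
\end{equation*}
Summing over $y$ and using $\sum_y M_y=I$ on a space of dimension $D^r$ gives
\begin{equation*}
\frac1D\sum_y\Tr(M_y\rho_y^{\otimes r})\le \frac{D^r}{D}\,\Gamma_r(\cB)=D^{r-1}\Gamma_r(\cB).
\end{equation*}
Taking the supremum over the model yields Eq.~\eqref{eq:collective-trace-main}.

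For the individual-copy bound, the key input is the definition of $\SEP_{\ind}^{(r)}$ from \SMref{app:trace}. I expect it means that the POVM elements are separable with respect to the finer partition into $nr$ factors $\cH_j^{(c)}$, one for each laboratory $j$ and copy $c$. This is what "no quantum operation couples distinct copies" should amount to after relaxation. Granting that, each $M_y$ is a positive combination of rank-one products $\otimes_{c=1}^r\otimes_j\ket{\phi_{j}^{(c)}}$. The overlap then factorizes over copies:
\begin{equation*}
\Bigl|\prod_{c=1}^r\langle\otimes_j\phi_j^{(c)}|\psi_y\rangle\Bigr|^2\le\gamma(\ket{\psi_y})^r\le\gamma(\cB)^r .
\end{equation*}
The same summation argument gives $D^{r-1}\gamma(\cB)^r$, which is Eq.~\eqref{eq:copytrace-main}.

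The main obstacle is pinning down $\SEP_{\ind}^{(r)}$ so that it contains $\LOCC_{\ind}^{(r)}$, as claimed in Eq.~\eqref{eq:model-inclusions}, and still has elements separable across all $nr$ factors. The inclusion is obtained by expanding an individual-copy LOCC protocol along its branches. Every local Kraus operator acts on a single copy register $\cH_j^{(c)}$, and classical feed-forward only determines which such operator is applied next. Each final POVM element is therefore a sum over branches of products of single-register positive operators. Any positive operator on a single register is a positive combination of rank-one projectors, so these elements are fully product-separable, as required.

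The consequences are then immediate. If $D^{r-1}\gamma(\cB)^r<1$, then $P_{\ind}^{\SEP}(r;\cB)<1$, so perfect discrimination fails at $r$ copies. It also fails at every $r'<r$, since a model with more copies can simply ignore the extra ones. Hence $N_{\ind}^{\SEP}(\cB)>r$, and the collective statement follows in the same way.
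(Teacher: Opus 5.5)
Your proposal is correct and is essentially the paper's argument: decompose each separable effect into normalized rank-one product terms, bound each overlap by $\Gamma_r(\cB)$ (or by $\gamma(\cB)^r$ after the copy-wise factorization), sum the traces to $D^r$, and use discarding of extra copies to get monotonicity in $r$. Your guessed definition of $\SEP_{\ind}^{(r)}$, namely separability across all $nr$ laboratory--copy registers, coincides with the paper's cone generated by $A_1\otimes\cdots\otimes A_r$ with each $A_a\in\SEP_+(\cH)$, and your branch-regrouping argument for the LOCC inclusion is the same as the paper's.
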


Proposition~\ref{prop:trace-overlap} converts optimization over separable measurements into geometric estimates of the intrinsic overlap parameters $\gamma(\cB)$ and $\Gamma_r(\cB)$.  The first bound accumulates the one-copy suppression as $\gamma(\cB)^r$, whereas the second is controlled by the genuinely grouped $r$-copy quantity $\Gamma_r(\cB)$; this distinction is the analytic bridge between the two copy architectures studied below.

The proof is given in \SMref{app:trace}. For example, if $D=d^n$ and $\gamma(\cB)=d^{-n+\Delta}$ with $\Delta>0$, then
\begin{equation}
 P_{\ind}^{\SEP}(r;\cB)\le d^{-n+r\Delta},\qquad
 N_{\ind}^{\SEP}(\cB)\ge\left\lceil\frac n\Delta\right\rceil .
 \label{eq:delta-to-N}
\end{equation}
Since $\gamma(\ket{\psi})\ge D^{-1}$ for every normalized $\ket{\psi}$, states with product overlap close to $D^{-1}$ are natural candidates for large individual-copy complexity.

Both overlap parameters are invariant under local unitaries. If a complete basis is a local-unitary orbit,
\begin{equation}
 \ket{\psi_y}=
 \left(\bigotimes_{j=1}^{n}U_{y,j}\right)\ket{\psi_0},
 \label{eq:LU-orbit-basis}
\end{equation}
then
\begin{equation}
 \gamma(\cB)=\gamma(\ket{\psi_0}),\qquad
 \Gamma_r(\cB)=\Gamma_r(\ket{\psi_0}).
 \label{eq:LU-overlap-invariance}
\end{equation}
Thus the relevant basis overlap is determined by a single representative state whenever its local-unitary orbit forms a complete orthonormal basis.

An especially transparent six-qubit benchmark is provided by absolutely maximally entangled (AME) states.  A pure $N$-party state of local dimension $d$ is $\operatorname{AME}(N,d)$ if every reduced state on at most $\lfloor N/2\rfloor$ parties is maximally mixed~\cite{Helwig2012}.  Let $\ket{\Psi}$ be any $\operatorname{AME}(6,2)$ state, choose any three-party subset $A$, and denote its complement by $\bar A$.  Write $\rho_A:=\Tr_{\bar A}\proj{\Psi}=I_8/8$, and let $\mathcal P_A$ denote the $64$-element tensor-product Pauli error basis on $A$.  The local-Pauli translates
\begin{equation}
 \cB_A(\Psi):=
 \left\{(P_A\otimes I_{\bar A})\ket{\Psi}:P_A\in\mathcal P_A\right\}
 \label{eq:AME-orbit-basis}
\end{equation}
form a complete orthonormal basis, because
\begin{equation}
 \langle\Psi|(P_A^\dagger Q_A\otimes I)|\Psi\rangle
 =\frac18\Tr(P_A^\dagger Q_A)=\delta_{P_A,Q_A}.
 \label{eq:AME-orbit-orthogonality}
\end{equation}
\SMref{app:ame} further proves the strict bound
\begin{equation}
 \gamma(\cB_A(\Psi))=\gamma(\ket{\Psi})<\frac18.
 \label{eq:AME-gamma}
\end{equation}
Hence Proposition~\ref{prop:trace-overlap} gives
\begin{equation}
 P_{\ind}^{\SEP}(2;\cB_A(\Psi))<1,\qquad
 N_{\ind}^{\SEP}(\cB_A(\Psi))\ge3.
 \label{eq:AME-N}
\end{equation}
Thus every $\operatorname{AME}(6,2)$ state generates a complete local-Pauli orbit basis that cannot be perfectly distinguished by two individual-copy SEP measurements; no special graph-state representative or flat computational-basis form is required.

To obtain scalable lower bounds, one must control the product overlap uniformly over all normalized product states. The following lemma converts a pointwise ensemble moment estimate into such a uniform bound.

\begin{lemma}[Moment-to-overlap uniformization]\label{lem:moment-to-overlap}
Let $\Theta$ be finite and let $\{\ket{\Psi_\theta}:\theta\in\Theta\}$ be normalized states on
\begin{equation}
 \cK=\bigotimes_{j=1}^{n}\C^{q_j}.
\end{equation}
Suppose that for some integer $m\ge1$ and $A_m\ge0$,
\begin{equation}
 \frac1{|\Theta|}\sum_{\theta\in\Theta}
 |\langle\Phi|\Psi_\theta\rangle|^{2m}\le A_m
 \label{eq:S-pointwise-family-moment}
\end{equation}
for every normalized product state $\ket{\Phi}=\otimes_j\ket{\phi_j}$ with $\langle\phi_j|\phi_j\rangle=1$ for all $j$. Then at least one $\theta_*$ satisfies
\begin{equation}
 \max_{\ket{\Phi}\ \mathrm{product}}
 |\langle\Phi|\Psi_{\theta_*}\rangle|^2
 \le
 \left[A_m\prod_{j=1}^{n}\binom{q_j+m-1}{m}\right]^{1/m}.
 \label{eq:S-moment-to-overlap}
\end{equation}
\end{lemma}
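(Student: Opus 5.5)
The approach is to bound the worst-case product overlap of a fixed state by a Haar average over product states, using symmetric-subspace projectors, and then to average over $\theta$. Everything goes through the $m$-fold tensor power. For a normalized product state $\ket{\Phi}=\otimes_j\ket{\phi_j}$ we have
\begin{equation*}
|\langle\Phi|\Psi_\theta\rangle|^{2m}=\langle\Phi^{\otimes m}|\,(\proj{\Psi_\theta})^{\otimes m}\,|\Phi^{\otimes m}\rangle .
\end{equation*}
After regrouping $\cK^{\otimes m}\simeq\bigotimes_j(\C^{q_j})^{\otimes m}$, as in Eq.~\eqref{eq:groupedpartition}, the vector $\ket{\Phi^{\otimes m}}$ becomes $\otimes_j\ket{\phi_j}^{\otimes m}$. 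It therefore lies in the range of the projector $\Pi:=\bigotimes_j\Pi^{(m)}_{\Sym,j}$, where $\Pi^{(m)}_{\Sym,j}$ projects onto $\Sym^m(\C^{q_j})$. The dimension of $\Sym^m(\C^{q_j})$ is $\binom{q_j+m-1}{m}$.

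\textbf{Step 1 (pointwise maximum bounded by a trace).} Fix $\theta$ and set $X_\theta:=\Pi(\proj{\Psi_\theta})^{\otimes m}\Pi\ge0$. Since $\Pi\ket{\Phi^{\otimes m}}=\ket{\Phi^{\otimes m}}$ and $X_\theta$ is positive semidefinite,
\begin{equation*}
\max_{\Phi\ \mathrm{product}}|\langle\Phi|\Psi_\theta\rangle|^{2m}\le\|X_\theta\|\le\Tr X_\theta=\Tr\!\left[\Pi(\proj{\Psi_\theta})^{\otimes m}\right].
\end{equation*}

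\textbf{Step 2 (trace as a Haar average).} We use the standard identity
\begin{equation*}
\Pi^{(m)}_{\Sym,j}=\binom{q_j+m-1}{m}\int(\proj{\phi_j})^{\otimes m}\,d\phi_j ,
\end{equation*}
with $d\phi_j$ the unitarily invariant probability measure on unit vectors of $\C^{q_j}$. This holds by Schur's lemma: the integral commutes with $U^{\otimes m}$, is supported on the irreducible space $\Sym^m$, and has trace $1$. Taking the tensor product over $j$ and undoing the regrouping gives
\begin{equation*}
\Tr\!\left[\Pi(\proj{\Psi_\theta})^{\otimes m}\right]=\prod_j\binom{q_j+m-1}{m}\int|\langle\Phi|\Psi_\theta\rangle|^{2m}\,d\Phi ,
\end{equation*}
where $d\Phi$ is the product Haar measure on product states.

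\textbf{Step 3 (averaging over $\theta$).} Average the bound from Steps 1 and 2 over $\theta\in\Theta$ and exchange the finite sum with the integral. The hypothesis \eqref{eq:S-pointwise-family-moment} holds for every product $\Phi$, so it holds under the integral, and the average of the right-hand side is at most $A_m\prod_j\binom{q_j+m-1}{m}$. A minimum never exceeds an average, so some $\theta_*$ attains a value at most this bound. Taking the $1/m$-th power gives \eqref{eq:S-moment-to-overlap}.

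The only delicate point is bookkeeping in Steps 1 and 2. One must check that the regrouping isomorphism carries $\ket{\Phi^{\otimes m}}$ to a vector in the range of $\Pi$. One must also check that it carries $\bigotimes_j\int(\proj{\phi_j})^{\otimes m}$ to the integral of $(\proj{\Phi})^{\otimes m}$ over product $\Phi$, so that the trace identity is legitimate. Given these checks, the argument is short.
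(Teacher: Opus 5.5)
Your proposal is correct and follows essentially the same route as the paper. You insert the tensor product $\Pi$ of local symmetric projectors and bound every product overlap by $\|\Pi\ket{\Psi_\theta}^{\otimes m}\|^2$; your inequality $\|X_\theta\|\le\Tr X_\theta$ for the rank-one operator $X_\theta$ is exactly the paper's Cauchy--Schwarz step. You then identify this quantity with $\prod_j\binom{q_j+m-1}{m}$ times the product-Haar moment via the Schur's-lemma identity and average over $\theta$. The only difference is cosmetic: the paper first picks $\theta_*$ with small Haar moment and then bounds its maximum, whereas you average the pointwise-maximum bound over $\theta$ directly, which is equivalent.
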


The lemma turns a high-moment estimate into the overlap quantities that control copy complexity. With $q_j=d_j$ it bounds $\gamma(\ket{\Psi_{\theta_*}})$, while after grouping $r$ copies and taking $q_j=d_j^r$ it bounds $\Gamma_r(\ket{\Psi_{\theta_*}})$. Combined with Proposition~\ref{prop:trace-overlap}, a suitable estimate of $A_m$ therefore yields lower bounds on either individual-copy or collective copy complexity. The proof is given in \SMref{app:uniformization}.

\paragraph{Stabilizer bases: unbounded individual-copy versus constant collective copy complexity.---} Let $p$ be an odd prime.  We use the standard $p$-ary Weyl--Pauli stabilizer formalism, reviewed explicitly in \SMref{app:stabmoment}.  A \emph{stabilizer state} is a normalized simultaneous eigenstate of a maximal Abelian subgroup of the generalized Pauli group, and a \emph{maximal-stabilizer eigenbasis} is the complete joint eigenbasis of such a subgroup.  Let $\Stab_{n,p}$ denote the finite set of stabilizer states on $(\C^p)^{\otimes n}$.  Their exact finite moments~\cite{Gross2006,KuengGross2015,Webb2016,Zhu2017,GrossNezamiWalter2021,Bittel2026} imply that, for every normalized product state $\ket{\beta}$ and $n\ge m-1$,
\begin{equation}
 \frac1{|\Stab_{n,p}|}
 \sum_{\ket{S}\in\Stab_{n,p}}
 |\langle\beta|S\rangle|^{2m}
 \le
 2^{m-1}p^{-nm+\binom{m-1}{2}}.
 \label{eq:stabilizer-Am-main}
\end{equation}
\SMref{app:stabmoment} derives Eq.~\eqref{eq:stabilizer-Am-main}, including the tensor-power matrix-element estimate needed to pass from the exact stabilizer moment identity to the product-state moment bound.

Lemma~\ref{lem:moment-to-overlap} then gives a stabilizer state $\ket{S_*}$ satisfying
\begin{equation}
 \gamma(\ket{S_*})
 \le
 2^{1-1/m}
 p^{-n+(m-1)(m-2)/(2m)}
 \binom{p+m-1}{m}^{n/m}.
 \label{eq:stab-gamma-main}
\end{equation}
Optimizing $m$ yields the first main result.

\begin{theorem}[Unbounded individual-copy complexity]\label{thm:individual-unbounded}
For every fixed odd prime $p$, there exists a sequence of complete maximal-stabilizer eigenbases $\{\cB_n\}$ of $(\C^p)^{\otimes n}$ such that
\begin{equation}
 N_{\ind}^{\SEP}(\cB_n)
 \ge
 \left(\frac1{\sqrt{p-1}}-o(1)\right)
 \sqrt{\frac{n}{\log_p n}}
 \longrightarrow\infty.
 \label{eq:thm1-N}
\end{equation}
Unless another limit is stated explicitly, all asymptotic statements refer to $n\to\infty$ with the physical local dimensions fixed.
Moreover, if $r=o(\sqrt{n/\log_p n})$, then
\begin{equation}
 P_{\ind}^{\SEP}(r;\cB_n)\le D^{-1+o(1)}.
 \label{eq:thm1-strong}
\end{equation}
\end{theorem}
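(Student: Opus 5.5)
Fix an odd prime $p$ and $n$, and choose an integer $m=m(n)\ge 1$ with $n\ge m-1$. By Eq.~\eqref{eq:stab-gamma-main} (Lemma~\ref{lem:moment-to-overlap} applied to Eq.~\eqref{eq:stabilizer-Am-main} with $q_j=p$), there is a stabilizer state $\ket{S_*}$ whose product overlap obeys that bound. Let $G$ be the maximal Abelian subgroup stabilizing $\ket{S_*}$, and let $\cB_n$ be the joint eigenbasis of $G$. Every element of $\cB_n$ has the form $W\ket{S_*}$ for a Weyl--Pauli operator $W$. Such a $W$ is a tensor product of single-qudit unitaries, so $\cB_n$ is a local-unitary orbit and Eq.~\eqref{eq:LU-overlap-invariance} gives $\gamma(\cB_n)=\gamma(\ket{S_*})$. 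Writing $\gamma(\cB_n)=p^{-n+\Delta}$, the bound \eqref{eq:stab-gamma-main} becomes
\begin{equation*}
\Delta\le \frac{(m-1)(m-2)}{2m}+\frac{n}{m}\log_p\binom{p+m-1}{m}+O(1).
\end{equation*}
Since $\binom{p+m-1}{m}=\binom{p+m-1}{p-1}\le (m+p-1)^{p-1}/(p-1)!$, the second term is at most $\frac{n(p-1)}{m}\log_p m\,(1+o(1))$. The first term is at most $m/2$.

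The next step is to optimize over $m$. Balancing $m/2$ against $n(p-1)\log_p m/m$ suggests $m\approx\sqrt{2(p-1)n\log_p m}$, so $\log_p m\sim\tfrac12\log_p n$. Thus I would take $m=\lceil\sqrt{(p-1)n\log_p n}\,\rceil$, which satisfies $m\le n+1$ for large $n$. With this choice $\Delta\le(1+o(1))\sqrt{(p-1)n\log_p n}$, where the $O(1)$ and $(m-1)(m-2)/(2m)<m/2$ corrections are absorbed into the $o(1)$. Equation~\eqref{eq:delta-to-N} (Proposition~\ref{prop:trace-overlap}) then gives $N_{\ind}^{\SEP}(\cB_n)\ge\lceil n/\Delta\rceil$. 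Since
\begin{equation*}
\frac{n}{\sqrt{(p-1)n\log_p n}}=\frac{1}{\sqrt{p-1}}\sqrt{\frac{n}{\log_p n}},
\end{equation*}
this is the claimed constant. I would check the constant carefully, because a slightly different split of the $\log_p m$ factor changes it. If the balance above only yields $1/\sqrt{2(p-1)}$, I would redo the optimization exactly: minimize $f(m)=m/2+n(p-1)\log_p m/m$ and check whether the minimum value is $\sqrt{(p-1)n\log_p n}\,(1+o(1))$. Here $\log_p m=\tfrac12\log_p n\,(1+o(1))$, so $f(m)=m/2+\tfrac{n(p-1)\log_p n}{2m}$ is minimized at $m=\sqrt{(p-1)n\log_p n}$ with value exactly $\sqrt{(p-1)n\log_p n}$, which is consistent with the claimed constant.

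For the strong converse, Proposition~\ref{prop:trace-overlap} gives $P_{\ind}^{\SEP}(r;\cB_n)\le D^{r-1}\gamma^r=p^{-n+r\Delta}$. If $r=o(\sqrt{n/\log_p n})$, then $r\Delta=o(n)$, so the bound is $p^{-n(1-o(1))}=D^{-1+o(1)}$.

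I expect the main obstacle to be keeping the asymptotic bookkeeping uniform: the binomial bound has to hold with $m$ growing while $p$ is fixed, the requirement $n\ge m-1$ must be verified, and the $2^{1-1/m}$ prefactor and ceiling effects must stay inside the $o(1)$ terms. All of this is routine once $m$ is fixed as above.
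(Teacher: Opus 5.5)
Your proposal is correct and follows the paper's proof essentially step for step. Both arguments combine the stabilizer moment bound with Lemma~\ref{lem:moment-to-overlap} at $q_j=p$, pass to the joint eigenbasis as a local-Weyl orbit, take $m_n=\lceil\sqrt{(p-1)n\log_p n}\rceil$ to get $\Delta\le(\sqrt{p-1}+o(1))\sqrt{n\log_p n}$, and apply Proposition~\ref{prop:trace-overlap} for both the copy bound and the strong converse. The differences are cosmetic: you use the explicit bound $\binom{p+m-1}{p-1}\le(m+p-1)^{p-1}/(p-1)!$ and substitute the optimal $m$ directly, whereas the paper optimizes the ansatz $m=A\sqrt{n\log_p n}$ over $A$ and also dispatches the degenerate case $\Delta=0$ in one line, a case your appeal to Eq.~\eqref{eq:delta-to-N} tacitly excludes.
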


The second statement shows that below the threshold the identification probability remains within a subexponential factor of random guessing. The stabilizer-moment estimate and its asymptotic optimization are proved in \SMref{app:stabmoment}.

Collective processing changes this behavior completely. Although the stabilizer generators commute globally, their local Weyl factors need not commute within a single copy, so their eigenvalue information cannot in general be read out simultaneously by local measurements. Access to several local repetitions changes this compatibility structure: suitable collective observables can be made mutually commuting at every laboratory while retaining enough global eigenvalue information to reconstruct the stabilizer label. This gives the following uniform bound.

\begin{theorem}[Collective stabilizer readout]\label{thm:stabilizer-collective}
Let $\cB_{\mathrm{stab}}$ be any maximal-stabilizer eigenbasis in odd-prime local dimension $p$. Then $\cB_{\mathrm{stab}}$ is perfectly distinguishable by one-round collective LOCC using at most
\begin{equation}
 \tau_p=
 \begin{cases}
 2,&p\equiv1\pmod4,\\
 3,&p\equiv3\pmod4
 \end{cases}
 \label{eq:taup}
\end{equation}
copies. Hence
\begin{equation}
 N_{\col}^{\SEP}(\cB_{\mathrm{stab}})
 \le
 N_{\col}^{\LOCC}(\cB_{\mathrm{stab}})
 \le\tau_p\le3.
\end{equation}
\end{theorem}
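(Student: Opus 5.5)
The plan is to reduce perfect discrimination to reading out the stabilizer eigenvalues of the generators, and to do this with locally commuting collective observables built from \emph{weighted} tensor powers of Weyl operators. We use the symmetric Weyl convention $W(a,b)=\omega^{-2^{-1}ab}X^aZ^b$ for odd $p$, with $\omega=e^{2\pi i/p}$. It satisfies
\begin{equation}
 W(v)W(w)=\omega^{2^{-1}[v,w]}W(v+w),\qquad W(cv)=W(v)^c\ (c\in\F_p),
\end{equation}
where $[\cdot,\cdot]$ is the symplectic form. It also factorizes as $W(v)=\bigotimes_j W(v_j)$ with $v=(v_1,\dots,v_n)$, $v_j\in\F_p^2$.

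A maximal-stabilizer eigenbasis is the joint eigenbasis of $\{\chi(v)W(v):v\in L\}$. Here $L\subset\F_p^{2n}$ is a Lagrangian subspace, and $\chi$ is a fixed phase making $v\mapsto\chi(v)W(v)$ a representation. The basis label is a character $s\in\F_p^{L}$, linear on $L$, with
\begin{equation}
 \chi(v)W(v)\ket{\psi_s}=\omega^{s(v)}\ket{\psi_s}.
\end{equation}
So it suffices to recover $s$ on a basis $v^{(1)},\dots,v^{(n)}$ of $L$.

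\emph{Step 1: arithmetic.} We look for a vector $c=(c_1,\dots,c_r)\in\F_p^r$ with
\begin{equation}
 \sum_i c_i^2=0,\qquad \sum_i c_i\neq 0 .
\end{equation}
For $p\equiv1\pmod4$, take $r=2$ and $c=(1,\iota)$ with $\iota^2=-1$. Then $1+\iota\neq0$, since otherwise $\iota=-1$ and $\iota^2=1\neq-1$.

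For $p\equiv3\pmod4$, $-1$ is a nonsquare, so no $r=2$ solution with nonzero entries exists. Take $r=3$ and $c=(1,a,b)$ with $a^2+b^2=-1$. This conic has $p+1\ge4$ affine points. The line $a+b=-1$ meets it in at most two points, so some solution has $1+a+b\neq0$.

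\emph{Step 2: local commuting observables.} For each $v\in L$ and site $j$, define the operator on $\cH_j^{\otimes r}$
\begin{equation}
 O_j(v):=\bigotimes_{i=1}^r W(c_iv_j).
\end{equation}
For $v,w\in L$, the commutation phase between $O_j(v)$ and $O_j(w)$ is
\begin{equation}
 \omega^{\sum_i c_i^2[v_j,w_j]}=1 .
\end{equation}
Hence, at every laboratory, all $O_j(v)$, $v\in L$, commute, even though the single-copy factors $W(v_j)$ and $W(w_j)$ need not commute. Each lab $j$ therefore measures, in one round and with no communication, a joint eigenbasis of the commutative algebra generated by $\{O_j(v^{(k)})\}_k$. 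Its outcomes determine the eigenvalues $\lambda_j(v^{(k)})$ of $O_j(v^{(k)})$ for every $k$.

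\emph{Step 3: decoding.} The product $\bigotimes_j O_j(v)$ equals $\bigotimes_i W(c_iv)$ acting on $\cH^{\otimes r}$, regrouped as in Eq.~\eqref{eq:groupedpartition}. On $\ket{\psi_s}^{\otimes r}$ it acts as the scalar
\begin{equation}
 \prod_i \bigl(\chi(v)^{-1}\omega^{s(v)}\bigr)^{c_i}
 =\chi(v)^{-\sum_i c_i}\,\omega^{(\sum_i c_i)s(v)} .
\end{equation}
Here we use $W(c_iv)=W(v)^{c_i}$. Since $\ket{\psi_s}^{\otimes r}$ is a joint eigenvector of the global products, the product of the local outcomes $\prod_j\lambda_j(v)$ equals this scalar with certainty. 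The referee divides out the known phase $\chi(v)^{-\sum c_i}$ and reads off $(\sum_ic_i)\,s(v)$. Because $\sum_ic_i$ is invertible in $\F_p$, this yields $s(v^{(k)})$ for all $k$, hence the label $s$, with probability one.

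This is a one-round collective LOCC protocol using $\tau_p$ copies. The inclusions $\LOCC_{\col}\subseteq\SEP_{\col}$ from Eq.~\eqref{eq:model-inclusions} then give the chain of inequalities stated in the theorem.

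\emph{Main obstacle.} The delicate step is Step 3's phase bookkeeping. We must check that $W(cv)=W(v)^c$ holds exactly in the symmetric convention; this uses $p$ odd and the invertibility of $2$. We must also check that the phase relating $\bigotimes_jW(c_iv_j)$ to $W(c_iv)$ is trivial under the chosen factorization, so that no sign ambiguity survives into the product of local outcomes. A further point is that the local outcomes on the $r$-copy register are well defined even though $\ket{\psi_s}^{\otimes r}$ is not a local eigenstate. Only the product of the $\lambda_j$ is deterministic, and that is all the decoding uses. If the symmetric convention causes trouble, the fallback is to work with the explicit phase-corrected generators $\chi(v^{(k)})W(v^{(k)})$ and track their powers directly.
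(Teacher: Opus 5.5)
Your proposal is correct and follows essentially the paper's proof: the same scalar-weight criterion $\sum_a c_a^2=0\neq\sum_a c_a$, locally commuting weighted tensor products of local Weyl factors, a single broadcast, and decoding by dividing the summed local exponents by $c_*=\sum_a c_a$. The differences are minor. The paper avoids your phase-bookkeeping concern by powering the phase-included local factors $S_\ell^{[j]}$ of fixed generators directly, so that $\bigotimes_j R_{j\ell}=\bigotimes_a S_\ell^{c_a}$ by regrouping; for $p\equiv3\pmod4$ it gets $a^2+b^2=-1$ by a pigeonhole on squares plus a sign flip rather than your conic point count. One small slip: with $X\ket q=\ket{q+1}$ and $Z\ket q=\omega^q\ket q$, the symmetric convention is $\omega^{+2^{-1}ab}X^aZ^b$, not $\omega^{-2^{-1}ab}X^aZ^b$, but this only changes a known, label-independent phase in Step~3 and does not affect decoding.
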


The protocol requires only one round of classical communication and is independent of the number of laboratories. In particular, two copies suffice in local dimension five. The collective observables, syndrome reconstruction, and the finite-field choice of two or three scalar weights are given in \SMref{app:stabilizerreadout}.

Theorems~\ref{thm:individual-unbounded} and~\ref{thm:stabilizer-collective} give an unbounded separation within the same maximal-stabilizer family: the individual-copy requirement diverges, whereas collective LOCC needs at most three copies. For $p=5$, two collective copies suffice. The separation also appears directly in the recoverable classical information. Let $Y$ be the uniformly distributed basis label. For an allowed measurement $\mathcal M\in\mathsf M_{\alpha}^{(r)}$ with classical outcome $Z_{\mathcal M}$, let $I(Y{:}Z_{\mathcal M})$ denote their base-$2$ Shannon mutual information, and define
\begin{equation}
 I_{\alpha}^{\mathsf M}(r;\cB):=\sup_{\mathcal M\in\mathsf M_{\alpha}^{(r)}} I(Y:Z_{\mathcal M}).
 \label{eq:Idef}
\end{equation}
Then
\begin{equation}
 I_{\alpha}^{\mathsf M}(r;\cB)\le\log_2D+\log_2P_{\alpha}^{\mathsf M}(r;\cB),
 \label{eq:info-vs-P-main}
\end{equation}
as proved in \SMref{app:information}. Therefore:

\begin{corollary}[Asymptotically maximal information gap]\label{cor:information-separation}
For every fixed odd prime $p\equiv1\pmod4$, the bases in Theorem~\ref{thm:individual-unbounded} can be chosen so that
\begin{equation}
 I_{\col}^{\LOCC}(2;\cB_n)=\log_2D,\qquad
 I_{\ind}^{\SEP}(r;\cB_n)=o(\log_2D)
 \label{eq:info-gap-compact}
\end{equation}
for every $r=o(\sqrt{n/\log_p n})$. In particular,
\begin{equation}
 I_{\col}^{\LOCC}(2;\cB_n)-I_{\ind}^{\SEP}(2;\cB_n)
 =(1-o(1))\log_2D.
 \label{eq:info-gap}
\end{equation}
\end{corollary}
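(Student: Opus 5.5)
The corollary follows by combining Theorem~\ref{thm:individual-unbounded}, Theorem~\ref{thm:stabilizer-collective}, and the information bound \eqref{eq:info-vs-P-main}. Fix an odd prime $p\equiv1\pmod4$ and take the maximal-stabilizer eigenbases $\cB_n$ of Theorem~\ref{thm:individual-unbounded}. Since they are maximal-stabilizer eigenbases, Theorem~\ref{thm:stabilizer-collective} applies with $\tau_p=2$. So there is a one-round collective LOCC measurement on two copies whose outcome $Z$ determines $Y$ exactly. For uniform $Y$ this gives $I(Y{:}Z)=H(Y)=\log_2D$. Since $I(Y{:}Z)\le H(Y)=\log_2 D$ for any measurement, the supremum in \eqref{eq:Idef} equals $\log_2D$. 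This settles the first identity in \eqref{eq:info-gap-compact}.

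For the second identity, fix $r=o(\sqrt{n/\log_p n})$. The strong converse \eqref{eq:thm1-strong} gives $P_{\ind}^{\SEP}(r;\cB_n)\le D^{-1+o(1)}$, so $\log_2P_{\ind}^{\SEP}(r;\cB_n)\le(-1+o(1))\log_2D$. Inserting this into \eqref{eq:info-vs-P-main} yields
\begin{equation*}
0\le I_{\ind}^{\SEP}(r;\cB_n)\le \log_2D+(-1+o(1))\log_2D=o(\log_2D).
\end{equation*}
The lower bound $0$ holds because mutual information is nonnegative.

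The gap statement \eqref{eq:info-gap} is the special case $r=2$. This case is allowed because $2=o(\sqrt{n/\log_pn})$, as $\sqrt{n/\log_p n}\to\infty$. Subtracting the two estimates gives $(1-o(1))\log_2D$.

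The only point that needs care is the quantifier structure of ``$r=o(\cdot)$''. The $o(1)$ in \eqref{eq:thm1-strong} is to be read along a sequence $r=r(n)$, and for a fixed constant $r$ the hypothesis holds trivially. Beyond that, one must note that the bases must be the same sequence in both theorems. This is automatic, because Theorem~\ref{thm:stabilizer-collective} holds for every maximal-stabilizer eigenbasis. Hence no further choice is needed, and the phrase ``can be chosen'' reduces to restricting to $p\equiv1\pmod4$ so that $\tau_p=2$.
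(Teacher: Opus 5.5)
Your proposal is correct and follows the paper's own argument. The two-copy collective-LOCC decoder of Theorem~\ref{thm:stabilizer-collective} (with $\tau_p=2$) gives the full $\log_2D$. The strong converse \eqref{eq:thm1-strong}, inserted into the information bound \eqref{eq:info-vs-P-main}, gives $o(\log_2D)$ for subthreshold $r$, and the case $r=2$ yields the gap. Your extra remarks make explicit what the paper leaves implicit: the upper bound $I\le H(Y)$, nonnegativity of mutual information, and that any sequence from Theorem~\ref{thm:individual-unbounded} works because the decoder applies to every maximal-stabilizer eigenbasis.
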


Thefore,  the same two preparations change the readable information from $o(n)$ bits to the full $n\log_2 p$ bits, without quantum communication between laboratories.

\paragraph{Unbounded collective complexity.---}
The stabilizer decoder motivates the converse problem of whether a fixed collective block can suffice for every complete basis. We show that it cannot by using flat-amplitude states whose orbit orthogonality is automatic, leaving the phases available to suppress collective product overlaps.

For fixed $d\ge2$, write $\mathbb Z_d:=\mathbb Z/d\mathbb Z$ and set $D=d^n$.  For a global computational string $\mathbf{x}=(x_1,\ldots,x_n)\in\mathbb Z_d^n$, write $\ket{\mathbf{x}}:=\bigotimes_{j=1}^n\ket{x_j}$.  Given a phase function $f:\mathbb Z_d^n\to\mathbb R$, consider
\begin{equation}
\ket{\Omega_f}=D^{-1/2}\sum_{\mathbf{x}\in\mathbb Z_d^n}e^{if(\mathbf{x})}\ket{\mathbf{x}}.
\label{eq:flat-general}
\end{equation}
Let $\zeta=e^{2\pi i/d}$ and define the one-site phase operator $Z_d\ket a=\zeta^a\ket a$ for $a\in\mathbb Z_d$.  For $\mathbf{y}=(y_1,\ldots,y_n)\in\mathbb Z_d^n$, set $Z_d^{\mathbf{y}}:=\bigotimes_{j=1}^n Z_d^{y_j}$ and $\mathbf{y}\!\cdot\!\mathbf{x}:=\sum_{j=1}^ny_jx_j$ modulo $d$. Then $Z_d^{\mathbf{y}}\ket{\mathbf{x}}=\zeta^{\mathbf{y}\cdot\mathbf{x}}\ket{\mathbf{x}}$, and for any $\mathbf{y},\mathbf{z}\in\mathbb Z_d^n$,
\begin{equation}
\langle\Omega_f|Z_d^{\mathbf{z}-\mathbf{y}}|\Omega_f\rangle=D^{-1}\sum_{\mathbf{x}\in\mathbb Z_d^n}\zeta^{(\mathbf{z}-\mathbf{y})\cdot\mathbf{x}}=\delta_{\mathbf{y},\mathbf{z}}.
\label{eq:flat-orthogonality}
\end{equation}
where $\delta_{\mathbf y,\mathbf z}$ is the Kronecker delta. Therefore the local-unitary orbit
\begin{equation}
\cA_f=\{Z_d^{\mathbf{y}}\ket{\Omega_f}:\mathbf{y}\in\mathbb Z_d^n\}
\label{eq:flat-orbit}
\end{equation}
is a complete orthonormal basis. Flat amplitudes therefore decouple two tasks: the orbit guarantees completeness, while the phases can be optimized solely against collective product detectors.

Choose the phases from the explicit finite-field moment-curve family developed in \SMref{app:flat}. For integers $r,m\ge1$, set $h:=mr$. Using an auxiliary prime $P>\max\{D,h\}$ and writing $\F_P$ for the field with $P$ elements, assign distinct nonzero field points $\alpha_{\mathbf x}\in\F_P$ to the $D$ computational labels and define
\begin{equation}
 \mathbf g_{\mathbf x}:=(\alpha_{\mathbf x},\alpha_{\mathbf x}^2,\ldots,\alpha_{\mathbf x}^h)^T\in\F_P^h.
 \label{eq:momentcurve-vector-main}
\end{equation}
For $\mathbf u\in\F_P^h$, set
\begin{equation}
 \ket{\Omega_{\mathbf u}}:=D^{-1/2}\sum_{\mathbf x\in\mathbb Z_d^n}\omega_P^{\mathbf u\cdot\mathbf g_{\mathbf x}}\ket{\mathbf x},\qquad \omega_P:=e^{2\pi i/P}.
 \label{eq:momentcurve-main}
\end{equation}
Newton identities show that the first $h$ power sums determine the multiset of $h$ field labels.  Combined with character orthogonality over $\F_P^h$, this replica-rigidity property gives for every normalized product state $\ket{\Phi}=\bigotimes_j\ket{\Phi_j}$ across the grouped laboratory partition with $\ket{\Phi_j}\in(\C^d)^{\otimes r}$ and $\langle\Phi_j|\Phi_j\rangle=1$,
\begin{equation}
\frac1{P^h}\sum_{\mathbf{u}\in\F_P^h}|\langle\Phi|\Omega_{\mathbf{u}}^{\otimes r}\rangle|^{2m}\le\frac{h!}{D^h}=\frac{(mr)!}{D^{mr}}.
\label{eq:phase-Am-main}
\end{equation}
Applying Lemma~\ref{lem:moment-to-overlap} with grouped local dimension $q_j=d^r$ yields a state $\ket{\Omega_{\mathbf{u}_*}}$ satisfying
\begin{equation}
 \Gamma_r(\ket{\Omega_{\mathbf{u}_*}})\le D^{-r}\binom{d^r+m-1}{m}^{n/m}(mr)!^{1/m}.
 \label{eq:flat-Gamma-main}
\end{equation}
Define $\cA_{\mathbf{u}_*,d}:=\{Z_d^{\mathbf{y}}\ket{\Omega_{\mathbf{u}_*}}:\mathbf{y}\in\mathbb Z_d^n\}$. Since this is a local-unitary orbit of $\ket{\Omega_{\mathbf{u}_*}}$,
\begin{equation}
 \Gamma_r(\cA_{\mathbf{u}_*,d})=\Gamma_r(\ket{\Omega_{\mathbf{u}_*}}).
\end{equation}
Proposition~\ref{prop:trace-overlap} therefore gives
\begin{equation}
 P_{\col}^{\SEP}(r;\cA_{\mathbf{u}_*,d})\le\frac1D\binom{d^r+m-1}{m}^{n/m}(mr)!^{1/m}.
 \label{eq:flat-master-main}
\end{equation}
The same analytic bridge used for stabilizer states therefore controls the larger product manifold relevant to collective processing.  SM-\ref{app:phasemoment} embeds the fixed-$r$ moment curves into a common finite-field construction and proves that one phase parameter can satisfy the required bounds throughout the block-size range used below.

\begin{theorem}[Unbounded collective copy complexity]\label{thm:collective-unbounded}
For every fixed integer $d\ge2$, there exists a sequence of complete orthonormal bases $\{\cA_{n,d}\}$ of $(\C^d)^{\otimes n}$ such that
\begin{equation}
N_{\col}^{\SEP}(\cA_{n,d})\ge\sqrt n-\frac12\log_d n-\frac12\log_d\ln n-O_d(1).
\label{eq:thm3-N}
\end{equation}
Moreover, for every fixed $0<\alpha<1$ and $r=\lfloor\alpha\sqrt n\rfloor$,
\begin{equation}
P_{\col}^{\SEP}(r;\cA_{n,d})\le D^{-(1-\alpha^2)+o(1)},
\label{eq:thm3-strong}
\end{equation}
and hence $I_{\col}^{\SEP}(r;\cA_{n,d})\le(\alpha^2+o(1))\log_2D$.
\end{theorem}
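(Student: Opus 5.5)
The plan is to obtain both statements from the master bound \eqref{eq:flat-master-main} by choosing the moment order $m$ well as a function of $n$ and $r$. For the divergence of $N_{\col}^{\SEP}$ we use monotonicity in $r$: a perfect collective protocol on $r$ copies gives one on $r+1$ copies by discarding a copy. So $P_{\col}^{\SEP}(r_n;\cA_{n,d})<1$ at a single block size $r_n$ gives $N_{\col}^{\SEP}\ge r_n+1$. The strong converse needs the bound at every $r=\lfloor\alpha\sqrt n\rfloor$ with one fixed basis. For this I will invoke the common finite-field construction of SM-\ref{app:phasemoment}, which provides a single $\mathbf u_*$ for which \eqref{eq:flat-master-main} holds throughout the relevant range of $r$ (each with its own $m$). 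This uniformity step is the only part not reducible to calculus. I expect it to be the main obstacle if it must be proved here. In that case I would first try a union bound over the $O(\sqrt n)$ relevant pairs $(r,m)$, applying Markov's inequality to each moment average \eqref{eq:phase-Am-main} inside a common moment-curve family of degree $h_{\max}=\max mr$, at the cost of an additive $O(\log n)/m$ in the exponent.

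\emph{Asymptotic estimate.} Take $m=K d^r$ with $K\ge1$. Since $m\ge d^r$, the standard estimate gives $\binom{d^r+m-1}{m}\le\bigl(e(d^r+m)/d^r\bigr)^{d^r}=\bigl(e(1+K)\bigr)^{d^r}$. Hence
\begin{equation*}
\frac nm\log_d\binom{d^r+m-1}{m}\le\frac{n\log_d\bigl(e(1+K)\bigr)}{K}.
\end{equation*}
Stirling gives $(mr)!^{1/m}\le(mr)^r$, so $\tfrac1m\log_d(mr)!\le r^2+r\log_d K+r\log_d r$. Together,
\begin{equation*}
\log_d P_{\col}^{\SEP}(r;\cA_{n,d})\le -n+r^2+r\log_d(Kr)+\frac{n\log_d(e(1+K))}{K}.
\end{equation*}

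\emph{Copy-complexity bound.} Choose $K\asymp\sqrt n\,\ln n$, so that $n\log_d(eK)/K=O(\sqrt n)$. The first three terms then require roughly $r^2+r\bigl(\tfrac12\log_d n+\log_d\ln n+\tfrac12\log_d n\bigr)<n-O(\sqrt n)$. Solving the quadratic gives admissibility for all $r\le\sqrt n-\tfrac12\log_d n-\tfrac12\log_d\ln n-O_d(1)$. Balancing the choice of $K$ against the $r\log_d(Kr)$ term should produce exactly these lower-order corrections, and I will tune the constant in $K$ to get the stated form. At such $r$ the bound is $<1$, so \eqref{eq:thm3-N} follows from Proposition~\ref{prop:trace-overlap} and the monotonicity above.

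\emph{Strong converse.} For $r=\lfloor\alpha\sqrt n\rfloor$ take $K=n$. Then the exponent is $-n+\alpha^2 n+O(\sqrt n\log n)+O(\log n)$, which is $-(1-\alpha^2)n+o(n)$; this is \eqref{eq:thm3-strong} since $\log_dD=n$. Substituting into \eqref{eq:info-vs-P-main} gives $I_{\col}^{\SEP}(r;\cA_{n,d})\le\log_2D-(1-\alpha^2-o(1))\log_2D=(\alpha^2+o(1))\log_2D$.
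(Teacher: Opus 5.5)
Your proposal is correct and follows essentially the paper's route. You use the master bound \eqref{eq:flat-master-main} with $m=Kd^r$ and $K\asymp\sqrt n\ln n$, which matches the paper's optimizer $t\approx(n/r)\ln(n/r)$, together with monotonicity in copy number and the common moment-curve phase of SM-\ref{app:phasemoment}. The paper's aggregate functional $\mathcal F(\mathbf u)$ is exactly your Markov/union-bound fallback. Note that this step must be applied to the product-Haar-integrated moments $\mathcal R_r(\mathbf u)$ rather than to the pointwise moment, because Lemma~\ref{lem:moment-to-overlap} only asserts existence. Your cruder estimates $\binom{d^r+m-1}{m}\le(e(1+K))^{d^r}$ and $(mr)!\le(mr)^{mr}$, taken with integer $K$, are a legitimate shortcut for the paper's entropy/Stirling optimization and rounding step: they cost only $O(r)$ in the exponent, hence $O_d(1)$ in the threshold.
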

In particular, for every prescribed integer $R$ and every fixed local dimension $d\ge2$, there exists a complete basis with $N_{\col}^{\SEP}\ge R$. The physical local dimension need not be prime: the auxiliary prime used in the moment-curve phases is unrelated to the physical alphabet, whose orbit orthogonality uses only the characters of $\mathbb Z_d^n$.

\vskip 5pt

\paragraph{Operational implications and applications.---}
The separation above defines a readout primitive controlled by the local measurement architecture.  Suppose that a classical
label $y$ is encoded into a distributed basis state $\ket{\psi_y}$ and
the subsystems are delivered to spatially separated laboratories.
The communication pattern between the laboratories can be kept fixed,
while access to the label is controlled solely by the local ability to
retain and jointly process repeated inputs.  For the $p=5$ stabilizer
family of Theorem~\ref{thm:individual-unbounded}, two repetitions are
sufficient to recover the entire $n\log_2 5$-bit label by one-round
collective LOCC, whereas any individual-copy SEP measurement in the
subthreshold regime reveals only $o(n)$ bits.  Thus local quantum
memory and inter-copy gates can unlock an extensive classical message
without introducing quantum communication between the nodes.

This gives a form of measurement-restricted data hiding in which the
access restriction is temporal rather than spatial.  Conventional
data-hiding schemes protect information by limiting the nonlocal
operations available between separated parties.  Here the spatial
measurement class may remain unchanged: information is instead hidden
from devices that cannot maintain quantum coherence across repeated
local inputs and becomes accessible once that capability is supplied.
 The same construction therefore distinguishes two physically different resources—quantum communication between laboratories and coherent storage and joint processing of repeated inputs within each laboratory—that are often treated together in a generic many-copy measurement model.

The stabilizer decoder also applies naturally to distributed
stabilizer readout.  Stabilizer eigenvalues provide classical syndrome
labels in quantum error correction, graph-state processing, and
networked stabilizer protocols.  Our result shows that local
incompatibility of the stabilizer factors need not force either quantum
communication or a growing number of separately processed
preparations: at most three jointly stored repetitions suffice to make
the relevant local observables simultaneously measurable in every
odd-prime dimension.  Theorem~\ref{thm:collective-unbounded}, however,
shows that this compression is structure dependent.  No fixed local
memory block can provide universal distributed readout for arbitrary
orthogonal encodings.

\paragraph{Discussion.---}
Our results show that copy number alone is not an operationally complete
description of multicopy local measurement.  The same repeated
preparations can have fundamentally different discrimination power
depending on whether each laboratory can maintain coherence across its
local copies and process them jointly.  This distinction is asymptotic,
rather than a finite-size improvement: for the stabilizer families of
Theorem~\ref{thm:individual-unbounded}, the number of separately
processed copies required by SEP diverges with system size, whereas
collective LOCC requires at most three copies for every system size.
For $p\equiv1\pmod4$, the separation is already complete with two
copies: collective LOCC recovers the full $\log_2 D$-bit basis label,
while individual-copy SEP in the subthreshold regime accesses only
$o(\log D)$ bits.  Thus the relevant resource is not an additional
sample or an additional communication link, but the ability to preserve
and exploit quantum coherence between repeated inputs held at the same
laboratory.  In this sense, local inter-copy coherence constitutes a
measurement resource distinct from both sample number and spatial
nonlocality.

The contrast between Theorems~\ref{thm:stabilizer-collective} and
\ref{thm:collective-unbounded} also shows that this resource is strongly
structure dependent.  For maximal-stabilizer eigenbases, collective
copies remove the local incompatibility of globally commuting
observables: the scalar-weight construction cancels all local Weyl
commutators simultaneously while preserving enough global eigenvalue
information to reconstruct the complete stabilizer syndrome.  The
flat-phase family exhibits the opposite behavior.  Even after arbitrary
joint processing of all copies stored at each laboratory, the
collective-SEP copy complexity remains unbounded, and below the
$\sqrt n$ scale the optimal discrimination probability obeys an
exponential strong converse.  Taken together, these results indicate
that bounded collective copy complexity is not determined simply by
entanglement, orthogonality, or the size of the state set.  Rather, it
appears to reflect whether the distributed operator structure of the
basis admits a finite-copy compatibility mechanism.  This provides a
possible structural principle for classifying multicopy local
measurement problems beyond the stabilizer setting.

Several questions become natural from this viewpoint.  First, the
optimal worst-case scaling of collective copy complexity is unknown:
our construction establishes an order-$\sqrt n$ lower-bound scale, but
it remains open whether substantially larger growth, possibly linear
in $n$, can occur for complete bases of fixed local dimension.
Second, it would be valuable to characterize those bases for which a
bounded number of collective copies suffices, and to determine whether
the scalar-weight mechanism is a special feature of Weyl--Pauli
stabilizers or an instance of a more general compatibility principle.
Even within the stabilizer setting, it is open whether two collective
copies always suffice when $p\equiv3\pmod4$, or whether three copies are
genuinely necessary for some bases.  Finally, the present results concern
perfect discrimination in an idealized measurement model.  A natural
next step is to quantify the tradeoff among discrimination error, the
number of copies, and the amount of coherent local storage under noise
or finite-memory constraints.  Such a theory would turn collective
copy complexity from a discrimination parameter into an operational
measure of local quantum-memory requirements in distributed quantum
information processing.

\vskip 5pt

\begin{acknowledgments}
\noindent \emph {Acknowledgments}: This work is supported by the National Natural Science Foundation of China under Grant No.~12371458 and the Guangdong Basic and Applied Basic Research Foundation under Grants Nos.~2024A1515030023 and 2024A1515010380.
\end{acknowledgments}

\vskip 10pt
\emph{AI statement.—}
We used OpenAI's GPT-5.6 primarily to assist with language polishing,
clarifying the presentation, and editing the final manuscript.
In addition, GPT-5.6 provided a proof of Lemma~\ref{lem:rT-contraction}
and suggested the accessible-information inequality stated in
Lemma~\ref{lem:information-identification}, together with a preliminary
proof that was subsequently checked, completed, and rewritten in detail
by the authors.
All other ideas, constructions, main results, and proofs in this work
were developed independently by the authors.

\bibliographystyle{apsrev4-2}
 
\bibliography{MD_ref}

@article{Badziag2003,
  author  = {Badzi{\k a}g, P. and Horodecki, M. and Sen(De), A. and Sen, U.},
  title   = {{Locally Accessible Information: How Much Can the Parties Gain by Cooperating?}},
  journal = {Phys. Rev. Lett.},
  volume  = {91},
  pages   = {117901},
  year    = {2003},
  doi     = {10.1103/PhysRevLett.91.117901},
  url     = {https://doi.org/10.1103/PhysRevLett.91.117901}
}

@article{Bandyopadhyay2011,
  author  = {Bandyopadhyay, S.},
  title   = {{More Nonlocality with Less Purity}},
  journal = {Phys. Rev. Lett.},
  volume  = {106},
  pages   = {210402},
  year    = {2011},
  doi     = {10.1103/PhysRevLett.106.210402},
  url     = {https://doi.org/10.1103/PhysRevLett.106.210402}
}

@article{Bandyopadhyay2015,
  author  = {Bandyopadhyay, S. and Cosentino, A. and Johnston, N. and Russo, V. and Watrous, J. and Yu, N.},
  title   = {{Limitations on Separable Measurements by Convex Optimization}},
  journal = {IEEE Trans. Inf. Theory},
  volume  = {61},
  pages   = {3593--3604},
  year    = {2015},
  doi     = {10.1109/TIT.2015.2417755},
  url     = {https://doi.org/10.1109/TIT.2015.2417755}
}

@article{BandyopadhyayHalder2021,
  author  = {Bandyopadhyay, S. and Halder, S.},
  title   = {{Genuine Activation of Nonlocality: From Locally Available to Locally Hidden Information}},
  journal = {Phys. Rev. A},
  volume  = {104},
  pages   = {L050201},
  year    = {2021},
  doi     = {10.1103/PhysRevA.104.L050201},
  url     = {https://doi.org/10.1103/PhysRevA.104.L050201}
}

@article{BandyopadhyayHalderNathanson2016,
  author  = {Bandyopadhyay, S. and Halder, S. and Nathanson, M.},
  title   = {{Entanglement as a Resource for Local State Discrimination in Multipartite Systems}},
  journal = {Phys. Rev. A},
  volume  = {94},
  pages   = {022311},
  year    = {2016},
  doi     = {10.1103/PhysRevA.94.022311},
  url     = {https://doi.org/10.1103/PhysRevA.94.022311}
}

@article{BandyopadhyayHalderNathanson2018,
  author  = {Bandyopadhyay, S. and Halder, S. and Nathanson, M.},
  title   = {{Optimal Resource States for Local State Discrimination}},
  journal = {Phys. Rev. A},
  volume  = {97},
  pages   = {022314},
  year    = {2018},
  doi     = {10.1103/PhysRevA.97.022314},
  url     = {https://doi.org/10.1103/PhysRevA.97.022314}
}

@article{BandyopadhyayRusso2024,
  author  = {Bandyopadhyay, S. and Russo, V.},
  title   = {{Locally Distinguishing a Maximally Entangled Basis Using Shared Entanglement}},
  journal = {Phys. Rev. A},
  volume  = {110},
  pages   = {042406},
  year    = {2024},
  doi     = {10.1103/PhysRevA.110.042406},
  url     = {https://doi.org/10.1103/PhysRevA.110.042406}
}

@article{BandyopadhyayWalgate2009,
  author  = {Bandyopadhyay, S. and Walgate, J.},
  title   = {{Local Distinguishability of Any Three Quantum States}},
  journal = {J. Phys. A: Math. Theor.},
  volume  = {42},
  pages   = {072002},
  year    = {2009},
  doi     = {10.1088/1751-8113/42/7/072002},
  url     = {https://doi.org/10.1088/1751-8113/42/7/072002}
}

@article{Banik2021,
  author  = {Banik, M. and Guha, T. and Alimuddin, M. and Kar, G. and Halder, S. and Bhattacharya, S. S.},
  title   = {{Multicopy Adaptive Local Discrimination: Strongest Possible Two-Qubit Nonlocal Bases}},
  journal = {Phys. Rev. Lett.},
  volume  = {126},
  pages   = {210505},
  year    = {2021},
  doi     = {10.1103/PhysRevLett.126.210505},
  url     = {https://doi.org/10.1103/PhysRevLett.126.210505}
}

@article{Bennett1999,
  author  = {Bennett, C. H. and DiVincenzo, D. P. and Fuchs, C. A. and Mor, T. and Rains, E. and Shor, P. W. and Smolin, J. A. and Wootters, W. K.},
  title   = {{Quantum Nonlocality without Entanglement}},
  journal = {Phys. Rev. A},
  volume  = {59},
  pages   = {1070--1091},
  year    = {1999},
  doi     = {10.1103/PhysRevA.59.1070},
  url     = {https://doi.org/10.1103/PhysRevA.59.1070}
}

@article{BennettUPB1999,
  author  = {Bennett, C. H. and DiVincenzo, D. P. and Mor, T. and Shor, P. W. and Smolin, J. A. and Terhal, B. M.},
  title   = {{Unextendible Product Bases and Bound Entanglement}},
  journal = {Phys. Rev. Lett.},
  volume  = {82},
  pages   = {5385--5388},
  year    = {1999},
  doi     = {10.1103/PhysRevLett.82.5385},
  url     = {https://doi.org/10.1103/PhysRevLett.82.5385}
}

@article{Bhattacharya2020,
  author  = {Bhattacharya, S. S. and Saha, S. and Guha, T. and Banik, M.},
  title   = {{Nonlocality without Entanglement: Quantum Theory and Beyond}},
  journal = {Phys. Rev. Research},
  volume  = {2},
  pages   = {012068(R)},
  year    = {2020},
  doi     = {10.1103/PhysRevResearch.2.012068},
  url     = {https://doi.org/10.1103/PhysRevResearch.2.012068}
}

@article{Bittel2026,
  author = {Bittel, L. and Eisert, J. and Leone, L. and Mele, A. A. and Oliviero, S. F. E.},
  title = {A complete theory of the {Clifford} commutant},
  journal = {Quantum}, volume = {10}, pages = {2171}, year = {2026},
  doi = {10.22331/q-2026-07-22-2171}, url = {https://doi.org/10.22331/q-2026-07-22-2171}
}

@article{Calsamiglia2010,
  author  = {Calsamiglia, J. and de Vicente, J. I. and Mu{\~n}oz-Tapia, R. and Bagan, E.},
  title   = {{Local Discrimination of Mixed States}},
  journal = {Phys. Rev. Lett.},
  volume  = {105},
  pages   = {080504},
  year    = {2010},
  doi     = {10.1103/PhysRevLett.105.080504},
  url     = {https://doi.org/10.1103/PhysRevLett.105.080504}
}

@article{CaoLiZuo2023,
  author  = {Cao, H.-Q. and Li, M.-S. and Zuo, H.-J.},
  title   = {{Locally stable sets with minimum cardinality}},
  journal = {Phys. Rev. A},
  volume  = {108},
  pages   = {012418},
  year    = {2023},
  doi     = {10.1103/PhysRevA.108.012418},
  url     = {https://doi.org/10.1103/PhysRevA.108.012418}
}

@article{Chitambar2014,
  author  = {Chitambar, E. and Leung, D. and Man{\v c}inska, L. and Ozols, M. and Winter, A.},
  title   = {{Everything You Always Wanted to Know About {LOCC} (But Were Afraid to Ask)}},
  journal = {Commun. Math. Phys.},
  volume  = {328},
  pages   = {303--326},
  year    = {2014},
  doi     = {10.1007/s00220-014-1953-9},
  url     = {https://doi.org/10.1007/s00220-014-1953-9}
}

@article{Conlon2023,
  author  = {Conlon, L. O. and Eilenberger, F. and Lam, P. K. and Assad, S. M.},
  title   = {{Discriminating Mixed Qubit States with Collective Measurements}},
  journal = {Commun. Phys.},
  volume  = {6},
  pages   = {337},
  year    = {2023},
  doi     = {10.1038/s42005-023-01454-z},
  url     = {https://doi.org/10.1038/s42005-023-01454-z}
}

@article{ConlonEtAl2025,
  author  = {Conlon, L. O. and Koh, J. M. and Shajilal, B. and Sidhu, J. and Lam, P. K. and Assad, S. M.},
  title   = {{Attainability of Quantum State Discrimination Bounds with Collective Measurements on Finite Copies}},
  journal = {Phys. Rev. A},
  volume  = {111},
  pages   = {022438},
  year    = {2025},
  doi     = {10.1103/PhysRevA.111.022438},
  url     = {https://doi.org/10.1103/PhysRevA.111.022438}
}

@article{Cosentino2013,
  author  = {Cosentino, A.},
  title   = {{Positive-Partial-Transpose-Indistinguishable States via Semidefinite Programming}},
  journal = {Phys. Rev. A},
  volume  = {87},
  pages   = {012321},
  year    = {2013},
  doi     = {10.1103/PhysRevA.87.012321},
  url     = {https://doi.org/10.1103/PhysRevA.87.012321}
}

@article{DiVincenzoLeungTerhal2002,
  author  = {DiVincenzo, D. P. and Leung, D. W. and Terhal, B. M.},
  title   = {{Quantum Data Hiding}},
  journal = {IEEE Trans. Inf. Theory},
  volume  = {48},
  pages   = {580--598},
  year    = {2002},
  doi     = {10.1109/18.985948},
  url     = {https://doi.org/10.1109/18.985948}
}

@article{Duan2009,
  author  = {Duan, R. and Feng, Y. and Xin, Y. and Ying, M.},
  title   = {{Distinguishability of Quantum States by Separable Operations}},
  journal = {IEEE Trans. Inf. Theory},
  volume  = {55},
  pages   = {1320--1330},
  year    = {2009},
  doi     = {10.1109/TIT.2008.2011524},
  url     = {https://doi.org/10.1109/TIT.2008.2011524}
}

@article{Edmonds1965,
  author = {Edmonds, J.},
  title = {Minimum Partition of a Matroid into Independent Subsets},
  journal = {J. Res. Natl. Bur. Stand. B}, volume = {69B}, pages = {67--72}, year = {1965},
  doi = {10.6028/jres.069B.004}, url = {https://doi.org/10.6028/jres.069B.004}
}

@article{EggelingWerner2002,
  author  = {Eggeling, T. and Werner, R. F.},
  title   = {{Hiding Classical Data in Multipartite Quantum States}},
  journal = {Phys. Rev. Lett.},
  volume  = {89},
  pages   = {097905},
  year    = {2002},
  doi     = {10.1103/PhysRevLett.89.097905},
  url     = {https://doi.org/10.1103/PhysRevLett.89.097905}
}

@article{Fan2004,
  author  = {Fan, H.},
  title   = {{Distinguishability and Indistinguishability by Local Operations and Classical Communication}},
  journal = {Phys. Rev. Lett.},
  volume  = {92},
  pages   = {177905},
  year    = {2004},
  doi     = {10.1103/PhysRevLett.92.177905},
  url     = {https://doi.org/10.1103/PhysRevLett.92.177905}
}

@article{Ghosh2001,
  author  = {Ghosh, S. and Kar, G. and Roy, A. and Sen(De), A. and Sen, U.},
  title   = {{Distinguishability of Bell States}},
  journal = {Phys. Rev. Lett.},
  volume  = {87},
  pages   = {277902},
  year    = {2001},
  doi     = {10.1103/PhysRevLett.87.277902},
  url     = {https://doi.org/10.1103/PhysRevLett.87.277902}
}

@article{Ghosh2004,
  author  = {Ghosh, S. and Kar, G. and Roy, A. and Sarkar, D.},
  title   = {{Distinguishability of Maximally Entangled States}},
  journal = {Phys. Rev. A},
  volume  = {70},
  pages   = {022304},
  year    = {2004},
  doi     = {10.1103/PhysRevA.70.022304},
  url     = {https://doi.org/10.1103/PhysRevA.70.022304}
}

@article{Gross2006,
  author = {Gross, D.},
  title = {Hudson's theorem for finite-dimensional quantum systems},
  journal = {J. Math. Phys.}, volume = {47}, pages = {122107}, year = {2006},
  doi = {10.1063/1.2393152}, url = {https://doi.org/10.1063/1.2393152}
}

@article{GrossNezamiWalter2021,
  author = {Gross, D. and Nezami, S. and Walter, M.},
  title = {{Schur–Weyl Duality for the Clifford Group with Applications: Property Testing, a Robust Hudson Theorem, and de Finetti Representations}},
  journal = {Commun. Math. Phys.}, volume = {385}, pages = {1325--1393}, year = {2021},
  doi = {10.1007/s00220-021-04118-7}, url = {https://doi.org/10.1007/s00220-021-04118-7}
}

@article{Halder2019,
  author  = {Halder, S. and Banik, M. and Agrawal, S. and Bandyopadhyay, S.},
  title   = {{Strong Quantum Nonlocality without Entanglement}},
  journal = {Phys. Rev. Lett.},
  volume  = {122},
  pages   = {040403},
  year    = {2019},
  doi     = {10.1103/PhysRevLett.122.040403},
  url     = {https://doi.org/10.1103/PhysRevLett.122.040403}
}

@article{Hashimoto2021,
  author  = {Hashimoto, T. and Horibe, M. and Hayashi, A.},
  title   = {{Simple Criterion for Local Distinguishability of Generalized Bell States in Prime Dimension}},
  journal = {Phys. Rev. A},
  volume  = {103},
  pages   = {052429},
  year    = {2021},
  doi     = {10.1103/PhysRevA.103.052429},
  url     = {https://doi.org/10.1103/PhysRevA.103.052429}
}

@article{Hayashi2006,
  author  = {Hayashi, M. and Markham, D. and Murao, M. and Owari, M. and Virmani, S.},
  title   = {{Bounds on Multipartite Entangled Orthogonal State Discrimination Using Local Operations and Classical Communication}},
  journal = {Phys. Rev. Lett.},
  volume  = {96},
  pages   = {040501},
  year    = {2006},
  doi     = {10.1103/PhysRevLett.96.040501},
  url     = {https://doi.org/10.1103/PhysRevLett.96.040501}
}

@article{Helwig2012,
  author  = {Helwig, W. and Cui, W. and Latorre, J. I. and Riera, A. and Lo, H.-K.},
  title   = {Absolute Maximal Entanglement and Quantum Secret Sharing},
  journal = {Phys. Rev. A},
  volume  = {86},
  pages   = {052335},
  year    = {2012},
  doi     = {10.1103/PhysRevA.86.052335},
  url     = {https://doi.org/10.1103/PhysRevA.86.052335}
}

@article{HeShiZhang2024,
  author = {He, Y. and Shi, F. and Zhang, X.},
  title = {Strong quantum nonlocality and unextendibility without entanglement in $N$-partite systems with odd $N$},
  journal = {Quantum}, volume = {8}, pages = {1349}, year = {2024},
  doi = {10.22331/q-2024-05-16-1349}, url = {https://doi.org/10.22331/q-2024-05-16-1349}
}

@article{Higgins2011,
  author  = {Higgins, B. L. and Doherty, A. C. and Bartlett, S. D. and Pryde, G. J. and Wiseman, H. M.},
  title   = {{Multiple-Copy State Discrimination: Thinking Globally, Acting Locally}},
  journal = {Phys. Rev. A},
  volume  = {83},
  pages   = {052314},
  year    = {2011},
  doi     = {10.1103/PhysRevA.83.052314},
  url     = {https://doi.org/10.1103/PhysRevA.83.052314}
}

@article{Horodecki2003,
  author  = {Horodecki, M. and Sen(De), A. and Sen, U. and Horodecki, K.},
  title   = {{Local Indistinguishability: More Nonlocality with Less Entanglement}},
  journal = {Phys. Rev. Lett.},
  volume  = {90},
  pages   = {047902},
  year    = {2003},
  doi     = {10.1103/PhysRevLett.90.047902},
  url     = {https://doi.org/10.1103/PhysRevLett.90.047902}
}

@article{HuGaoYan2024,
  author  = {Hu, M. and Gao, T. and Yan, F.},
  title   = {{Strong Quantum Nonlocality with Genuine Entanglement in an $N$-Qutrit System}},
  journal = {Phys. Rev. A},
  volume  = {109},
  pages   = {022220},
  year    = {2024},
  doi     = {10.1103/PhysRevA.109.022220},
  url     = {https://doi.org/10.1103/PhysRevA.109.022220}
}

@article{HuGaoYan2025,
  author  = {Hu, M. and Gao, T. and Yan, F.},
  title   = {{Strongest Quantum Nonlocality in $N$-Partite Systems}},
  journal = {Phys. Rev. A},
  volume  = {112},
  pages   = {032205},
  year    = {2025},
  doi     = {10.1103/69z8-sk86},
  url     = {https://doi.org/10.1103/69z8-sk86}
}

@article{Jagannathan2022,
  author  = {Jagannathan, A. and Grace, M. and Brasher, O. and Shapiro, J. H. and Guha, S. and Habif, J. L.},
  title   = {{Demonstration of Quantum-Limited Discrimination of Multicopy Pure versus Mixed States}},
  journal = {Phys. Rev. A},
  volume  = {105},
  pages   = {032446},
  year    = {2022},
  doi     = {10.1103/PhysRevA.105.032446},
  url     = {https://doi.org/10.1103/PhysRevA.105.032446}
}

@misc{KuengGross2015,
  author = {Kueng, R. and Gross, D.},
  title = {Qubit stabilizer states are complex projective 3-designs},
  year = {2015}, eprint = {1510.02767}, archivePrefix = {arXiv}, primaryClass = {quant-ph},
  url = {https://arxiv.org/abs/1510.02767}
}

@article{LiShiWang2022,
  author  = {Li, M.-S. and Shi, F. and Wang, Y.-L.},
  title   = {{Local Discrimination of Generalized Bell States via Commutativity}},
  journal = {Phys. Rev. A},
  volume  = {105},
  pages   = {032455},
  year    = {2022},
  doi     = {10.1103/PhysRevA.105.032455},
  url     = {https://doi.org/10.1103/PhysRevA.105.032455}
}

@article{LiShiZhang2023,
  author = {Li, J. and Shi, F. and Zhang, X.},
  title = {Strongest nonlocal sets with small sizes},
  journal = {Phys. Rev. A}, volume = {108}, pages = {062407}, year = {2023},
  doi = {10.1103/PhysRevA.108.062407}, url = {https://doi.org/10.1103/PhysRevA.108.062407}
}

@article{LiZheng2022,
  author  = {Li, M.-S. and Zheng, Z.-J.},
  title   = {{Genuine Hidden Nonlocality without Entanglement: From the Perspective of Local Discrimination}},
  journal = {New J. Phys.},
  volume  = {24},
  pages   = {043036},
  year    = {2022},
  doi     = {10.1088/1367-2630/ac631a},
  url     = {https://doi.org/10.1088/1367-2630/ac631a}
}

@article{LuCaoZuoFei2024,
  author = {Lu, Y.-Y. and Cao, H.-Q. and Zuo, H.-J. and Fei, S.-M.},
  title = {Genuinely nonlocal sets without entanglement in multipartite systems},
  journal = {Phys. Rev. A}, volume = {110}, pages = {022427}, year = {2024},
  doi = {10.1103/PhysRevA.110.022427}, url = {https://doi.org/10.1103/PhysRevA.110.022427}
}

@article{Matthews2009,
  author  = {Matthews, W. and Wehner, S. and Winter, A.},
  title   = {{Distinguishability of Quantum States under Restricted Families of Measurements with an Application to Quantum Data Hiding}},
  journal = {Commun. Math. Phys.},
  volume  = {291},
  pages   = {813--843},
  year    = {2009},
  doi     = {10.1007/s00220-009-0890-5},
  url     = {https://doi.org/10.1007/s00220-009-0890-5}
}

@article{Mele2024Haar,
  author  = {Mele, A. A.},
  title   = {{Introduction to Haar Measure Tools in Quantum Information: A Beginner's Tutorial}},
  journal = {Quantum},
  volume  = {8},
  pages   = {1340},
  year    = {2024},
  doi     = {10.22331/q-2024-05-08-1340},
  url     = {https://doi.org/10.22331/q-2024-05-08-1340}
}

@article{MurshidEtAl2026,
  author  = {Murshid, S. and Gupta, T. and Russo, V. and Bandyopadhyay, S.},
  title   = {{Quantum Nonlocality without Entanglement and State Discrimination Measures}},
  journal = {Quantum},
  volume  = {10},
  pages   = {2174},
  year    = {2026},
  doi     = {10.22331/q-2026-07-23-2174},
  url     = {https://doi.org/10.22331/q-2026-07-23-2174}
}

@article{PeresWootters1991,
  author  = {Peres, A. and Wootters, W. K.},
  title   = {{Optimal Detection of Quantum Information}},
  journal = {Phys. Rev. Lett.},
  volume  = {66},
  pages   = {1119--1122},
  year    = {1991},
  doi     = {10.1103/PhysRevLett.66.1119},
  url     = {https://doi.org/10.1103/PhysRevLett.66.1119}
}

@article{Rout2019,
  author  = {Rout, S. and Maity, A. G. and Mukherjee, A. and Halder, S. and Banik, M.},
  title   = {{Genuinely Nonlocal Product Bases: Classification and Entanglement-Assisted Discrimination}},
  journal = {Phys. Rev. A},
  volume  = {100},
  pages   = {032321},
  year    = {2019},
  doi     = {10.1103/PhysRevA.100.032321},
  url     = {https://doi.org/10.1103/PhysRevA.100.032321}
}

@article{RoutEtAl2021,
  author  = {Rout, S. and Maity, A. G. and Mukherjee, A. and Halder, S. and Banik, M.},
  title   = {{Multiparty Orthogonal Product States with Minimal Genuine Nonlocality}},
  journal = {Phys. Rev. A},
  volume  = {104},
  pages   = {052433},
  year    = {2021},
  doi     = {10.1103/PhysRevA.104.052433},
  url     = {https://doi.org/10.1103/PhysRevA.104.052433}
}

@article{SenHalderSen2024,
  author  = {Sen, K. and Halder, S. and Sen, U.},
  title   = {{Incompatibility of Local Measurements Providing an Advantage in Local Quantum State Discrimination}},
  journal = {Phys. Rev. A},
  volume  = {109},
  pages   = {012415},
  year    = {2024},
  doi     = {10.1103/PhysRevA.109.012415},
  url     = {https://doi.org/10.1103/PhysRevA.109.012415}
}

@article{ShiStrongUPB2022,
  author  = {Shi, F. and Li, M.-S. and Hu, M. and Chen, L. and Yung, M.-H. and Wang, Y.-L. and Zhang, X.},
  title   = {{Strongly Nonlocal Unextendible Product Bases Do Exist}},
  journal = {Quantum},
  volume  = {6},
  pages   = {619},
  year    = {2022},
  doi     = {10.22331/q-2022-01-05-619},
  url     = {https://doi.org/10.22331/q-2022-01-05-619}
}

@article{Terhal2001,
  author  = {Terhal, B. M. and DiVincenzo, D. P. and Leung, D. W.},
  title   = {{Hiding Bits in Bell States}},
  journal = {Phys. Rev. Lett.},
  volume  = {86},
  pages   = {5807--5810},
  year    = {2001},
  doi     = {10.1103/PhysRevLett.86.5807},
  url     = {https://doi.org/10.1103/PhysRevLett.86.5807}
}

@article{TianEtAl2024,
  author  = {Tian, B. and Yan, W.-Z. and Hou, Z. and Xiang, G.-Y. and Li, C.-F. and Guo, G.-C.},
  title   = {{Minimum-Consumption Discrimination of Quantum States via Globally Optimal Adaptive Measurements}},
  journal = {Phys. Rev. Lett.},
  volume  = {132},
  pages   = {110801},
  year    = {2024},
  doi     = {10.1103/PhysRevLett.132.110801},
  url     = {https://doi.org/10.1103/PhysRevLett.132.110801}
}

@article{Walgate2000,
  author  = {Walgate, J. and Short, A. J. and Hardy, L. and Vedral, V.},
  title   = {{Local Distinguishability of Multipartite Orthogonal Quantum States}},
  journal = {Phys. Rev. Lett.},
  volume  = {85},
  pages   = {4972--4975},
  year    = {2000},
  doi     = {10.1103/PhysRevLett.85.4972},
  url     = {https://doi.org/10.1103/PhysRevLett.85.4972}
}

@article{WalgateHardy2002,
  author  = {Walgate, J. and Hardy, L.},
  title   = {{Nonlocality, Asymmetry, and Distinguishing Bipartite States}},
  journal = {Phys. Rev. Lett.},
  volume  = {89},
  pages   = {147901},
  year    = {2002},
  doi     = {10.1103/PhysRevLett.89.147901},
  url     = {https://doi.org/10.1103/PhysRevLett.89.147901}
}

@article{WangEtAl2025,
  author  = {Wang, C.-H. and Yuan, J.-T. and Yang, Y.-H. and Li, M.-S. and Fei, S.-M. and Ma, Z.-H.},
  title   = {{Detectors for Local Discrimination of Sets of Generalized Bell States}},
  journal = {Phys. Rev. A},
  volume  = {111},
  pages   = {042408},
  year    = {2025},
  doi     = {10.1103/PhysRevA.111.042408},
  url     = {https://doi.org/10.1103/PhysRevA.111.042408}
}

@article{Watrous2005,
  author  = {Watrous, J.},
  title   = {{Bipartite Subspaces Having No Bases Distinguishable by Local Operations and Classical Communication}},
  journal = {Phys. Rev. Lett.},
  volume  = {95},
  pages   = {080505},
  year    = {2005},
  doi     = {10.1103/PhysRevLett.95.080505},
  url     = {https://doi.org/10.1103/PhysRevLett.95.080505}
}

@article{WalgateScott2008,
  author  = {Walgate, Jonathan and Scott, A. J.},
  title   = {Generic local distinguishability and completely entangled subspaces},
  journal = {Physical Review A},
  volume  = {77},
  number  = {6},
  pages   = {062317},
  year    = {2008},
  doi     = {10.1103/PhysRevA.77.062317}
}

@article{Webb2016,
  author = {Webb, Z.},
  title = {The Clifford group forms a unitary 3-design},
  journal = {Quantum Inf. Comput.}, volume = {16}, pages = {1379--1400}, year = {2016},
  url = {https://www.rintonpress.com/xxqic16/qic-16-1516/1379-1400.pdf}
}

@article{XiongLiZhengLi2023,
  author  = {Xiong, Z.-X. and Li, M.-S. and Zheng, Z.-J. and Li, L.},
  title   = {{Distinguishability-Based Genuine Nonlocality with Genuine Multipartite Entanglement}},
  journal = {Phys. Rev. A},
  volume  = {108},
  pages   = {022405},
  year    = {2023},
  doi     = {10.1103/PhysRevA.108.022405},
  url     = {https://doi.org/10.1103/PhysRevA.108.022405}
}

@article{Yu2012,
  author  = {Yu, N. and Duan, R. and Ying, M.},
  title   = {{Four Locally Indistinguishable Ququad-Ququad Orthogonal Maximally Entangled States}},
  journal = {Phys. Rev. Lett.},
  volume  = {109},
  pages   = {020506},
  year    = {2012},
  doi     = {10.1103/PhysRevLett.109.020506},
  url     = {https://doi.org/10.1103/PhysRevLett.109.020506}
}

@article{ZhangStrong2019,
  author = {Zhang, Z.-C. and Zhang, X.},
  title = {Strong quantum nonlocality in multipartite quantum systems},
  journal = {Phys. Rev. A}, volume = {99}, pages = {062108}, year = {2019},
  doi = {10.1103/PhysRevA.99.062108}, url = {https://doi.org/10.1103/PhysRevA.99.062108}
}

@article{ZhenFeiZuo2022,
  author = {Zhen, X.-F. and Fei, S.-M. and Zuo, H.-J.},
  title = {Nonlocality without entanglement in general multipartite quantum systems},
  journal = {Phys. Rev. A}, volume = {106}, pages = {062432}, year = {2022},
  doi = {10.1103/PhysRevA.106.062432}, url = {https://doi.org/10.1103/PhysRevA.106.062432}
}

@article{Zhou2023,
  author  = {Zhou, H. and Gao, T. and Yan, F.},
  title   = {{Strong Quantum Nonlocality without Entanglement in an $n$-Partite System with Even $n$}},
  journal = {Phys. Rev. A},
  volume  = {107},
  pages   = {042214},
  year    = {2023},
  doi     = {10.1103/PhysRevA.107.042214},
  url     = {https://doi.org/10.1103/PhysRevA.107.042214}
}

@article{ZhouPlane2022,
  author  = {Zhou, H. and Gao, T. and Yan, F.},
  title   = {{Orthogonal Product Sets with Strong Quantum Nonlocality on a Plane Structure}},
  journal = {Phys. Rev. A},
  volume  = {106},
  pages   = {052209},
  year    = {2022},
  doi     = {10.1103/PhysRevA.106.052209},
  url     = {https://doi.org/10.1103/PhysRevA.106.052209}
}

@article{Zhu2017,
  author = {Zhu, H.},
  title = {{Multiqubit Clifford groups are unitary 3-designs}},
  journal = {Phys. Rev. A}, volume = {96}, pages = {062336}, year = {2017},
  doi = {10.1103/PhysRevA.96.062336}, url = {https://doi.org/10.1103/PhysRevA.96.062336}
}

\clearpage
\onecolumngrid
\begin{center}
{\large\bfseries Supplemental Material for ``Power and Limits of Collective Local Measurements in Multicopy State Discrimination''}
\end{center}
\vspace{0.5em}

\setcounter{section}{0}
\renewcommand{\thesection}{\Alph{section}}
\renewcommand{\theequation}{\thesection\arabic{equation}}
\renewcommand{\thelemma}{\thesection\arabic{lemma}}
\renewcommand{\theproposition}{\thesection\arabic{proposition}}
\renewcommand{\thetheorem}{\thesection\arabic{theorem}}
\renewcommand{\thecorollary}{\thesection\arabic{corollary}}
\providecommand{\theHequation}{}
\providecommand{\theHlemma}{}
\providecommand{\theHproposition}{}
\providecommand{\theHtheorem}{}
\providecommand{\theHcorollary}{}
\renewcommand{\theHequation}{supp.\thesection.\arabic{equation}}
\renewcommand{\theHlemma}{supp.\thesection.\arabic{lemma}}
\renewcommand{\theHproposition}{supp.\thesection.\arabic{proposition}}
\renewcommand{\theHtheorem}{supp.\thesection.\arabic{theorem}}
\renewcommand{\theHcorollary}{supp.\thesection.\arabic{corollary}}
\newcommand{\suppsection}[2]{%
  \refstepcounter{section}%
  \setcounter{equation}{0}%
  \setcounter{lemma}{0}%
  \setcounter{proposition}{0}%
  \setcounter{theorem}{0}%
  \setcounter{corollary}{0}%
  \section*{SM-\thesection:\ #1}\label{#2}%
}

The Supplemental Material is organized as follows. SM-A gives the precise measurement hierarchy and proves the trace--overlap bounds. SM-B treats the $\operatorname{AME}(6,2)$ local-Pauli orbit basis.  SM-C proves the moment-to-overlap uniformization lemma. SM-D introduces the odd-prime Weyl--Pauli stabilizer formalism, derives the stabilizer moment estimate, and proves Theorem~\ref{thm:individual-unbounded}. SM-E proves the collective stabilizer decoder of Theorem~\ref{thm:stabilizer-collective}. SM-F proves the accessible-information inequality. SM-G develops the flat local-unitary orbit and the Newton-identity replica-rigidity lemma. SM-H derives the moment-curve replica bound and proves Theorem~\ref{thm:collective-unbounded}.

Unless another limit is stated explicitly, all asymptotic statements refer to $n\to\infty$ with the physical local dimensions fixed. For a real sequence $f_n$ and a positive sequence $g_n$, $f_n=O(g_n)$ means $|f_n|\le Cg_n$ for all sufficiently large $n$ and some constant $C$ independent of $n$, while $f_n=o(g_n)$ means $f_n/g_n\to0$. A subscript on $O$, such as $O_p$ or $O_d$ below, indicates that the implicit constant may depend only on the displayed parameter. We write $f_n\asymp g_n$ for positive sequences when both $f_n=O(g_n)$ and $g_n=O(f_n)$. Thus $O(1)$ and $O_d(1)$ denote bounded quantities (with the indicated parameter dependence), whereas $o(1)$ denotes a quantity tending to zero. The same conventions apply when another variable is explicitly taken to infinity. We use $\ln$ for the natural logarithm and $\log_b$ for logarithm to base $b$.

\suppsection{Measurement hierarchy and trace--overlap bounds}{app:trace}

We first make the measurement models precise and then prove Proposition~\ref{prop:trace-overlap}.  The arguments use only finite-dimensional separability and the normalization of a POVM.

For the discrimination probabilities $P_{\alpha}^{\mathsf M}$, only the $D$-outcome decision POVMs $\{M_y\}_{y=1}^D$ of the Letter are needed.  In SM-F we also allow a measurement to have a general classical outcome in a standard Borel space $(\mathcal Z,\Sigma)$.  Such a POVM is a countably additive map $M:\Sigma\to\operatorname{Pos}(\cH^{\otimes r})$ with $M(\mathcal Z)=I$, where $\operatorname{Pos}(\mathcal V)$ denotes the cone of positive semidefinite operators on a Hilbert space $\mathcal V$.  For SEP, we require $M(E)$ to belong to the corresponding separable cone for every measurable $E\in\Sigma$; LOCC measurements are, as usual, closed under measurable classical relabeling of their terminal outcomes.  Hence if $E_1,\ldots,E_D$ is a measurable partition of $\mathcal Z$, the coarse-grained decision POVM $\{M(E_y)\}_{y=1}^D$ remains in the same measurement class.  This is the only general-outcome property used in SM-F.

\subsection{Individual-copy and collective measurement classes}
For one copy, let $\SEP_+(\cH)$ denote the cone of positive operators separable across the physical partition $\cH=\bigotimes_{j=1}^n\cH_j$.  For $r$ copies, define the individual-copy separable cone by
\begin{equation}
 \mathsf C_{\ind}^{(r)}
 :=\operatorname{cone}\!\left\{
 A_1\otimes\cdots\otimes A_r:
 A_a\in\SEP_+(\cH)
 \right\}.
 \label{eq:A-indcone}
\end{equation}
Here $\operatorname{cone}$ denotes the conic hull, i.e., the set of all finite nonnegative linear combinations of the displayed generators.  An $r$-copy POVM belongs to $\SEP_{\ind}^{(r)}$ when each POVM effect lies in $\mathsf C_{\ind}^{(r)}$.  By contrast, $\SEP_{\col}^{(r)}$ is ordinary separability with respect to the grouped partition
\begin{equation}
 \cH^{\otimes r}\simeq\bigotimes_{j=1}^n \cH_j^{\otimes r}.
 \label{eq:A-grouped}
\end{equation}
The difference is important: an element of $\SEP_{\col}^{(r)}$ may be entangled among the $r$ local copies held by one laboratory, whereas an element of $\mathsf C_{\ind}^{(r)}$ is built from one-copy separable factors before the copies are combined.

Consider now a terminal branch $\tau$ of a finite-round protocol in $\LOCC_{\ind}^{(r)}$.  Every elementary Kraus operator on that branch acts on one laboratory and one copy.  Operators acting on distinct tensor factors commute, so all operations belonging to the same copy can be regrouped.  The branch Kraus operator therefore has the form
\begin{equation}
 K_\tau=K_{\tau,1}\otimes\cdots\otimes K_{\tau,r}.
 \label{eq:branch-regroup}
\end{equation}
For each copy $a$, $K_{\tau,a}$ is itself a product of local Kraus operators across the $n$ laboratories.  Consequently
\begin{equation}
 A_{\tau,a}:=K_{\tau,a}^\dagger K_{\tau,a}\in\SEP_+(\cH),
 \label{eq:A-branch-onecopy}
\end{equation}
and the terminal effect satisfies
\begin{equation}
 E_\tau=K_\tau^\dagger K_\tau
 =A_{\tau,1}\otimes\cdots\otimes A_{\tau,r}
 \in\mathsf C_{\ind}^{(r)}.
 \label{eq:A-terminal-effect}
\end{equation}
A decision outcome is obtained by summing terminal effects over all branches that lead to the same decision.  Since $\mathsf C_{\ind}^{(r)}$ is a cone, the coarse-grained effect remains in $\mathsf C_{\ind}^{(r)}$.  Hence
\begin{equation}
 \LOCC_{\ind}^{(r)}\subseteq\SEP_{\ind}^{(r)}.
 \label{eq:A-locc-sep-ind}
\end{equation}
Adaptive LOCC imposes the additional requirement that copy $a$ be completed before copy $a+1$ is touched, which gives $\LOCC_{\ad}^{(r)}\subseteq\LOCC_{\ind}^{(r)}$.  Finally, after regrouping the tensor factors by laboratory, every generator $A_1\otimes\cdots\otimes A_r$ of $\mathsf C_{\ind}^{(r)}$ is separable across the grouped partition.  Thus $\SEP_{\ind}^{(r)}\subseteq\SEP_{\col}^{(r)}$.  The inclusion $\LOCC_{\ind}^{(r)}\subseteq\LOCC_{\col}^{(r)}$ is immediate because the collective model allows every local operation available in the individual-copy model.  These observations establish Eq.~\eqref{eq:model-inclusions}.

The optimal success probability is monotone in the number of available copies.

\begin{lemma}[Monotonicity in copy number]\label{lem:A-copy-monotonicity}
For every complete basis $\cB$, every architecture $\alpha\in\{\ind,\col\}$, every spatial class $\mathsf M\in\{\LOCC,\SEP\}$, and integers $1\le r\le s$,
\begin{equation}
 P_{\alpha}^{\mathsf M}(r;\cB)\le P_{\alpha}^{\mathsf M}(s;\cB).
 \label{eq:A-copy-monotonicity}
\end{equation}
Consequently, $P_{\alpha}^{\mathsf M}(R;\cB)<1$ implies $N_{\alpha}^{\mathsf M}(\cB)>R$.
\end{lemma}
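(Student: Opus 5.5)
The idea is to simulate an $r$-copy measurement on $s$ copies by discarding (ignoring) the extra $s-r$ copies. Given any decision POVM $\{M_y\}_{y=1}^D$ in $\mathsf M_{\alpha}^{(r)}$, define the $s$-copy POVM
\begin{equation*}
 M_y' := M_y\otimes I^{\otimes(s-r)},
\end{equation*}
acting on the first $r$ copies and trivially on the remaining ones. Then $\sum_y M_y'=I$, and
\begin{equation*}
 \Tr\bigl(M_y'\rho_y^{\otimes s}\bigr)=\Tr\bigl(M_y\rho_y^{\otimes r}\bigr)\Tr\bigl(\rho_y\bigr)^{s-r}=\Tr\bigl(M_y\rho_y^{\otimes r}\bigr),
\end{equation*}
so the average success probability is unchanged. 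Taking the supremum over $\{M_y\}$ gives Eq.~\eqref{eq:A-copy-monotonicity}, provided $\{M_y'\}\in\mathsf M_{\alpha}^{(s)}$.

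Membership is therefore the only step that needs checking, and I would verify it model by model.

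For $\SEP_{\ind}$, each $M_y$ is a finite nonnegative combination of generators $A_1\otimes\cdots\otimes A_r$ with $A_a\in\SEP_+(\cH)$. Since $I_{\cH}=\bigotimes_j I_{\cH_j}\in\SEP_+(\cH)$, appending $s-r$ identity factors maps each generator to a generator of $\mathsf C_{\ind}^{(s)}$ in Eq.~\eqref{eq:A-indcone}, and conic combinations are preserved.

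For $\SEP_{\col}$, regroup by laboratory. A separable term $\bigotimes_j B_j$ with $B_j\ge0$ on $\cH_j^{\otimes r}$ becomes $\bigotimes_j\bigl(B_j\otimes I_{\cH_j}^{\otimes(s-r)}\bigr)$, which is still a product across the grouped partition, Eq.~\eqref{eq:A-grouped}.

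For $\LOCC_{\ind}$ and $\LOCC_{\col}$, I would run the same finite-round protocol on the first $r$ copies and never touch the others. The protocol stays finite-round, and every local operation is still admissible: in the individual-copy model it acts on one copy at one laboratory, and in the collective model it acts within laboratory $j$'s register. The terminal branch effects are $E_\tau\otimes I$, so coarse-graining reproduces $\{M_y'\}$.

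The main (mild) obstacle is just this bookkeeping of membership under tensoring with identity, in particular the identification of $\cH^{\otimes s}$ with the grouped partition in the collective case. No analytic estimate is required.

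Finally, the consequence follows by contraposition. Suppose $P_{\alpha}^{\mathsf M}(R;\cB)<1$. For every $r\le R$, monotonicity gives $P_{\alpha}^{\mathsf M}(r;\cB)\le P_{\alpha}^{\mathsf M}(R;\cB)<1$, so no $r\le R$ lies in the set minimized in Eq.~\eqref{eq:PNdef}. Hence $N_{\alpha}^{\mathsf M}(\cB)>R$.
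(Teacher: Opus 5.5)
Your proposal is correct and follows the same approach as the paper: simulate any $r$-copy protocol on $s$ copies by discarding (acting trivially on) the extra $s-r$ copies, which preserves the success probability, then read off the consequence from the definition of $N_{\alpha}^{\mathsf M}$. Your model-by-model check that $M_y\otimes I^{\otimes(s-r)}$ stays in each class just spells out what the paper summarizes as ``local discarding is an allowed operation in each of the four measurement models.''
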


\begin{proof}
An $r$-copy protocol can be applied when $s\ge r$ copies are available by discarding $s-r$ copies locally before the protocol starts.  Local discarding is an allowed operation in each of the four measurement models, so every $r$-copy decision rule is also realizable with $s$ copies and has the same success probability.  Taking the supremum over the $s$-copy class proves Eq.~\eqref{eq:A-copy-monotonicity}.  The final statement follows immediately from the definition of copy complexity.
\end{proof}

\subsection{Collective SEP bound}
Let $\{M_y\}_{y=1}^{D}$ be a $\SEP_{\col}^{(r)}$ POVM.  In finite dimension, a positive separable operator is a finite sum of positive product operators.  Spectrally decomposing every local positive factor refines such a decomposition into rank-one product projectors, so we may write
\begin{equation}
 M_y=\sum_a\lambda_{y,a}\proj{\Phi_{y,a}},
 \qquad \lambda_{y,a}\ge0,
 \label{eq:A-collective-decomp}
\end{equation}
where
\begin{equation}
 \ket{\Phi_{y,a}}=\bigotimes_{j=1}^n\ket{\Phi_{y,a}^{(j)}},
 \qquad \ket{\Phi_{y,a}^{(j)}}\in\cH_j^{\otimes r}.
\end{equation}
The product vectors may be taken normalized, with their norms absorbed into $\lambda_{y,a}$.  By the definition of $\Gamma_r(\cB)$,
\begin{equation}
 \Tr(M_y\rho_y^{\otimes r})=\sum_a\lambda_{y,a}|\langle\Phi_{y,a}|\psi_y^{\otimes r}\rangle|^2\le\Gamma_r(\cB)\sum_a\lambda_{y,a}.
 \label{eq:A-collective-oneeffect}
\end{equation}
Because every rank-one projector in Eq.~\eqref{eq:A-collective-decomp} has unit trace,
$\sum_a\lambda_{y,a}=\Tr M_y$.  Therefore
\begin{equation}
 \Tr(M_y\rho_y^{\otimes r})\le \Gamma_r(\cB)\Tr M_y.
 \label{eq:A-collective-trace}
\end{equation}
Averaging over the uniform label and using $\sum_yM_y=I_{D^r}$ gives
\begin{equation}
 P_{\col}^{\SEP}(r;\cB)\le\frac{\Gamma_r(\cB)}{D}\sum_y\Tr M_y=\frac{\Gamma_r(\cB)}{D}\Tr I_{D^r}=D^{r-1}\Gamma_r(\cB).
 \label{eq:A-collective-final}
\end{equation}
which is Eq.~\eqref{eq:collective-trace-main}.

\subsection{Individual-copy SEP bound}
We first record the corresponding one-copy estimate.  Let $A\in\SEP_+(\cH)$.  As above, $A$ has a rank-one fully product decomposition
\begin{equation}
 A=\sum_s\lambda_s\proj{\phi_s},
 \qquad \lambda_s\ge0,
 \label{eq:A-onecopy-decomp}
\end{equation}
with $\ket{\phi_s}=\bigotimes_j\ket{\phi_{s,j}}$.  Hence
\begin{equation}
 \Tr(A\rho_y)=\sum_s\lambda_s|\langle\phi_s|\psi_y\rangle|^2\le\gamma(\cB)\sum_s\lambda_s=\gamma(\cB)\Tr A.
 \label{eq:onecopy-sep-gamma}
\end{equation}
Now let $M_y$ be an individual-copy SEP effect.  From Eq.~\eqref{eq:A-indcone},
\begin{equation}
 M_y=\sum_a A_{a,1}\otimes\cdots\otimes A_{a,r},
 \qquad A_{a,\ell}\in\SEP_+(\cH).
 \label{eq:A-ind-effect}
\end{equation}
Since the unknown state is the tensor power $\rho_y^{\otimes r}$, the trace factorizes copy by copy:
\begin{equation}
 \Tr(M_y\rho_y^{\otimes r})=\sum_a\prod_{\ell=1}^r\Tr(A_{a,\ell}\rho_y)\le\gamma(\cB)^r\sum_a\prod_{\ell=1}^r\Tr A_{a,\ell}=\gamma(\cB)^r\Tr M_y.
 \label{eq:A-ind-trace}
\end{equation}
Summing over the POVM outcomes gives
\begin{equation}
 P_{\ind}^{\SEP}(r;\cB)
 \le\frac{\gamma(\cB)^r}{D}\Tr I_{D^r}
 =D^{r-1}\gamma(\cB)^r,
 \label{eq:A-ind-final}
\end{equation}
which is Eq.~\eqref{eq:copytrace-main} and completes the proof of Proposition~\ref{prop:trace-overlap}.

\subsection{Local-unitary invariance and the overlap floor}
Let $U=\bigotimes_jU_j$ be local.  The map
$\ket{\phi}=\bigotimes_j\ket{\phi_j}\mapsto U^\dagger\ket{\phi}$
is a bijection of the normalized product states.  Therefore
\begin{equation}
 \gamma(U\ket{\psi})=\max_{\ket{\phi}\,\mathrm{product}}|\langle\phi|U|\psi\rangle|^2=\max_{\ket{\phi}\,\mathrm{product}}|\langle U^\dagger\phi|\psi\rangle|^2=\gamma(\ket{\psi}).
 \label{eq:A-gamma-LU}
\end{equation}
On $r$ copies,
$U^{\otimes r}=\bigotimes_jU_j^{\otimes r}$ is local with respect to the grouped partition, and the same argument gives
\begin{equation}
 \Gamma_r(U\ket{\psi})=\Gamma_r(\ket{\psi}).
 \label{eq:A-Gamma-LU}
\end{equation}
This proves Eq.~\eqref{eq:LU-overlap-invariance}.

Finally, fix any orthonormal product basis $\{\ket{e_x}\}_{x=1}^D$ of $\cH$.  Normalization gives
\begin{equation}
 \sum_{x=1}^{D}|\langle e_x|\psi\rangle|^2=1.
\end{equation}
At least one term is therefore at least $1/D$, and since the maximization defining $\gamma$ includes every vector in this product basis,
\begin{equation}
 \gamma(\ket{\psi})\ge D^{-1}.
 \label{eq:A-gamma-floor}
\end{equation}

\suppsection{\texorpdfstring{$\operatorname{AME}(6,2)$}{AME(6,2)} local-Pauli orbit bases}{app:ame}

We prove the two statements used in Eqs.~\eqref{eq:AME-orbit-basis}--\eqref{eq:AME-gamma}.  Recall that a pure state $\ket{\Psi}\in(\C^d)^{\otimes N}$ is absolutely maximally entangled, or $\operatorname{AME}(N,d)$, when every reduction to at most $\lfloor N/2\rfloor$ parties is maximally mixed.  Equivalently, the state is maximally entangled across every bipartition, with maximal Schmidt rank on the smaller side~\cite{Helwig2012}.  For $\operatorname{AME}(6,2)$, every three-qubit marginal is therefore $I_8/8$.

\subsection{Every \texorpdfstring{$\operatorname{AME}(6,2)$}{AME(6,2)} state generates a complete local-Pauli orbit basis}

Fix any three-party subset $A\subset\{1,\ldots,6\}$ and let $\bar A$ be its complement.  Here $\F_2=\{0,1\}$ with addition and multiplication modulo two. On one qubit define the Pauli shift and phase operators $X_2\ket q=\ket{q+1\bmod2}$ and $Z_2\ket q=(-1)^q\ket q$, and let $X_{2,j},Z_{2,j}$ denote their actions on party $j$. Define
\begin{equation}
 \mathcal P_A:=\left\{\bigotimes_{j\in A}X_{2,j}^{a_j}Z_{2,j}^{b_j}:a_j,b_j\in\F_2\right\},\qquad |\mathcal P_A|=4^3=64.
 \label{eq:C-Pauli-set}
\end{equation}
Also write $\rho_A:=\Tr_{\bar A}\proj{\Psi}=I_8/8$.  For $P_A,Q_A\in\mathcal P_A$, the AME property gives
\begin{equation}
 \langle\Psi|(P_A^\dagger Q_A\otimes I_{\bar A})|\Psi\rangle=\Tr(\rho_A P_A^\dagger Q_A)=\frac18\Tr(P_A^\dagger Q_A)=\delta_{P_A,Q_A}.
 \label{eq:C-AME-orbit-orthogonality}
\end{equation}
Here we used the Hilbert--Schmidt orthogonality of the three-qubit Pauli error basis.  Thus the $64$ states $(P_A\otimes I_{\bar A})\ket{\Psi}$ are mutually orthonormal.  Since the six-qubit Hilbert space has dimension $2^6=64$, they form a complete orthonormal basis.  Every orbit element differs from $\ket{\Psi}$ by a tensor product of one-qubit unitaries, so Eq.~\eqref{eq:LU-overlap-invariance} gives the same $\gamma$ and $\Gamma_r$ for all basis states.

\subsection{Strict product-overlap bound}

\begin{lemma}[Strict AME product-overlap bound]\label{lem:AME-product-strict}
If $\ket{\Psi}$ is $\operatorname{AME}(6,2)$, then
\begin{equation}
 \gamma(\ket{\Psi})
 =\max_{\substack{\ket{\phi_1},\ldots,\ket{\phi_6}\\ \langle\phi_j|\phi_j\rangle=1}}
 \left|\left\langle\bigotimes_{j=1}^{6}\phi_j\middle|\Psi\right\rangle\right|^2
 <\frac18.
 \label{eq:C-AME-strict}
\end{equation}
\end{lemma}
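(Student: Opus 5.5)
The plan is to prove the bound $\le 1/8$ from the $3|3$ Schmidt structure, then rule out equality using the remaining AME marginal constraints. Since the maximum in Eq.~\eqref{eq:C-AME-strict} is attained by compactness (as noted after Eq.~\eqref{eq:Gamma-state}), it suffices to show that no product state reaches overlap exactly $1/8$.

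\emph{Step 1 (the bound $\le 1/8$).} Fix any three-party subset $A$. Because $\rho_A=I_8/8$, the Schmidt decomposition of $\ket{\Psi}$ across $A|\bar A$ is flat:
\[
\ket{\Psi}=\frac1{\sqrt8}\sum_{i=1}^8\ket{a_i}\ket{b_i}.
\]
Equivalently, $\ket{\Psi}=(I\otimes U)\frac{1}{\sqrt8}\sum_i\ket{i}\ket{i}$ with $U$ unitary. For any product $\ket{\phi_A}\otimes\ket{\phi_{\bar A}}$, Cauchy--Schwarz gives
\[
|\langle\phi_A\otimes\phi_{\bar A}|\Psi\rangle|^2=\|(\langle\phi_A|\otimes I)\ket{\Psi}\|^2\,|\langle\phi_{\bar A}|\hat\xi\rangle|^2\le\tfrac18 .
\]
Here $\hat\xi$ is the normalized contracted vector, and the flat Schmidt spectrum gives $\|(\langle\phi_A|\otimes I)\ket{\Psi}\|^2=1/8$ for every unit $\ket{\phi_A}$. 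Equality holds iff $(\langle\phi_A|\otimes I)\ket{\Psi}=c\,\ket{\phi_{\bar A}}$ with $|c|^2=1/8$. Hence a maximizer $\ket{\phi}=\bigotimes_j\ket{\phi_j}$ with overlap $1/8$ must satisfy this condition simultaneously for \emph{every} three-party subset $A$.

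\emph{Step 2 (normalization).} Apply local unitaries $V_j$ with $V_j\ket{\phi_j}=\ket0$. The AME property and overlaps are local-unitary invariant, so we may assume $\ket{\phi}=\ket{000000}$, with $|\Psi_{000000}|^2=1/8$ in the computational basis. For each three-set $A$, the equality condition says $\langle 0_A x_{\bar A}|\Psi\rangle=0$ for every $x\neq000$. Ranging over all $A$, every computational string with at least three zeros, other than $000000$, has vanishing amplitude. In other words, all strings of Hamming weight $1,2,3$ vanish.

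\emph{Step 3 (contradiction from two-party marginals).} For a pair $\{j,k\}$, AME gives $\rho_{jk}=I_4/4$, so $\langle00|\rho_{jk}|00\rangle=1/4$. This diagonal entry is the sum of $|\Psi_s|^2$ over strings $s$ with $s_j=s_k=0$. By Step 2 the only surviving strings are $000000$, contributing $1/8$, and the unique weight-4 string $s^{(jk)}$ with zeros exactly at $j,k$. Hence $|\Psi_{s^{(jk)}}|^2=1/8$ for each of the $15$ pairs. Then
\[
\|\Psi\|^2\ge\tfrac18+15\cdot\tfrac18>1,
\]
contradicting normalization. Therefore no product state attains $1/8$, and $\gamma(\ket{\Psi})<1/8$. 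Together with Eq.~\eqref{eq:LU-overlap-invariance}, this also gives $\gamma(\cB_A(\Psi))=\gamma(\ket{\Psi})$.

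\emph{Main obstacle.} The delicate point is Step 1's equality characterization. One must check that equality in the Cauchy--Schwarz step forces the contracted vector to be exactly proportional to the product $\ket{\phi_{\bar A}}$ built from the \emph{same} local factors $\ket{\phi_j}$, for every $3|3$ cut simultaneously. This holds because the single maximizing product state is used for all cuts. After that, the counting in Step 3 is routine.
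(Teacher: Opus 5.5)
Your proof is correct. Its first two steps agree in substance with the paper's: the flat-Schmidt bound across a $3|3$ cut, normalization of a putative maximizer to $\ket{0}^{\otimes 6}$, and the conclusion that every amplitude of Hamming weight $1$, $2$, or $3$ vanishes. You get the vanishing from the equality case of Cauchy--Schwarz across each cut. The paper phrases it as a unit-modulus entry of the unitary coefficient matrix $U_S=\sqrt8\,[c_{\mathbf a,\mathbf b}]$ killing the rest of its row and column, which is the same fact.

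You diverge at the final contradiction. The paper stays with the single cut $\{1,2,3\}|\{4,5,6\}$ and uses off-diagonal (orthogonality) information. The rows of $U_{S_0}$ labelled $100$, $010$, $001$ can then be supported only on column $111$, so they cannot be orthonormal. You instead use only diagonal, purely classical information from the two-qubit marginals, $\langle 00|\rho_{jk}|00\rangle=1/4$. This forces each of the $15$ weight-four amplitudes to carry probability $1/8$, so the total norm reaches $2$.

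Your route is a weight-distribution counting argument and shows how badly equality fails. It does call on many pairwise marginals, though at least eight pairs already suffice to exceed normalization. The paper's route, once the low-weight amplitudes are gone, needs only the unitarity of one $8\times 8$ matrix. Either argument proves the lemma.
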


\begin{proof}
The normalized fully product states form a compact set and the overlap is continuous, so the maximum is attained.  Across any $3|3$ cut $S|\bar S$, the AME property gives eight equal Schmidt coefficients $1/\sqrt8$.  Every fully product state is product across this cut, hence its squared overlap with $\ket{\Psi}$ is at most $1/8$.

Suppose equality is attained by a fully product state.  Apply local one-qubit unitaries mapping that product state to $\ket{0}^{\otimes6}$.  The AME property is invariant under local unitaries, so, relabeling the transformed state as $\ket{\Psi}$, write
\begin{equation}
 \ket{\Psi}=\sum_{\mathbf{x}\in\F_2^6}c_{\mathbf{x}}\ket{\mathbf{x}},\qquad |c_{\mathbf{0}}|^2=\frac18.
 \label{eq:C-AME-zero}
\end{equation}
Here $\mathbf{x}=(x_1,\ldots,x_6)$, $\ket{\mathbf{x}}:=\bigotimes_{j=1}^6\ket{x_j}$, and $\mathbf{0}:=(0,\ldots,0)$. For later use, $\operatorname{supp}(\mathbf{x})$ and $\operatorname{wt}(\mathbf{x})$ denote the support and Hamming weight of the bit string $\mathbf{x}$.
Fix a three-element subset $S$.  Writing $\mathbf{x}=(\mathbf{a},\mathbf{b})$ according to the cut $S|\bar S$, define the $8\times8$ coefficient matrix
\begin{equation}
 U_S:=\sqrt8\,[c_{\mathbf{a},\mathbf{b}}]_{\mathbf{a}\in\F_2^S,\,\mathbf{b}\in\F_2^{\bar S}}.
 \label{eq:C-AME-unitary}
\end{equation}
Since $\rho_S=I_8/8$, one has $U_SU_S^\dagger=I_8$, so $U_S$ is unitary.  Its $(0,0)$ entry has modulus one by Eq.~\eqref{eq:C-AME-zero}.  In a unitary matrix, an entry of modulus one forces every other entry in the same row and column to vanish.

Now let $\mathbf{x}\ne\mathbf{0}$ have Hamming weight at most three.  Choose a three-element set $S$ containing $\operatorname{supp}(\mathbf{x})$.  Then $\mathbf{x}=(\mathbf{a},\mathbf{0})$ with $\mathbf{a}\ne\mathbf{0}$, and $c_{\mathbf{x}}$ lies in the zero column of $U_S$ away from its $(\mathbf{0},\mathbf{0})$ entry.  Hence
\begin{equation}
 c_{\mathbf{x}}=0,\qquad 1\le \operatorname{wt}(\mathbf{x})\le3.
 \label{eq:C-AME-lowweight-zero}
\end{equation}
Finally use the fixed cut $S_0=\{1,2,3\}$ and $\bar S_0=\{4,5,6\}$.  In each of the three rows $\mathbf{a}=(1,0,0),(0,1,0),(0,0,1)$, Eq.~\eqref{eq:C-AME-lowweight-zero} forces every entry to vanish except possibly the column $\mathbf{b}=(1,1,1)$: if $\mathbf{b}\ne(1,1,1)$, then $\operatorname{wt}(\mathbf{a},\mathbf{b})\le3$.  But every row of the unitary $U_{S_0}$ has norm one, so each of these three rows must have a nonzero entry in that same single column.  Such rows cannot be mutually orthogonal, contradicting unitarity.  Therefore equality with $1/8$ is impossible, and Eq.~\eqref{eq:C-AME-strict} follows.
\end{proof}

Because every basis state in $\cB_A(\Psi)$ is a local-Pauli translate of $\ket{\Psi}$, local-unitary invariance yields
\begin{equation}
 \gamma(\cB_A(\Psi))=\gamma(\ket{\Psi})<\frac18.
 \label{eq:C-AME-basis-gamma}
\end{equation}
With $D=64$, Proposition~\ref{prop:trace-overlap} therefore gives
\begin{equation}
 P_{\ind}^{\SEP}(2;\cB_A(\Psi))
 \le64\,\gamma(\cB_A(\Psi))^2<1,
 \qquad
 N_{\ind}^{\SEP}(\cB_A(\Psi))\ge3,
 \label{eq:C-AME-copy}
\end{equation}
for every $\operatorname{AME}(6,2)$ state and every choice of three-party subset $A$.

\suppsection{Moment-to-overlap uniformization}{app:uniformization}

We now prove Lemma~\ref{lem:moment-to-overlap}.  Retain its notation: $\Theta$ is a finite parameter set, $\{\ket{\Psi_\theta}:\theta\in\Theta\}$ are normalized states on $\cK=\bigotimes_{j=1}^n\C^{q_j}$, and the integer $m\ge1$ and constant $A_m\ge0$ satisfy the pointwise moment bound in Eq.~\eqref{eq:S-pointwise-family-moment}.  The purpose is to pass from that pointwise estimate to one state whose overlap is small for all normalized product states simultaneously.

For a finite-dimensional Hilbert space $\mathcal V$, write $\mathbb S(\mathcal V):=\{\ket v\in\mathcal V:\langle v|v\rangle=1\}$. Let $\mu_j$ be the unique unitarily invariant probability measure on $\mathbb S(\C^{q_j})$, and let $\mu_{\mathrm{prod}}:=\mu_1\otimes\cdots\otimes\mu_n$. We regard $\mu_{\mathrm{prod}}$ as a probability measure on the manifold of normalized product states
\begin{equation}
 \mathcal P_{\mathbf q}:=\left\{\ket\Phi=\bigotimes_{j=1}^n\ket{\phi_j}:\ket{\phi_j}\in\mathbb S(\C^{q_j})\right\}.
 \label{eq:B-product-manifold}
\end{equation}
For each $\theta\in\Theta$, define the product-Haar moment by the integral over this explicitly specified domain,
\begin{equation}
 R_m(\theta):=\int_{\mathcal P_{\mathbf q}}|\langle\Phi|\Psi_\theta\rangle|^{2m}\,d\mu_{\mathrm{prod}}(\Phi).
 \label{eq:B-Rm}
\end{equation}
Since the assumed pointwise bound holds for every $\ket\Phi\in\mathcal P_{\mathbf q}$ and $\mu_{\mathrm{prod}}$ is normalized, averaging gives
\begin{equation}
 \frac1{|\Theta|}\sum_{\theta\in\Theta}R_m(\theta)=\int_{\mathcal P_{\mathbf q}}\frac1{|\Theta|}\sum_{\theta\in\Theta}|\langle\Phi|\Psi_\theta\rangle|^{2m}\,d\mu_{\mathrm{prod}}(\Phi)\le A_m.
 \label{eq:B-average-R}
\end{equation}
The interchange of the finite sum and the integral is immediate; equivalently, it follows from linearity of integration.
All $R_m(\theta)$ are nonnegative.  Hence at least one parameter value $\theta_*$ obeys
\begin{equation}
 R_m(\theta_*)\le A_m.
 \label{eq:B-good-seed-average}
\end{equation}
The remaining task is to convert the Haar average $R_m(\theta_*)$ into a uniform maximum over normalized product states.

\subsection{The local Haar moment operator}
For a $q$-dimensional Hilbert space, let $\Pi_{\mathrm{sym}}^{(m)}$ be the orthogonal projector onto the symmetric subspace $\Sym^m(\C^q)\subset(\C^q)^{\otimes m}$ and set
\begin{equation}
 s_{m,q}:=\dim\Sym^m(\C^q)=\binom{q+m-1}{m}.
 \label{eq:B-sym-dim}
\end{equation}

\begin{lemma}[Local Haar moment operator, see also Ref. \cite{Mele2024Haar}]\label{lem:B-Haar}
Let $\mu_q$ be the unitarily invariant probability measure on $\mathbb S(\C^q)$. Then
\begin{equation}
 \int_{\mathbb S(\C^q)}(\proj{\phi})^{\otimes m}\,d\mu_q(\phi)=\frac{\Pi_{\mathrm{sym}}^{(m)}}{s_{m,q}}.
 \label{eq:Haar-sym}
\end{equation}
\end{lemma}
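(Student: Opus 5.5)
The plan is to define the averaged operator
\[
T:=\int_{\mathbb S(\C^q)}(\proj{\phi})^{\otimes m}\,d\mu_q(\phi)
\]
and determine it using only two facts: unitary invariance and support in the symmetric subspace. The argument runs in three steps.

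\textbf{Step 1 (support).} For every unit vector $\ket\phi$, the vector $\ket\phi^{\otimes m}$ lies in $\Sym^m(\C^q)$. Hence each integrand satisfies
\[
\Pi_{\mathrm{sym}}^{(m)}(\proj\phi)^{\otimes m}\Pi_{\mathrm{sym}}^{(m)}=(\proj\phi)^{\otimes m}.
\]
Integrating gives $T=\Pi_{\mathrm{sym}}^{(m)}T\Pi_{\mathrm{sym}}^{(m)}$. So $T$ is a positive operator supported on $\Sym^m(\C^q)$. Its trace is $1$, since each integrand has unit trace and $\mu_q$ is a probability measure.

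\textbf{Step 2 (invariance).} Invariance of $\mu_q$ under $\phi\mapsto U\phi$ gives $U^{\otimes m}T(U^{\otimes m})^\dagger=T$ for every $U\in U(q)$. So $T$, restricted to $\Sym^m(\C^q)$, commutes with the representation $U\mapsto U^{\otimes m}|_{\Sym^m}$.

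\textbf{Step 3 (Schur).} That representation is irreducible: $\Sym^m(\C^q)$ is the irreducible $U(q)$-module of highest weight $(m,0,\ldots,0)$. Schur's lemma then forces $T=c\,\Pi_{\mathrm{sym}}^{(m)}$. The trace condition gives $c=1/s_{m,q}$, where $s_{m,q}=\binom{q+m-1}{m}$ is the dimension of the symmetric subspace. This is the standard count of monomials of degree $m$ in $q$ variables.

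\textbf{Main obstacle.} The only nontrivial input is the irreducibility in Step 3. If one wants to avoid representation theory, there is an elementary route. Any operator on $\Sym^m$ commuting with all $U^{\otimes m}$ commutes with the derived action of $\mathfrak{gl}_q$. This action contains the diagonal torus, whose weight spaces on $\Sym^m$ are one-dimensional and spanned by the monomial basis vectors. So $T$ is diagonal in the monomial basis. The raising and lowering operators $E_{ab}$ connect all monomials of degree $m$, which forces the diagonal entries to be equal.

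A second elementary alternative avoids commutants altogether. Compute $\langle v|T|v\rangle$ for $\ket v=\ket\psi^{\otimes m}$. By invariance this equals $\int|\langle\psi|\phi\rangle|^{2m}d\mu_q=1/s_{m,q}$, which follows from the Beta-distribution law of $|\langle\psi|\phi\rangle|^2$ for Haar vectors. Then use the fact that product vectors $\ket\psi^{\otimes m}$ span $\Sym^m$. By polarization this determines the Hermitian form of $T$ on $\Sym^m$, so $T=\Pi_{\mathrm{sym}}^{(m)}/s_{m,q}$.

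I would present the Schur argument as the main proof and cite Ref.~\cite{Mele2024Haar} for the irreducibility fact.
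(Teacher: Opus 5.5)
Your proposal is correct and follows the paper's proof: support of the averaged operator on $\Sym^m(\C^q)$, invariance under $U^{\otimes m}$, irreducibility of the symmetric power with Schur's lemma, and then normalization by the trace to get $c=1/s_{m,q}$. Your two alternatives are also valid (torus weights plus the $E_{ab}$ connectivity argument, or the Beta-moment computation combined with polarization), but the paper uses only the Schur argument.
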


\begin{proof}
Denote the explicitly bounded integral on the left of Eq.~\eqref{eq:Haar-sym} by $T_m$.  Every vector $\ket{\phi}^{\otimes m}$ is invariant under permutations of its $m$ tensor factors, so $T_m$ is supported on $\Sym^m(\C^q)$; equivalently,
$T_m=\Pi_{\mathrm{sym}}^{(m)}T_m\Pi_{\mathrm{sym}}^{(m)}$.
Moreover, Haar invariance implies
\begin{equation}
 U^{\otimes m}T_m(U^\dagger)^{\otimes m}=T_m
 \label{eq:B-Haar-invariance}
\end{equation}
for every $U\in U(q)$, where $U(q)$ denotes the unitary group on $\C^q$.  The representation $U\mapsto U^{\otimes m}$ restricted to $\Sym^m(\C^q)$ is irreducible.  Schur's lemma therefore gives
$T_m=c\,\Pi_{\mathrm{sym}}^{(m)}$ for some scalar $c$.  Taking traces fixes the scalar: the integrand has unit trace, hence $\Tr T_m=1$, while $\Tr\Pi_{\mathrm{sym}}^{(m)}=s_{m,q}$.  Thus $c=s_{m,q}^{-1}$, proving Eq.~\eqref{eq:Haar-sym}.
\end{proof}

\subsection{From the Haar average to the largest product overlap}
For the multipartite space $\cK=\bigotimes_{j=1}^n\C^{q_j}$, write
\begin{equation}
 \Pi_j:=\Pi_{\mathrm{sym},j}^{(m)},
 \qquad
 s_j:=s_{m,q_j},
 \qquad
 \Pi:=\bigotimes_{j=1}^n\Pi_j.
 \label{eq:B-global-Pi}
\end{equation}
Here the $m$ replicas are regrouped by laboratory, so the ambient space is canonically identified as
\begin{equation}
 \cK^{\otimes m}
 \simeq
 \bigotimes_{j=1}^n(\C^{q_j})^{\otimes m}.
 \label{eq:B-replica-regroup}
\end{equation}
This regrouping is only a permutation of tensor factors and does not change inner products or norms.

Take an arbitrary normalized product state
$\ket{\Phi}=\bigotimes_{j=1}^n\ket{\phi_j}$.  Under the regrouping in Eq.~\eqref{eq:B-replica-regroup},
\begin{equation}
 \ket{\Phi}^{\otimes m}
 =\bigotimes_{j=1}^n\ket{\phi_j}^{\otimes m}.
 \label{eq:B-product-power}
\end{equation}
Each local tensor power $\ket{\phi_j}^{\otimes m}$ is completely symmetric under permutations of the $m$ replicas.  Hence
\begin{equation}
 \Pi_j\ket{\phi_j}^{\otimes m}=\ket{\phi_j}^{\otimes m}
 \quad\text{for every }j,
\end{equation}
and therefore
\begin{equation}
 \Pi\ket{\Phi}^{\otimes m}=\ket{\Phi}^{\otimes m}.
 \label{eq:B-Pi-fixes-product}
\end{equation}
Consequently the symmetric projector may be inserted into the overlap:
\begin{equation}
 |\langle\Phi|\Psi_\theta\rangle|^{2m}=|\langle\Phi^{\otimes m}|\Psi_\theta^{\otimes m}\rangle|^2=|\langle\Phi^{\otimes m}|\Pi|\Psi_\theta^{\otimes m}\rangle|^2.
 \label{eq:B-insert-Pi}
\end{equation}
The first equality follows from
$\langle\Phi^{\otimes m}|\Psi_\theta^{\otimes m}\rangle
 =\langle\Phi|\Psi_\theta\rangle^m$;
the second follows from Eq.~\eqref{eq:B-Pi-fixes-product} and the Hermiticity of $\Pi$.  Applying Cauchy--Schwarz and using $\|\Phi^{\otimes m}\|=1$ gives
\begin{equation}
 |\langle\Phi|\Psi_\theta\rangle|^{2m}\le\|\Phi^{\otimes m}\|^2\|\Pi\ket{\Psi_\theta}^{\otimes m}\|^2=\|\Pi\ket{\Psi_\theta}^{\otimes m}\|^2.
 \label{eq:maximum-from-proj}
\end{equation}
Thus the norm of the symmetric projection is an upper bound independent of the particular normalized product state $\ket{\Phi}$.

It remains to identify that norm with the Haar average in Eq.~\eqref{eq:B-Rm}.  Writing the overlap as a matrix element,
\begin{equation}
 R_m(\theta)=\int_{\mathcal P_{\mathbf q}}\langle\Psi_\theta^{\otimes m}|(\proj{\Phi})^{\otimes m}|\Psi_\theta^{\otimes m}\rangle\,d\mu_{\mathrm{prod}}(\Phi).
 \label{eq:B-R-operator}
\end{equation}
After regrouping the replicas by laboratory,
\begin{equation}
 (\proj{\Phi})^{\otimes m}
 =\bigotimes_{j=1}^n(\proj{\phi_j})^{\otimes m}.
 \label{eq:B-projector-factor}
\end{equation}
Because the Haar measure is a product measure, the integral factorizes.  Lemma~\ref{lem:B-Haar} therefore gives
\begin{equation}
 \int_{\mathcal P_{\mathbf q}}(\proj{\Phi})^{\otimes m}\,d\mu_{\mathrm{prod}}(\Phi)=\bigotimes_{j=1}^n\left[\int_{\mathbb S(\C^{q_j})}(\proj{\phi_j})^{\otimes m}\,d\mu_j(\phi_j)\right]=\frac{\Pi}{\prod_{j=1}^ns_j}.
 \label{eq:B-product-Haar-operator}
\end{equation}
Substitution into Eq.~\eqref{eq:B-R-operator} yields
\begin{equation}
 R_m(\theta)=\frac{\langle\Psi_\theta^{\otimes m}|\Pi|\Psi_\theta^{\otimes m}\rangle}{\prod_js_j}=\frac{\|\Pi\ket{\Psi_\theta}^{\otimes m}\|^2}{\prod_js_j}.
 \label{eq:B-R-projection}
\end{equation}
The second equality uses $\Pi=\Pi^\dagger=\Pi^2$.

Combining Eqs.~\eqref{eq:maximum-from-proj} and \eqref{eq:B-R-projection}, we obtain, for every normalized product state $\ket{\Phi}$,
\begin{equation}
 |\langle\Phi|\Psi_\theta\rangle|^{2m}
 \le R_m(\theta)\prod_{j=1}^ns_j.
 \label{eq:B-pointwise-from-R}
\end{equation}
Now choose the parameter $\theta_*$ from Eq.~\eqref{eq:B-good-seed-average}.  Since the right-hand side of Eq.~\eqref{eq:B-pointwise-from-R} no longer depends on $\ket{\Phi}$, we may maximize the left-hand side over all normalized product states and obtain
\begin{equation}
 \max_{\ket{\Phi}\,\mathrm{product}}
 |\langle\Phi|\Psi_{\theta_*}\rangle|^{2m}
 \le A_m\prod_{j=1}^n\binom{q_j+m-1}{m}.
 \label{eq:B-max-mth}
\end{equation}
Finally, the left-hand side is the $m$th power of the maximal squared product overlap.  Taking the $m$th root gives exactly Eq.~\eqref{eq:S-moment-to-overlap} and proves Lemma~\ref{lem:moment-to-overlap}.

\suppsection{Stabilizer moments and individual-copy complexity}{app:stabmoment}

We first review the $p$-ary stabilizer notation needed in both Theorems~\ref{thm:individual-unbounded} and~\ref{thm:stabilizer-collective}.  Let $p$ be an odd prime, write $\F_p$ for the field with $p$ elements, and set $\omega=e^{2\pi i/p}$.  Relative to the computational basis $\{\ket q:q\in\F_p\}$ of one $p$-level system, define the generalized shift and phase operators
\begin{equation}
 X_p\ket q=\ket{q+1},\qquad Z_p\ket q=\omega^q\ket q,
 \label{eq:D-XZ}
\end{equation}
where addition is modulo $p$.  For $(a,b)\in\F_p^2$, define the Weyl operator $W_p(a,b):=X_p^aZ_p^b$.  Directly from Eq.~\eqref{eq:D-XZ},
\begin{equation}
 W_p(a,b)W_p(c,d)=\omega^{bc-ad}W_p(c,d)W_p(a,b).
 \label{eq:D-weylcomm}
\end{equation}
The $n$-qudit generalized Pauli group is generated, up to phases, by tensor products $\bigotimes_{j=1}^nW_p(a_j,b_j)$.  A normalized state is called a \emph{stabilizer state} if it is a simultaneous eigenstate of a maximal Abelian subgroup of this Pauli group.  Such a subgroup is generated by $n$ independent commuting Pauli operators and has one-dimensional joint eigenspaces; its complete joint eigenbasis will be called a \emph{maximal-stabilizer eigenbasis}.  We denote by $\Stab_{n,p}$ the finite set of normalized stabilizer states on $(\C^p)^{\otimes n}$.

Fix an integer moment order $m\ge1$ with $n\ge m-1$.  The derivation uses the finite stabilizer-moment identity of Gross, Nezami, and Walter~\cite{GrossNezamiWalter2021}; all subsequent estimates are derived explicitly.

\subsection{Finite stabilizer moment identity}
On $\F_p^m\oplus\F_p^m$, write $\mathbf{x}\!\cdot\!\mathbf{y}:=\sum_{a=1}^m x_ay_a$ and define the split quadratic form $q$ together with its associated symmetric bilinear form $B$ by
\begin{equation}
 q(\mathbf{x},\mathbf{y})=\mathbf{x}\cdot\mathbf{x}-\mathbf{y}\cdot\mathbf{y},\qquad B((\mathbf{x},\mathbf{y}),(\mathbf{x}',\mathbf{y}'))=\mathbf{x}\cdot\mathbf{x}'-\mathbf{y}\cdot\mathbf{y}'.
 \label{eq:D-split-form}
\end{equation}
Let $\Sigma_{m,m}(p)$ be the set of $m$-dimensional subspaces $T\le\F_p^m\oplus\F_p^m$ that are totally isotropic for $B$, meaning $B(v,w)=0$ for all $v,w\in T$, and contain the stochastic vector $(\mathbf1,\mathbf1)$, where $\mathbf1=(1,\ldots,1)\in\F_p^m$.  For $\mathbf{x}=(x_1,\ldots,x_m)\in\F_p^m$, write $\ket{\mathbf{x}}=\bigotimes_{a=1}^m\ket{x_a}$ for the corresponding computational-basis vector in $(\C^p)^{\otimes m}$, and define
\begin{equation}
 r(T):=\sum_{(\mathbf{x},\mathbf{y})\in T}\ket{\mathbf{x}}\!\bra{\mathbf{y}},\qquad R_n(T):=r(T)^{\otimes n}.
 \label{eq:D-rT}
\end{equation}
Here $r(T)$ is standard notation for this one-site moment operator and is unrelated to the copy number $r$ in the discrimination problem. Via the Theorem 5.3 of  Ref. \cite{GrossNezamiWalter2021},  for $n\ge m-1$, the exact stabilizer moment identity is
\begin{equation}
 \frac1{|\Stab_{n,p}|}\sum_{\ket S\in\Stab_{n,p}}(\proj S)^{\otimes m}=\frac1{Z_{n,p,m}}\sum_{T\in\Sigma_{m,m}(p)}R_n(T),
 \label{eq:exact-stab-moment}
\end{equation}
where
\begin{equation}
 |\Sigma_{m,m}(p)|=\prod_{k=0}^{m-2}(p^k+1),\qquad Z_{n,p,m}=p^n\prod_{k=0}^{m-2}(p^k+p^n).
 \label{eq:Sigma-Z}
\end{equation}

\subsection{Tensor-power matrix-element bound}
We require a bound on the matrix element of $r(T)$ evaluated on a replica tensor power.  The only combinatorial input is the two-base case of the vector-matroid partition theorem~\cite{Edmonds1965}: a multiset of $2m$ vectors in an $m$-dimensional vector space can be partitioned into two bases if every submultiset $E'$ obeys $|E'|\le2\dim\operatorname{span}E'$.  We verify this condition directly from total isotropy.

\begin{lemma}[Product-power matrix-element bound]\label{lem:rT-contraction}
For every $T\in\Sigma_{m,m}(p)$ and every normalized state $\ket b\in\C^p$,
\begin{equation}
 |\langle b^{\otimes m}|r(T)|b^{\otimes m}\rangle|\le1.
 \label{eq:rT-contraction}
\end{equation}
\end{lemma}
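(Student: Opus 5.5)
The plan is to expand the matrix element as a sum over $T$ and then split each summand into two factors. Each factor should be a product over coordinates that form a basis of the dual space $T^*$. Once this is done, the Cauchy--Schwarz inequality together with normalization of $\ket b$ should give the bound.

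\textbf{Expansion.} Write $\ket b=\sum_q b_q\ket q$. Then
\begin{equation}
\langle b^{\otimes m}|r(T)|b^{\otimes m}\rangle=\sum_{t=(\mathbf x,\mathbf y)\in T}\prod_{a=1}^m\overline{b_{x_a}}\,b_{y_a}.
\end{equation}
Regard the $2m$ coordinate functionals $\xi_1,\ldots,\xi_m$ (reading $x_a$) and $\eta_1,\ldots,\eta_m$ (reading $y_a$), restricted to $T$, as a multiset $E$ of $2m$ vectors in the $m$-dimensional space $T^*$. Each summand is a product over $e\in E$ of $c_e(e(t))$, where $c_e$ is either $\bar b$ or $b$. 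In every case $|c_e(q)|=|b_q|$.

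\textbf{Matroid step.} Suppose $E=E_1\sqcup E_2$ with $E_1$ and $E_2$ both bases of $T^*$. Set $F_i(t):=\prod_{e\in E_i}c_e(e(t))$. Because $E_i$ is a basis, the map $t\mapsto(e(t))_{e\in E_i}$ is a bijection $T\to\F_p^m$. Hence
\begin{equation}
\sum_{t\in T}|F_i(t)|^2=\prod_{e\in E_i}\sum_{q\in\F_p}|b_q|^2=1.
\end{equation}
Cauchy--Schwarz then gives $|\sum_t F_1(t)F_2(t)|\le1$, which is the claimed bound.

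\textbf{Verifying the Edmonds condition.} By the two-base case of the matroid partition theorem quoted above, it suffices to show $|E'|\le 2\dim\operatorname{span}E'$ for every submultiset $E'\subseteq E$. I expect this to be the main point, and I would derive it from total isotropy as follows.

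Let $U\le T$ be the common kernel of $E'$, so that $\dim\operatorname{span}E'=m-\dim U$. Every functional in $E'$ vanishes on $U$. Hence $|E'|\le 2m-|\operatorname{supp}U|$, where $\operatorname{supp}U$ is the set of the $2m$ coordinates not identically zero on $U$. It therefore suffices to prove
\begin{equation}
\dim U\le \tfrac12|\operatorname{supp}U|.
\end{equation}

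To see this, project $U$ onto the coordinates in $S=\operatorname{supp}U$. This projection is injective, since the other coordinates vanish on $U$. The restriction of $B$ to $\F_p^S$ is the diagonal form with entries $\pm1$, which is nondegenerate because $p$ is odd. The form $B$ on $U$ depends only on the $S$-coordinates, so the image of $U$ is totally isotropic in this nondegenerate $|S|$-dimensional space. A totally isotropic subspace of a nondegenerate space has dimension at most $|S|/2$, so $\dim U\le|S|/2$.

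This establishes the partition condition and completes the argument. The condition $(\mathbf 1,\mathbf 1)\in T$ is not needed; only $\dim T=m$ and isotropy are used.
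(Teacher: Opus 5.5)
Your proof is correct and follows the paper's argument essentially step for step. Both expand the matrix element over $T$, verify Edmonds' two-base condition via the bound $2\dim U\le|\operatorname{supp}U|$ for a totally isotropic $U$ in the nondegenerate coordinate subspace, and then apply Cauchy--Schwarz with the bijection $t\mapsto(e(t))_{e\in E_i}$. The only cosmetic difference is that you work intrinsically in $T^*$, taking $U$ to be the common kernel of $E'$. The paper instead parametrizes $T\cong\F_p^m$ and uses annihilators $V=W^\perp$.

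Your remark that $(\mathbf 1,\mathbf 1)\in T$ is not needed is also right. The paper invokes it only to rule out identically vanishing coordinate forms, and the isotropy bound applied to $U=T$ already forces $|\operatorname{supp}T|=2m$.
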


\begin{proof}
Choose a linear isomorphism $L:\F_p^m\to T$ and write $L(\mathbf t)=(A\mathbf t,B\mathbf t)$.  Let $\ell_1,\ldots,\ell_{2m}\in(\F_p^m)^*$ be the coordinate linear forms obtained from the rows of $A$ and $B$.  Since $(\mathbf1,\mathbf1)\in T$, no coordinate vanishes identically on $T$, and hence none of the $\ell_i$ is the zero functional.

Let $V\le\F_p^m$, put $U=L(V)\le T$, and let $S(V)$ be the set of coordinate positions on which $U$ is not identically zero.  The restriction of the split form to the coordinate space supported on $S(V)$ is nondegenerate.  Since $U$ is totally isotropic in that space, $U\subseteq U^\perp$.  Nondegeneracy gives $\dim U+\dim U^\perp=|S(V)|$, and therefore
\begin{equation}
 2\dim V=2\dim U\le |S(V)|.
 \label{eq:support-rank}
\end{equation}
Now take a subspace $W\le(\F_p^m)^*$ and let $V=W^\perp:=\{\mathbf t\in\F_p^m:\ell(\mathbf t)=0\text{ for every }\ell\in W\}$ be its annihilator.  Every coordinate form $\ell_i\in W$ vanishes on $V$, so $i\notin S(V)$.  Therefore
\begin{equation}
 \#\{i:\ell_i\in W\}\le2m-|S(V)|\le2m-2\dim V=2\dim W.
 \label{eq:matroid-condition}
\end{equation}
For any submultiset $E'$ of the coordinate forms, choose $W=\operatorname{span}E'$.  Equation~\eqref{eq:matroid-condition} gives $|E'|\le2\dim\operatorname{span}E'$, so the partition theorem yields a disjoint decomposition $\{1,\ldots,2m\}=I\sqcup J$ for which both $\{\ell_i:i\in I\}$ and $\{\ell_i:i\in J\}$ are bases of $(\F_p^m)^*$.

Write $b_a=\langle a|b\rangle$.  Parameterizing $T$ by $L(\mathbf t)$ and taking absolute values gives
\begin{equation}
 |\langle b^{\otimes m}|r(T)|b^{\otimes m}\rangle|\le\sum_{\mathbf t\in\F_p^m}\left(\prod_{i\in I}|b_{\ell_i(\mathbf t)}|\right)\left(\prod_{i\in J}|b_{\ell_i(\mathbf t)}|\right).
 \label{eq:D-before-CS}
\end{equation}
Cauchy--Schwarz bounds the right-hand side by the product of
\begin{equation}
 \left[\sum_{\mathbf t\in\F_p^m}\prod_{i\in I}|b_{\ell_i(\mathbf t)}|^2\right]^{1/2},\qquad \left[\sum_{\mathbf t\in\F_p^m}\prod_{i\in J}|b_{\ell_i(\mathbf t)}|^2\right]^{1/2}.
 \label{eq:D-CS-factors}
\end{equation}
Because the $I$-forms constitute a basis of the dual space, $\mathbf t\mapsto(\ell_i(\mathbf t))_{i\in I}$ is a bijection of $\F_p^m$.  Hence
\begin{equation}
 \sum_{\mathbf t\in\F_p^m}\prod_{i\in I}|b_{\ell_i(\mathbf t)}|^2=\prod_{i\in I}\left(\sum_{a\in\F_p}|b_a|^2\right)=1,
 \label{eq:D-I-factor}
\end{equation}
and the same holds for $J$.  This proves Eq.~\eqref{eq:rT-contraction}.
\end{proof}

Only the matrix-element estimate in Lemma~\ref{lem:rT-contraction} is used below; no operator-norm bound $\|r(T)\|\le1$ is assumed.

\subsection{Product-state stabilizer moment}
Let $\ket\beta=\bigotimes_{j=1}^n\ket{b_j}$ be a normalized product state across the physical qudits.  Since $R_n(T)=r(T)^{\otimes n}$, Lemma~\ref{lem:rT-contraction} implies
\begin{equation}
 |\langle\beta^{\otimes m}|R_n(T)|\beta^{\otimes m}\rangle|=\prod_{j=1}^n|\langle b_j^{\otimes m}|r(T)|b_j^{\otimes m}\rangle|\le1.
 \label{eq:D-Rn-contraction}
\end{equation}
Taking the $\ket\beta^{\otimes m}$ matrix element of Eq.~\eqref{eq:exact-stab-moment}, applying the triangle inequality on the right, and using Eq.~\eqref{eq:D-Rn-contraction} gives
\begin{equation}
 \frac1{|\Stab_{n,p}|}\sum_{\ket S\in\Stab_{n,p}}|\langle\beta|S\rangle|^{2m}\le\frac{|\Sigma_{m,m}(p)|}{Z_{n,p,m}}=\frac{\prod_{k=0}^{m-2}(p^k+1)}{p^n\prod_{k=0}^{m-2}(p^k+p^n)}.
 \label{eq:D-exact-product-moment}
\end{equation}
For $0\le k\le m-2$, $p^k+1\le2p^k$ and $p^k+p^n\ge p^n$.  Therefore
\begin{equation}
 \frac{|\Sigma_{m,m}(p)|}{Z_{n,p,m}}\le2^{m-1}p^{-nm+(m-1)(m-2)/2},
 \label{eq:D-stab-moment-simple}
\end{equation}
which proves Eq.~\eqref{eq:stabilizer-Am-main}.

\subsection{Detailed asymptotic optimization and proof of Theorem~\ref{thm:individual-unbounded}}\label{app:thm1}
Apply Lemma~\ref{lem:moment-to-overlap} with $q_j=p$ to Eq.~\eqref{eq:D-stab-moment-simple}.  There exists a stabilizer state $\ket{S_*}$ such that
\begin{equation}
 \gamma(\ket{S_*})\le2^{1-1/m}p^{-n+(m-1)(m-2)/(2m)}\binom{p+m-1}{m}^{n/m}.
 \label{eq:D-gamma-direct}
\end{equation}
Write $\gamma(\ket{S_*})=p^{-n+\Delta_*}$, with $\Delta_*\ge0$ because $\gamma\ge p^{-n}$.  Taking base-$p$ logarithms yields
\begin{equation}
 \Delta_*\le \frac{n}{m}\log_p\binom{p+m-1}{m}+\frac{(m-1)(m-2)}{2m}+\left(1-\frac1m\right)\log_p2.
 \label{eq:D-Delta-bound}
\end{equation}
We determine the asymptotically optimal moment order. Since $p$ is fixed,
\begin{equation}
 \log_p\binom{p+m-1}{m}=\sum_{a=1}^{p-1}\log_p(m+a)-\log_p((p-1)!)=(p-1)\log_pm+O_p(1),
 \label{eq:binom-fixedp}
\end{equation}
and in fact the error is $-\log_p((p-1)!)+O_p(m^{-1})$.  Let $L_n:=\log_p n$ and set temporarily
\begin{equation}
 m=A\sqrt{nL_n},\qquad A>0 \text{ fixed}.
 \label{eq:D-m-scale}
\end{equation}
Then
\begin{equation}
 \log_pm=\frac12L_n+\frac12\log_pL_n+\log_pA,
 \label{eq:D-logm-expansion}
\end{equation}
so the first term in Eq.~\eqref{eq:D-Delta-bound} satisfies
\begin{equation}
 \frac{n}{m}\log_p\binom{p+m-1}{m}=\frac{p-1}{2A}\sqrt{nL_n}+O_p\!\left(\sqrt{\frac n{L_n}}\,\ln L_n\right).
 \label{eq:D-first-asymptotic}
\end{equation}
The second term has the exact expansion
\begin{equation}
 \frac{(m-1)(m-2)}{2m}=\frac m2-\frac32+\frac1m=\frac A2\sqrt{nL_n}+O(1),
 \label{eq:D-second-asymptotic}
\end{equation}
while the final term in Eq.~\eqref{eq:D-Delta-bound} is $O_p(1)$.  Consequently
\begin{equation}
 \Delta_*\le\left[\frac{p-1}{2A}+\frac A2+o(1)\right]\sqrt{n\log_p n}.
 \label{eq:D-A-bound}
\end{equation}
The function $f(A)=(p-1)/(2A)+A/2$ has $f'(A)=-(p-1)/(2A^2)+1/2$ and a unique minimum at $A=\sqrt{p-1}$, where $f(A)=\sqrt{p-1}$.  We therefore choose
\begin{equation}
 m_n=\left\lceil\sqrt{(p-1)n\log_p n}\right\rceil.
 \label{eq:D-m-choice}
\end{equation}
Rounding changes $m_n$ by $O(1)$ and hence only contributes $o(\sqrt{n\ln n})$ to Eq.~\eqref{eq:D-Delta-bound}.  Moreover $m_n=o(n)$, so the condition $n\ge m_n-1$ required by the moment identity holds for all sufficiently large $n$.  Thus
\begin{equation}
 \Delta_*\le(\sqrt{p-1}+o(1))\sqrt{n\log_p n}.
 \label{eq:Delta-opt}
\end{equation}

Choose a maximal stabilizer group for which $\ket{S_*}$ is a simultaneous eigenstate, and let $\cB_n$ be its complete joint eigenbasis. As  this basis lies in some a local-Weyl orbit, so every basis state is related to $\ket{S_*}$ by a tensor product of one-qudit Weyl unitaries, up to an overall phase.  Local-unitary invariance therefore gives $\gamma(\cB_n)=\gamma(\ket{S_*})=p^{-n+\Delta_*}$.  Proposition~\ref{prop:trace-overlap} yields
\begin{equation}
 P_{\ind}^{\SEP}(r;\cB_n)\le p^{-n+r\Delta_*}.
 \label{eq:stab-P-delta}
\end{equation}
If $\Delta_*=0$, the right-hand side is $p^{-n}<1$ for every finite $r$ and the desired lower bound is immediate.  Otherwise, every integer $r<n/\Delta_*$ is excluded from perfect discrimination.  Combining this with Eq.~\eqref{eq:Delta-opt} gives
\begin{equation}
 N_{\ind}^{\SEP}(\cB_n)\ge\left(\frac1{\sqrt{p-1}}-o(1)\right)\sqrt{\frac{n}{\log_p n}},
\end{equation}
which is Eq.~\eqref{eq:thm1-N}.  Finally, if $r=o(\sqrt{n/\log_p n})$, then $r\Delta_*=o(n)$ by Eq.~\eqref{eq:Delta-opt}, and Eq.~\eqref{eq:stab-P-delta} becomes
\begin{equation}
 P_{\ind}^{\SEP}(r;\cB_n)\le p^{-n+o(n)}=D^{-1+o(1)},
\end{equation}
proving Eq.~\eqref{eq:thm1-strong}.

\suppsection{Odd-prime stabilizer bases and collective readout}{app:stabilizerreadout}

We prove Theorem~\ref{thm:stabilizer-collective}, retaining the Weyl--Pauli conventions introduced in SM-D.  The aim is to express the protocol in notation natural for local state discrimination.  We reserve $r$ for the copy number in the performance quantities $P_{\alpha}^{\mathsf M}$ and $N_{\alpha}^{\mathsf M}$.

\subsection{Maximal-stabilizer eigenbases and local incompatibility}
Let $\cB_{\mathrm{stab}}=\{\ket{\psi_{\mathbf y}}:\mathbf y\in\F_p^n\}$ be a maximal-stabilizer eigenbasis.  Choose independent commuting Pauli generators $S_1,\ldots,S_n$ and label their one-dimensional joint eigenspaces by
\begin{equation}
 S_\ell\ket{\psi_{\mathbf y}}=\omega^{y_\ell}\ket{\psi_{\mathbf y}},\qquad \mathbf y=(y_1,\ldots,y_n)\in\F_p^n,\quad \ell=1,\ldots,n.
 \label{eq:stab-labels}
\end{equation}
Thus the unknown classical label is the stabilizer syndrome $\mathbf y$.

Factor each global generator as $S_\ell=\bigotimes_{j=1}^nS_\ell^{[j]}$.  Each local factor is a phase times a one-qudit Weyl operator, so write
\begin{equation}
 S_\ell^{[j]}=\omega^{\theta_{j\ell}}W_p(a_{j\ell},b_{j\ell}),\qquad \theta_{j\ell}\in\F_p,
 \label{eq:E-local-Weyl}
\end{equation}
and define the local commutation exponent
\begin{equation}
 \kappa_j(\ell,k):=b_{j\ell}a_{jk}-a_{j\ell}b_{jk}\in\F_p.
 \label{eq:E-kappa}
\end{equation}
By Eq.~\eqref{eq:D-weylcomm}, $S_\ell^{[j]}S_k^{[j]}=\omega^{\kappa_j(\ell,k)}S_k^{[j]}S_\ell^{[j]}$.  Global commutativity of $S_\ell$ and $S_k$ implies
\begin{equation}
 \sum_{j=1}^n\kappa_j(\ell,k)=0\pmod p,
 \label{eq:E-global-commutation}
\end{equation}
but the individual terms need not vanish.  Hence globally compatible stabilizer generators can remain locally incompatible at a given laboratory.  The collective protocol removes exactly these local commutation phases.

\subsection{Scalar-weight collective protocol}
First, we present a lemma that will be need below.
\begin{lemma}[Spectral projectors of the grouped Pauli observables]
\label{lem:E-spectral-projectors}
Let $R$ be a unitary operator satisfying $R^p=I$, where $p$ is
prime and $\omega=e^{2\pi i/p}$.  For $s\in\F_p$, define
\begin{equation}
 \Pi_R(s):=\frac1p\sum_{k=0}^{p-1}\omega^{-sk}R^k .
 \label{eq:E-cyclic-projector}
\end{equation}
Then $\Pi_R(s)$ is the orthogonal projector onto the eigenspace of
$R$ with eigenvalue $\omega^s$.  In particular,
\begin{equation}
 \Pi_R(s)\Pi_R(t)=\delta_{s,t}\Pi_R(s),
 \qquad
 \sum_{s\in\F_p}\Pi_R(s)=I .
 \label{eq:E-projector-relations}
\end{equation}
\end{lemma}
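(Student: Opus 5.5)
The plan is to use the spectral theorem for the unitary $R$ together with the standard orthogonality of characters of $\mathbb Z_p$.

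First, since $R^p=I$, the minimal polynomial of $R$ divides $z^p-1$. This polynomial has the $p$ distinct roots $\omega^t$, $t\in\F_p$. Because $R$ is unitary, hence normal, it admits a spectral decomposition
\begin{equation*}
 R=\sum_{t\in\F_p}\omega^t P_t ,
\end{equation*}
where $P_t$ is the orthogonal projector onto the $\omega^t$-eigenspace (possibly zero). The projectors satisfy $P_tP_{t'}=\delta_{t,t'}P_t$ and $\sum_tP_t=I$. Consequently $R^k=\sum_t\omega^{tk}P_t$ for every integer $k\ge0$.

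Next, substitute this into the definition of $\Pi_R(s)$ and exchange the two finite sums:
\begin{equation*}
 \Pi_R(s)=\sum_{t\in\F_p}\Bigl(\frac1p\sum_{k=0}^{p-1}\omega^{(t-s)k}\Bigr)P_t .
\end{equation*}
The inner geometric sum equals $\delta_{s,t}$. Indeed, when $t\neq s$ in $\F_p$, the number $\omega^{t-s}$ is a nontrivial $p$th root of unity, so the sum $\sum_k\omega^{(t-s)k}$ vanishes. (Primality of $p$ is not even essential here; it suffices that $\omega$ be a primitive $p$th root of unity.) Hence $\Pi_R(s)=P_s$, which is exactly the claimed eigenprojector.

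The relations in Eq.~\eqref{eq:E-projector-relations} then follow immediately from the orthogonality and completeness of the spectral projectors $P_t$. No step is a real obstacle. The only point requiring care is to justify diagonalizability with eigenvalues restricted to $\{\omega^t\}$, which normality together with $R^p=I$ supplies. This also covers the case of empty eigenspaces, where $\Pi_R(s)=0$ is consistent with both relations.
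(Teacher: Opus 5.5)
Your proof is correct and follows essentially the same route as the paper: you diagonalize the unitary $R$, whose eigenvalues are forced into $\{\omega^t\}$ by $R^p=I$, and then apply the character sum $\frac1p\sum_{k}\omega^{(t-s)k}=\delta_{s,t}$. The only difference is cosmetic: you substitute the spectral decomposition $R^k=\sum_t\omega^{tk}P_t$ at the operator level, whereas the paper evaluates $\Pi_R(s)$ on each vector of an orthonormal eigenbasis.
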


\begin{proof}
Since $R$ is unitary, it has an orthonormal eigenbasis.  If
$R|v\rangle=\lambda|v\rangle$, then $R^p=I$ implies
$\lambda^p=1$, so $\lambda=\omega^t$ for some $t\in\F_p$.
Therefore
\begin{equation}
 \Pi_R(s)|v\rangle
 =
 \frac1p\sum_{k=0}^{p-1}\omega^{(t-s)k}|v\rangle
 =
 \delta_{s,t}|v\rangle,
 \label{eq:E-projector-action}
\end{equation}
where we used the finite character identity
\[
 \frac1p\sum_{k=0}^{p-1}\omega^{ak}
 =
 \begin{cases}
  1,&a=0\pmod p,\\
  0,&a\ne0\pmod p.
 \end{cases}
\]
Hence $\Pi_R(s)$ acts as the identity on the $\omega^s$ eigenspace
and as zero on every other eigenspace.  It is therefore the
orthogonal spectral projector of $R$ associated with $\omega^s$.
The relations in Eq.~\eqref{eq:E-projector-relations} follow
immediately.
\end{proof}

We now give a sufficient condition under which several copies cancel all local Pauli incompatibilities simultaneously while retaining the stabilizer-eigenvalue information needed to reconstruct the unknown basis label.  If $c\in\F_p$ and $A$ is one of the Pauli unitaries used below, $A^c$ denotes $A^{\widetilde c}$ for an integer representative $\widetilde c\in\{0,\ldots,p-1\}$.  This is independent of the representative because, for odd prime $p$, these Pauli unitaries have order dividing $p$.

From Eq.~\eqref{eq:E-kappa}, for $u,v\in\F_p$,
\begin{equation}
 \bigl(S_\ell^{[j]}\bigr)^u\bigl(S_k^{[j]}\bigr)^v=\omega^{uv\kappa_j(\ell,k)}\bigl(S_k^{[j]}\bigr)^v\bigl(S_\ell^{[j]}\bigr)^u .
 \label{eq:E-powered-comm}
\end{equation}
For nonnegative integer representatives this follows by repeatedly commuting one factor through the other; periodicity modulo $p$ then gives the stated relation for field elements.

\begin{proposition}[Scalar-weight stabilizer decoder]\label{prop:E-weighted-decoder}
Let $c_1,\ldots,c_r\in\F_p$ satisfy
\begin{equation}
 \sum_{a=1}^r c_a^2=0,\qquad c_*:=\sum_{a=1}^r c_a\ne0 \quad\text{in }\F_p.
 \label{eq:scalarcriterion-main}
\end{equation}
Then every maximal-stabilizer eigenbasis of $n$ qudits in local dimension $p$ is perfectly distinguishable from $r$ copies by one-round collective LOCC.
\end{proposition}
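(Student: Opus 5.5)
At laboratory $j$, holding $r$ copies of its qudit, I will define for each generator index $\ell$ the grouped local operator
\begin{equation*}
 G_\ell^{[j]}:=\bigotimes_{a=1}^r\bigl(S_\ell^{[j]}\bigr)^{c_a},
\end{equation*}
acting on $\cH_j^{\otimes r}$. The product over $j$ is then $G_\ell:=\bigotimes_jG_\ell^{[j]}=\bigotimes_{a=1}^r S_\ell^{c_a}$, where the $a$th copy carries $S_\ell^{c_a}$.

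\textbf{Step 1: local commutation.} By Eq.~\eqref{eq:E-powered-comm} applied copy by copy,
\begin{equation*}
 G_\ell^{[j]}G_k^{[j]}=\omega^{\kappa_j(\ell,k)\sum_a c_a^2}\,G_k^{[j]}G_\ell^{[j]}.
\end{equation*}
The hypothesis $\sum_ac_a^2=0$ makes this phase trivial for every $j,\ell,k$. So at each laboratory the $n$ operators $G_1^{[j]},\dots,G_n^{[j]}$ are mutually commuting unitaries, each with $p$th power equal to the identity. The Weyl operators have order $p$ for odd $p$, and the scalar phases $\omega^{\theta}$ do too.

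\textbf{Step 2: joint local measurement.} Laboratory $j$ measures the joint spectral decomposition of these commuting operators. Concretely, it uses the products $\prod_\ell\Pi_{G_\ell^{[j]}}(s_{j\ell})$ from Lemma~\ref{lem:E-spectral-projectors}; these form a PVM because the factors commute. It obtains outcomes $s_{j\ell}\in\F_p$ and broadcasts them in one round.

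\textbf{Step 3: syndrome reconstruction.} The state $\psi_{\mathbf y}^{\otimes r}$ is an eigenvector of $G_\ell$ with eigenvalue $\omega^{y_\ell\sum_ac_a}=\omega^{c_*y_\ell}$. Since the local PVMs commute and $G_\ell$ is the product of the commuting local factors $G_\ell^{[j]}$, the outcome distribution is supported on tuples with $\omega^{\sum_js_{j\ell}}=\omega^{c_*y_\ell}$. To justify this, note that $\sum_{s_{\cdot\ell}}\omega^{\sum_j s_{j\ell}}\prod_j\Pi_j=G_\ell$. Taking the expectation in the eigenstate, the random variable $\omega^{\sum_js_{j\ell}}$ has mean of modulus one and values on the unit circle, so it is almost surely constant. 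This gives $\sum_js_{j\ell}=c_*y_\ell$ with probability one, and because $c_*\ne0$ the decoder outputs $y_\ell=c_*^{-1}\sum_js_{j\ell}$. The protocol is therefore a one-round collective LOCC measurement that recovers $\mathbf y$ with certainty.

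The main point of care is Step 3: a product of commuting local measurements of the $G_\ell^{[j]}$ reproduces the global eigenvalue of $G_\ell$ deterministically. This is where the support argument, or equivalently the fact that the joint PVM refines the spectral measure of each $G_\ell$, is needed. I also need the phase factors $\omega^{\theta_{j\ell}}$ to be harmless; they contribute $\omega^{\theta_{j\ell}c_*}$, and the global eigenvalue bookkeeping already includes them. For Theorem~\ref{thm:stabilizer-collective}, one then chooses weights: for $p\equiv1\pmod4$ take $c=(1,i)$ with $i^2=-1$, giving $c_*=1+i\ne0$. For $p\equiv3\pmod4$, pick $c_1^2+c_2^2=-1$, which is solvable in $\F_p$, together with $c_3=1$; one of the two sign choices $\pm c_1$ ensures $c_*\ne0$.
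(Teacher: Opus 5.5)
Your proposal is correct and follows the paper's proof essentially step by step: the same weighted grouped operators $\bigotimes_a(S_\ell^{[j]})^{c_a}$, the same cancellation of local commutation phases via $\sum_a c_a^2=0$, the same joint local PVMs followed by a single broadcast, and the same decoding $y_\ell=c_*^{-1}\sum_j s_{j\ell}$. The only difference is cosmetic and sits in Step 3: you argue that an outcome variable with values on the unit circle whose mean $\omega^{c_*y_\ell}$ has unit modulus is almost surely constant, whereas the paper uses commutation and compares the eigenvalues of $R_\ell$ on the nonzero vector $\Pi_{\mathbf s}\ket{\Psi_{\mathbf y}}$; both arguments are valid.
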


\begin{proof}
Let the unknown basis state be $\ket{\psi_{\mathbf y}}$, where $\mathbf y=(y_1,\ldots,y_n)\in\F_p^n$, and let the repeated input be $\ket{\Psi_{\mathbf y}}:=\ket{\psi_{\mathbf y}}^{\otimes r}$.

\paragraph*{Step 1: commuting observables at each laboratory.}
For laboratory $j$ and stabilizer generator $\ell$, define
\begin{equation}
 R_{j\ell}:=\bigotimes_{a=1}^r\left[\bigl(S_\ell^{[j]}\bigr)^{c_a}\right]^{(a)},
 \label{eq:Rjl}
\end{equation}
where $(a)$ indicates copy $a$.  Fix $j$ and two generators $\ell,k$.  Operators acting on different copies commute, and Eq.~\eqref{eq:E-powered-comm} gives
\begin{equation}
 R_{j\ell}R_{jk}=\omega^{\kappa_j(\ell,k)\sum_{a=1}^r c_a^2}R_{jk}R_{j\ell}=R_{jk}R_{j\ell}.
 \label{eq:E-local-commuting}
\end{equation}
Thus $R_{j1},\ldots,R_{jn}$ commute pairwise for every laboratory $j$.  The same weights $c_1,\ldots,c_r$ cancel all local commutation phases because the condition $\sum_{a=1}^r c_a^2=0$ is independent of $j,\ell,$ and $k$.

\paragraph*{Step 2: local collective measurement.}
Each $R_{j\ell}$ is unitary and satisfies $R_{j\ell}^p=I$, so its eigenvalues belong to $\{1,\omega,\ldots,\omega^{p-1}\}$.  For $s\in\F_p$, define
\begin{equation}
 \Pi_{j\ell}(s):=\frac1p\sum_{k=0}^{p-1}\omega^{-sk}R_{j\ell}^{\,k}.
 \label{eq:E-single-spectral-projector}
\end{equation}
This is the orthogonal spectral projector of $R_{j\ell}$ associated with eigenvalue $\omega^s$.  Indeed, if $R_{j\ell}\ket v=\omega^a\ket v$, then finite character orthogonality gives $\Pi_{j\ell}(s)\ket v=p^{-1}\sum_{k=0}^{p-1}\omega^{(a-s)k}\ket v=\delta_{a,s}\ket v$.  Hence $\Pi_{j\ell}(s)\Pi_{j\ell}(s')=\delta_{s,s'}\Pi_{j\ell}(s)$ and $\sum_{s\in\F_p}\Pi_{j\ell}(s)=I$.  Since the operators $R_{j1},\ldots,R_{jn}$ commute, their spectral projectors commute as well.  For $\mathbf s_j=(s_{j1},\ldots,s_{jn})\in\F_p^n$, define the joint spectral projector
\begin{equation}
 \Pi_j(\mathbf s_j):=\prod_{\ell=1}^n\Pi_{j\ell}(s_{j\ell}).
 \label{eq:E-joint-spectral-projector}
\end{equation}
The product is well defined because the $R_{j\ell}$ commute.  The nonzero projectors $\{\Pi_j(\mathbf s_j)\}_{\mathbf s_j\in\F_p^n}$ form a projective measurement on the $r$ systems held by laboratory $j$; explicitly,
\begin{equation}
 \Pi_j(\mathbf s_j)\Pi_j(\mathbf s'_j)=0\ \ (\mathbf s_j\ne\mathbf s'_j),\qquad \sum_{\mathbf s_j\in\F_p^n}\Pi_j(\mathbf s_j)=I.
 \label{eq:E-local-PVM}
\end{equation}
Some projectors may vanish and some may have rank greater than one.  If outcome $\mathbf s_j$ occurs, then
\begin{equation}
 R_{j\ell}\Pi_j(\mathbf s_j)=\omega^{s_{j\ell}}\Pi_j(\mathbf s_j),\qquad \ell=1,\ldots,n.
 \label{eq:E-local-outcome-eigen}
\end{equation}
Hence one collective local measurement returns the full vector $\mathbf s_j$.  After all laboratories perform these fixed measurements, they broadcast the classical outcomes.  A global outcome is $\mathbf s=(\mathbf s_1,\ldots,\mathbf s_n)$ with projector $\Pi_{\mathbf s}:=\bigotimes_{j=1}^n\Pi_j(\mathbf s_j)$.

\paragraph*{Step 3: reconstruction of the stabilizer syndrome.}
For each generator $\ell$, define
\begin{equation}
 R_\ell:=\bigotimes_{j=1}^nR_{j\ell}=\bigotimes_{a=1}^r\bigl(S_\ell^{c_a}\bigr)^{(a)}.
 \label{eq:E-global-R}
\end{equation}
The second equality is the canonical regrouping of tensor factors and uses $S_\ell=\bigotimes_jS_\ell^{[j]}$.  Equation~\eqref{eq:stab-labels} gives
\begin{equation}
 R_\ell\ket{\Psi_{\mathbf y}}=\omega^{c_*y_\ell}\ket{\Psi_{\mathbf y}}.
 \label{eq:E-global-syndrome}
\end{equation}
The global outcome projector is an eigenspace projector of the same observable:
\begin{equation}
 R_\ell\Pi_{\mathbf s}=\omega^{\sum_{j=1}^n s_{j\ell}}\Pi_{\mathbf s}.
 \label{eq:E-R-on-outcome}
\end{equation}
Suppose that the outcome $\mathbf s$ occurs with nonzero probability on $\ket{\Psi_{\mathbf y}}$.  Since $\Pi_{\mathbf s}$ is a projector, $\Pi_{\mathbf s}\ket{\Psi_{\mathbf y}}\ne0$.  Moreover $\Pi_{\mathbf s}$ commutes with every $R_\ell$.  Therefore Eqs.~\eqref{eq:E-global-syndrome} and~\eqref{eq:E-R-on-outcome} imply
\begin{equation}
 \omega^{\sum_j s_{j\ell}}\Pi_{\mathbf s}\ket{\Psi_{\mathbf y}}=R_\ell\Pi_{\mathbf s}\ket{\Psi_{\mathbf y}}=\Pi_{\mathbf s}R_\ell\ket{\Psi_{\mathbf y}}=\omega^{c_*y_\ell}\Pi_{\mathbf s}\ket{\Psi_{\mathbf y}}.
 \label{eq:E-eigenvalue-comparison}
\end{equation}
Because the projected vector is nonzero and $\omega$ is a primitive $p$th root of unity,
\begin{equation}
 \sum_{j=1}^n s_{j\ell}=c_*y_\ell\pmod p,
 \qquad \ell=1,\ldots,n.
 \label{eq:E-syndrome-equation}
\end{equation}
Since $c_*\ne0$ in $\F_p$, it is invertible and the label is recovered componentwise as
\begin{equation}
 y_\ell=c_*^{-1}\sum_{j=1}^n s_{j\ell}\pmod p,
 \qquad \ell=1,\ldots,n.
 \label{eq:E-syndrome-recovery}
\end{equation}
Thus every nonzero-probability global outcome is compatible with exactly one basis label.  Each laboratory performs one fixed projective measurement on its $r$ local copies, followed by a single broadcast of classical data.  Hence the protocol is one-round collective LOCC.
\end{proof}

\subsection{Number of copies}

It remains to show that the scalar weights in Eq.~\eqref{eq:scalarcriterion-main} always exist with $r\le3$.

\begin{proposition}[Two or three scalar weights]\label{prop:E-number-weights}
For every odd prime $p$, Eq.~\eqref{eq:scalarcriterion-main} has a solution with
\begin{equation}
 \tau_p=\begin{cases}2,&p\equiv1\pmod4,\\3,&p\equiv3\pmod4.\end{cases}
 \label{eq:E-taup}
\end{equation}
Moreover, when $p\equiv3\pmod4$, no two-copy choice satisfies the scalar-weight criterion.
\end{proposition}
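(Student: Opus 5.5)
The plan is to treat the two residue classes separately, exhibit explicit weights in each case, and then prove the non-existence statement for two copies when $p\equiv3\pmod4$. Throughout we use the classical fact that $-1$ is a quadratic residue modulo an odd prime $p$ if and only if $p\equiv1\pmod4$. This fact follows from Euler's criterion $(-1)^{(p-1)/2}$, or from the cyclicity of $\F_p^\times$.

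\emph{Case $p\equiv1\pmod4$.} Choose $i\in\F_p$ with $i^2=-1$ and set $(c_1,c_2)=(1,i)$. Then $c_1^2+c_2^2=1-1=0$. We also have $c_*=1+i$, and $1+i\ne0$ because $i\ne-1$: indeed $(-1)^2=1\ne-1$, since $p$ is odd. Proposition~\ref{prop:E-weighted-decoder} then applies with $r=2$.

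\emph{Case $p\equiv3\pmod4$, three weights.} First find $u,v\in\F_p$ with $u^2+v^2=-1$. Such a pair exists by the standard counting argument. The sets $\{u^2\}$ and $\{-1-v^2\}$ each contain $(p+1)/2$ elements, so they must intersect in $\F_p$. Now take $(c_1,c_2,c_3)=(1,u,v)$, which gives $\sum_a c_a^2=0$. It remains to ensure that $c_*=1+u+v\ne0$. If $1+u+v=0$, we replace $(u,v)$ by $(u,-v)$. This keeps $u^2+v^2=-1$ and gives the sum $1+u-v$. The two sums $1+u+v$ and $1+u-v$ can both vanish only if $v=0$, and then $u^2=-1$, which is impossible in this residue class. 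Hence at least one sign choice gives $c_*\ne0$. This sign-flip step is the only point that needs care. Everything else in this case is a direct check, after which Proposition~\ref{prop:E-weighted-decoder} applies with $r=3$.

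\emph{Impossibility of two weights when $p\equiv3\pmod4$.} Suppose $c_1^2+c_2^2=0$. If $c_1\ne0$, then $(c_2/c_1)^2=-1$, contradicting the non-residuosity of $-1$. Therefore $c_1=0$, which forces $c_2^2=0$, so $c_2=0$ and hence $c_*=0$, violating the criterion. A single copy is excluded in every odd prime dimension, since $c_1^2=0$ forces $c_1=0$. This shows that $\tau_p$ is the minimal number of copies admitted by the scalar-weight criterion.

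Finally, combining the two constructions with Proposition~\ref{prop:E-weighted-decoder} and the inclusions in Eq.~\eqref{eq:model-inclusions} yields Theorem~\ref{thm:stabilizer-collective}.
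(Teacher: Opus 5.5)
Your proof is correct and follows essentially the same route as the paper. Both use the Euler-criterion argument for two weights, covering existence when $p\equiv1\pmod4$ and nonexistence when $p\equiv3\pmod4$, and both obtain three weights by counting $Q$ against $-1-Q$ and then flipping a sign. The only cosmetic difference is in that sign flip: you flip $v$ and rule out $v=0$ using the non-residuosity of $-1$, whereas the paper flips $a$ and rules out $a=0$ using $1\neq-1$ for odd $p$.
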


\begin{proof}
First take $t=2$. If $c_1^2+c_2^2=0$ and $c_1+c_2\ne0$, then neither coefficient vanishes and $(c_1c_2^{-1})^2=-1$. Conversely, if $\lambda^2=-1$ in $\F_p$, then $(c_1,c_2)=(1,\lambda)$ obeys $c_1^2+c_2^2=0$ and $1+\lambda\ne0$. By Euler's criterion, $-1$ is a quadratic residue in $\F_p$ exactly when $p\equiv1\pmod4$. Hence two scalar-weighted copies suffice in this case, while no two-copy scalar-weight choice exists for $p\equiv3\pmod4$.

Now assume $p\equiv3\pmod4$. Let
\begin{equation}
 Q:=\{x^2:x\in\F_p\}
 \label{eq:E-square-set}
\end{equation}
be the set of quadratic residues together with $0$.  Since every nonzero square has exactly two square roots and $0$ has one, $|Q|=(p+1)/2$.  The translated set $-1-Q:=\{-1-q:q\in Q\}$ has the same cardinality.  Because
\begin{equation}
 |Q|+|-1-Q|=p+1>p=|\F_p|,
 \label{eq:E-pigeonhole}
\end{equation}
the two sets intersect.  Hence there exist $a,b\in\F_p$ such that $a^2=-1-b^2$, or equivalently
\begin{equation}
 a^2+b^2=-1.
 \label{eq:E-threeweight-equation}
\end{equation}
If $a+b+1\ne0$, then $(c_1,c_2,c_3)=(a,b,1)$ already satisfies Eq.~\eqref{eq:scalarcriterion-main}.  If instead $a+b+1=0$, then $a\ne0$: otherwise $b=-1$, and Eq.~\eqref{eq:E-threeweight-equation} would give $1=-1$, impossible for odd $p$.  Replacing $a$ by $-a$ preserves Eq.~\eqref{eq:E-threeweight-equation}, while
\begin{equation}
 (-a)+b+1=-2a\ne0.
 \label{eq:E-sign-flip}
\end{equation}
Thus in all cases there is a pair $(a,b)$ with $a^2+b^2=-1$ and $a+b+1\ne0$.  Taking $(c_1,c_2,c_3)=(a,b,1)$ proves the scalar-weight criterion with three copies.  Proposition~\ref{prop:E-weighted-decoder} therefore gives a three-copy one-round collective-LOCC protocol, completing the proof of Theorem~\ref{thm:stabilizer-collective}.
\end{proof}

\suppsection{Accessible information from identification probability}{app:information}

We now prove the information bound used in Eq.~\eqref{eq:info-vs-P-main}.  The only additional ingredient beyond the definitions in the Letter is that all measurement classes considered here are closed under classical postprocessing of their outcomes.

\begin{lemma}[Accessible information versus identification probability]
\label{lem:information-identification}
Let $\cB=\{\ket{\psi_y}\}_{y=1}^{D}$ be a complete orthonormal basis, with the label $Y$ uniformly distributed over $\{1,\ldots,D\}$.  For every $r\ge1$, $\mathsf M\in\{\LOCC,\SEP\}$, and $\alpha\in\{\ind,\col\}$,
\begin{equation}
 I_{\alpha}^{\mathsf M}(r;\cB)
 \le
 \log_2D+\log_2P_{\alpha}^{\mathsf M}(r;\cB).
 \label{eq:F-information-identification}
\end{equation}
\end{lemma}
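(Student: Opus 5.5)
We fix an admissible measurement $\mathcal M\in\mathsf M_{\alpha}^{(r)}$ with outcome $Z$ in a standard Borel space, and write $\mu_y$ for the outcome distribution on input $\rho_y^{\otimes r}$ and $\bar\mu=\frac1D\sum_y\mu_y$. The plan is to bound $I(Y{:}Z)$ by a quantity that is itself the success probability of a coarse-grained decision POVM in the same class. Once this is done, taking the supremum over $\mathcal M$ gives the statement.

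\emph{Step 1.} We use $\mu_y\ll\bar\mu$ and set $g_y:=d\mu_y/d\bar\mu$, so that $\sum_y g_y=D$ holds $\bar\mu$-a.e. Then
\begin{equation}
 I(Y{:}Z)=\int\sum_y\frac{g_y(z)}{D}\log_2 g_y(z)\,d\bar\mu(z).
\end{equation}
Pointwise in $z$, the weights $g_y(z)/D$ form a probability vector. Jensen's inequality for the concave logarithm gives $\sum_y\frac{g_y}{D}\log_2 g_y\le\sum_y\frac{g_y}{D}\log_2\max_{y'}g_{y'}=\log_2\max_y g_y(z)$. A second application of Jensen, now for the integral against $\bar\mu$, yields
\begin{equation}
 I(Y{:}Z)\le\log_2\int\max_y g_y\,d\bar\mu.
\end{equation}

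\emph{Step 2.} We then choose a measurable maximum-likelihood rule $\hat y(z)\in\arg\max_y g_y(z)$, breaking ties by smallest index; this is measurable because there are finitely many measurable functions. This rule defines a measurable partition $E_y:=\hat y^{-1}(y)$. Then
\begin{equation}
 \int\max_y g_y\,d\bar\mu=\sum_y\mu_y(E_y)=D\cdot\frac1D\sum_y\Tr\!\bigl(M(E_y)\rho_y^{\otimes r}\bigr).
\end{equation}
By the closure property stated at the start of SM-A, $\{M(E_y)\}$ is a decision POVM in $\mathsf M_{\alpha}^{(r)}$. Hence the right-hand side is at most $D\,P_{\alpha}^{\mathsf M}(r;\cB)$, and combining with Step 1 gives $I(Y{:}Z)\le\log_2D+\log_2P_{\alpha}^{\mathsf M}(r;\cB)$.

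\emph{Main obstacle.} The only delicate point is measure-theoretic: the existence of the Radon--Nikodym densities and the measurability of the argmax rule when outcomes are general rather than finite. Both follow from finiteness of the label set. As an alternative, one can first reduce to finite outcome sets by noting that mutual information is the supremum over finite measurable partitions of $\mathcal Z$, which is a standard property for standard Borel spaces. The class is closed under such coarse-graining, so the finite-outcome case, where Steps 1--2 are elementary sums, already suffices.
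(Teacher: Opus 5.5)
Your proof is correct and follows the same route as the paper. You bound the conditional entropy pointwise by $-\log_2$ of the largest posterior, apply Jensen to get $I(Y{:}Z)\le\log_2 D+\log_2 p_g(Y|Z)$, and identify $p_g$ with the success probability of the maximum-likelihood coarse-grained decision POVM, which stays in $\mathsf M_{\alpha}^{(r)}$; your Radon--Nikodym formulation just makes explicit the general-outcome case that the paper writes with sums over $z$ (and your first pointwise inequality is monotonicity of $\log_2$ rather than Jensen).
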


\begin{proof}
Fix an arbitrary measurement
$$ \mathcal{M}:=
 \{M_z\}\in\mathsf M_{\alpha}^{(r)}
$$
and denote its classical outcome by $Z_{\mathcal{M}}$ and denote it $Z$ for simplicity.  On input label $y$, the conditional probability of obtaining outcome $z$ is
\begin{equation}
 p(z|y)
 =
 \Tr\!\left(M_z\rho_y^{\otimes r}\right).
 \label{eq:F-channel}
\end{equation}
Since $Y$ is uniform,
\begin{equation}
 p(y,z)
 =
 \frac1D p(z|y),
 \qquad
 p(z)
 =
 \frac1D\sum_{y=1}^{D}p(z|y).
 \label{eq:F-joint}
\end{equation}

We first relate the conditional entropy $H(Y|Z)$ to the optimal probability of guessing $Y$ from the observed value of $Z$.  For every $z$ with $p(z)>0$, define
\begin{equation}
 m_z:=\max_y p(y|z).
 \label{eq:F-mz}
\end{equation}
For every $y$ such that $p(y|z)>0$,
$
 p(y|z)\le m_z,
$
and hence
\[
 -\log_2p(y|z)\ge-\log_2m_z.
\]
Using the convention $0\log_2 0=0$, it follows that
\begin{align}
 H(Y|Z=z)
 &=
 -\sum_y p(y|z)\log_2p(y|z) \ge
 -\sum_y p(y|z)\log_2m_z =
 -\log_2m_z.
 \label{eq:F-pointwise-entropy}
\end{align}
Averaging over the measurement outcome gives
\begin{align}
 H(Y|Z)
 &=
 \sum_zp(z)H(Y|Z=z)
  \ge
 \sum_zp(z)\bigl[-\log_2m_z\bigr].
 \label{eq:F-average-entropy}
\end{align}
Since $x\mapsto-\log_2x$ is convex on $(0,\infty)$, Jensen's inequality yields
\begin{align}
 H(Y|Z)
 &\ge
 -\log_2\!\left(\sum_zp(z)m_z\right)
  =
 -\log_2\!\left[
 \sum_zp(z)\max_y p(y|z)
 \right].
 \label{eq:F-Jensen}
\end{align}
Define
\begin{equation}
 p_g(Y|Z)
 :=
 \sum_zp(z)\max_y p(y|z).
 \label{eq:F-pg}
\end{equation}
This is precisely the optimal average probability of correctly guessing $Y$ after observing $Z$.  Equation~\eqref{eq:F-Jensen} therefore gives
\begin{equation}
 H(Y|Z)\ge-\log_2p_g(Y|Z).
 \label{eq:F-Shannon-guess}
\end{equation}

It remains to compare $p_g(Y|Z)$ with the discrimination probability
$P_{\alpha}^{\mathsf M}(r;\cB)$ defined in Eq.~\eqref{eq:Pdef}.  For each outcome $z$, choose a maximum-likelihood decision rule
\begin{equation}
 \widehat y(z)\in\arg\max_y p(y|z).
 \label{eq:F-ML-rule}
\end{equation}
If the maximum is attained by more than one label, choose any one of them.  Coarse-graining all measurement outcomes that lead to the same decision, define
\begin{equation}
 N_y
 :=
 \sum_{z:\,\widehat y(z)=y}M_z,
 \qquad y=1,\ldots,D.
 \label{eq:F-coarse-graining}
\end{equation}
The operators $\{N_y\}_{y=1}^{D}$ form a POVM, since
\[
 N_y\ge0,
 \qquad
 \sum_yN_y=\sum_zM_z=I.
\]
Moreover, $\{N_y\}$ belongs to the same measurement class
$\mathsf M_{\alpha}^{(r)}$.  Indeed, Eq.~\eqref{eq:F-coarse-graining} only applies a classical relabeling and coarse graining to the final measurement outcome; it introduces no additional quantum operation and hence preserves each of the LOCC and SEP classes, in both the individual-copy and collective architectures.

The success probability of this $D$-outcome discrimination measurement is
\begin{align}
 P_{\rm succ}(\{N_y\})
 &=
 \frac1D\sum_y
 \Tr\!\left(N_y\rho_y^{\otimes r}\right)
 \nonumber\\
 &=
 \frac1D\sum_z
 \Tr\!\left(
 M_z\rho_{\widehat y(z)}^{\otimes r}
 \right)
 \nonumber\\
 &=
 \sum_z
 \max_y
 \frac1D
 \Tr\!\left(M_z\rho_y^{\otimes r}\right)
 \nonumber\\
 &=
 \sum_z\max_y p(y,z)
 \nonumber\\
 &=
 \sum_zp(z)\max_y p(y|z)
 \nonumber\\
 &=
 p_g(Y|Z).
 \label{eq:F-success-pg}
\end{align}
Since $P_{\alpha}^{\mathsf M}(r;\cB)$ is the supremum of the discrimination success probability over all allowed $D$-outcome measurements in $\mathsf M_{\alpha}^{(r)}$, Eq.~\eqref{eq:F-success-pg} implies
\begin{equation}
 p_g(Y|Z)
 \le
 P_{\alpha}^{\mathsf M}(r;\cB).
 \label{eq:F-pg-P}
\end{equation}

Finally, uniformity of the basis label gives
\[
 H(Y)=\log_2D.
\]
Combining Eqs.~\eqref{eq:F-Shannon-guess} and~\eqref{eq:F-pg-P},
\begin{align}
 I(Y{:}Z)
 &=
 H(Y)-H(Y|Z) \le
 \log_2D+\log_2p_g(Y|Z) 
  \le
 \log_2D+
 \log_2P_{\alpha}^{\mathsf M}(r;\cB).
 \label{eq:F-info-bound}
\end{align}
This holds for every allowed measurement $\{M_z\}\in\mathsf M_{\alpha}^{(r)}$.  Taking the supremum over such measurements and using the definition of $I_{\alpha}^{\mathsf M}(r;\cB)$ proves Eq.~\eqref{eq:F-information-identification}, and hence Eq.~\eqref{eq:info-vs-P-main}.
\end{proof}

For the $p\equiv1\pmod4$ stabilizer family, Theorem~\ref{thm:stabilizer-collective} gives a two-copy collective-LOCC decoder, and therefore
\[
 I_{\col}^{\LOCC}(2;\cB_n)=\log_2D.
\]
In the subthreshold regime of Theorem~\ref{thm:individual-unbounded},
\[
 P_{\ind}^{\SEP}(r;\cB_n)=D^{-1+o(1)}.
\]
Equation~\eqref{eq:F-information-identification} then yields
\[
 I_{\ind}^{\SEP}(r;\cB_n)=o(\log_2D),
\]
which proves Corollary~\ref{cor:information-separation}.

\suppsection{Flat local-unitary orbit bases and Newton-identity replica rigidity}{app:flat}

We now work in arbitrary fixed physical local dimension $d\ge2$.  Write $\mathbb Z_d:=\mathbb Z/d\mathbb Z$ for the physical alphabet and put $D=d^n$.  No finite-field structure is assumed for $\mathbb Z_d$.

\subsection{Flat amplitudes give a complete local-unitary orbit}
For an arbitrary phase function $f:\mathbb Z_d^n\to\mathbb R$, define $\ket{\Omega_f}$ as in Eq.~\eqref{eq:flat-general}.  Thus a global computational label is a vector $\mathbf{x}=(x_1,\ldots,x_n)\in\mathbb Z_d^n$ and $\ket{\mathbf{x}}=\bigotimes_j\ket{x_j}$.  Set $\zeta:=e^{2\pi i/d}$ and let the one-site phase operator be $Z_d\ket a=\zeta^a\ket a$.  For $\mathbf{y}=(y_1,\ldots,y_n)\in\mathbb Z_d^n$, define
\begin{equation}
 Z_d^{\mathbf{y}}:=\bigotimes_{j=1}^n Z_d^{y_j},\qquad \mathbf{y}\!\cdot\!\mathbf{x}:=\sum_{j=1}^ny_jx_j\pmod d,
 \label{eq:G-Ztensor}
\end{equation}
so that $Z_d^{\mathbf{y}}\ket{\mathbf{x}}=\zeta^{\mathbf{y}\cdot\mathbf{x}}\ket{\mathbf{x}}$. For $\mathbf{y},\mathbf{z}\in\mathbb Z_d^n$,
\begin{equation}
 \langle\Omega_f|Z_d^{\mathbf{z}-\mathbf{y}}|\Omega_f\rangle=\frac1D\sum_{\mathbf{x}\in\mathbb Z_d^n}\zeta^{(\mathbf{z}-\mathbf{y})\cdot\mathbf{x}}=\prod_{j=1}^n\left[\frac1d\sum_{x_j=0}^{d-1}\zeta^{(z_j-y_j)x_j}\right]=\delta_{\mathbf{y},\mathbf{z}}.
 \label{eq:G-flat-orthogonality}
\end{equation}
The phase function cancels between the two amplitudes. Each one-dimensional character sum equals one when $z_j=y_j$ and zero otherwise. Thus the $D$ states $Z_d^{\mathbf{y}}\ket{\Omega_f}$ are mutually orthonormal and form a complete basis, proving Eq.~\eqref{eq:flat-orbit}. Moreover, $(Z_d^{\mathbf{y}})^{\otimes r}=\bigotimes_j(Z_d^{y_j})^{\otimes r}$ is local with respect to the grouped partition, so local-unitary invariance gives the same $\Gamma_r$ for every basis state.

\subsection{Replica rigidity}
Fix an integer $h\ge1$ and choose a prime $P>h$; write $\F_P$ for the field with $P$ elements. Define the moment-curve vector
\begin{equation}
 \mathbf{g}(\alpha)=(\alpha,\alpha^2,\ldots,\alpha^h)^T\in\F_P^h.
 \label{eq:G-moment-curve}
\end{equation}

\begin{lemma}[Replica rigidity]\label{lem:replica-rigidity}
If
\begin{equation}
 \sum_{a=1}^h \mathbf{g}(\alpha_a)=\sum_{a=1}^h \mathbf{g}(\beta_a),
 \label{eq:G-rigidity-assumption}
\end{equation}
with repetitions allowed, then the two multisets
$\{\alpha_1,\ldots,\alpha_h\}_{\mathrm{multi}}$ and
$\{\beta_1,\ldots,\beta_h\}_{\mathrm{multi}}$ are equal.
\end{lemma}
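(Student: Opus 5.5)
The plan is to read the hypothesis of Lemma~\ref{lem:replica-rigidity} coordinatewise as equality of the first $h$ power sums, and then use Newton's identities to turn power sums into elementary symmetric polynomials. Equation~\eqref{eq:G-rigidity-assumption} says exactly that
\begin{equation}
 p_k(\alpha):=\sum_{a=1}^h\alpha_a^k=\sum_{a=1}^h\beta_a^k=:p_k(\beta),\qquad k=1,\ldots,h,
\end{equation}
in $\F_P$. Let $e_k(\alpha)$ and $e_k(\beta)$ denote the elementary symmetric polynomials in the $h$ variables, with $e_0=1$.

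Newton's identities hold over any commutative ring:
\begin{equation}
 k\,e_k=\sum_{i=1}^k(-1)^{i-1}e_{k-i}\,p_i,\qquad 1\le k\le h.
\end{equation}
I will argue by induction on $k$ that $e_k(\alpha)=e_k(\beta)$ for all $k\le h$. Suppose $e_0,\ldots,e_{k-1}$ agree for the two tuples. Then the right-hand sides of the identity agree, since the $p_i$ agree by hypothesis. This gives $k\,e_k(\alpha)=k\,e_k(\beta)$.

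This division step is the only place where care is needed, and it is exactly where the hypothesis $P>h$ enters. Since $1\le k\le h<P$, the integer $k$ is nonzero in $\F_P$ and therefore invertible, so $e_k(\alpha)=e_k(\beta)$ follows. (Over a field of characteristic at most $h$ the statement would fail; for example, $p$ copies of $\alpha$ and $p$ copies of $0$ have equal power sums in characteristic $p$.)

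Once all elementary symmetric polynomials agree, the two monic polynomials coincide:
\begin{equation}
 \prod_{a=1}^h(t-\alpha_a)=\sum_{k=0}^h(-1)^ke_k(\alpha)t^{h-k}=\prod_{a=1}^h(t-\beta_a).
\end{equation}
Because $\F_P[t]$ is a unique factorization domain, the multiset of roots, counted with multiplicity, of this polynomial is determined by the polynomial itself. Hence the multisets $\{\alpha_a\}_{\mathrm{multi}}$ and $\{\beta_a\}_{\mathrm{multi}}$ are equal.

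Everything else is routine. If one wants the argument to be self-contained, Newton's identities can be quoted or proved from the logarithmic derivative of $\prod_a(1-\alpha_a t)$ as a formal power series, which is valid over any ring.
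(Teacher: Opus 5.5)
Your proposal is correct and follows essentially the same route as the paper's proof. Both read the coordinates as the first $h$ power sums, recover the elementary symmetric polynomials recursively via Newton's identities (inverting $1,\ldots,h$ in $\F_P$ because $P>h$), and conclude from equality of the resulting monic polynomial that the root multisets coincide.
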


\begin{proof}
Set $\boldsymbol{\alpha}:=(\alpha_1,\ldots,\alpha_h)$ and $\boldsymbol{\beta}:=(\beta_1,\ldots,\beta_h)$. Equality of the $s$th coordinates in Eq.~\eqref{eq:G-rigidity-assumption} gives equality of the first $h$ power sums,
\begin{equation}
 P_s(\boldsymbol{\alpha}):=\sum_{a=1}^h\alpha_a^s=\sum_{a=1}^h\beta_a^s=:P_s(\boldsymbol{\beta}),
 \qquad s=1,\ldots,h.
 \label{eq:G-power-sums}
\end{equation}
Let $e_s$ denote the elementary symmetric polynomial of degree $s$, with $e_0=1$.  Newton's identities are
\begin{equation}
 s e_s
 =\sum_{j=1}^s(-1)^{j-1}e_{s-j}P_j,
 \qquad s=1,\ldots,h.
 \label{eq:G-Newton}
\end{equation}
Because $P>h$, every integer $1,\ldots,h$ is nonzero and hence invertible in $\F_P$.  Equation~\eqref{eq:G-Newton} therefore determines $e_1$ from $P_1$, then $e_2$ from $P_1,P_2$, and recursively all $e_s$ from the first $s$ power sums.  The equalities in Eq.~\eqref{eq:G-power-sums} imply that the two lists have identical elementary symmetric polynomials.  Consequently they define the same monic polynomial,
\begin{equation}
 \prod_{a=1}^h(t-\alpha_a)
 =t^h-e_1t^{h-1}+\cdots+(-1)^he_h
 =\prod_{a=1}^h(t-\beta_a).
 \label{eq:G-root-poly}
\end{equation}
Equality of these factorizations over a field implies equality of the roots with multiplicities, which is exactly equality of the two multisets.
\end{proof}

\suppsection{Moment-curve replica moments and proof of Theorem~\ref{thm:collective-unbounded}}{app:phasemoment}

\subsection{Exact replica-moment estimate}
Fix $r,m\ge1$ and put $h=mr$. Choose a prime $P>\max\{D,h\}$ and assign distinct nonzero field elements $\alpha_{\mathbf{x}}\in\F_P$ to all global labels $\mathbf{x}\in\mathbb Z_d^n$. Put $\mathbf{g}_{\mathbf{x}}:=\mathbf{g}(\alpha_{\mathbf{x}})=(\alpha_{\mathbf{x}},\alpha_{\mathbf{x}}^2,\ldots,\alpha_{\mathbf{x}}^h)^T\in\F_P^h$. For a phase parameter $\mathbf{u}=(u_1,\ldots,u_h)\in\F_P^h$, let $\mathbf{u}\!\cdot\!\mathbf{g}_{\mathbf{x}}$ denote the standard dot product over $\F_P$, set $\omega_P:=e^{2\pi i/P}$, and define
\begin{equation}
 \ket{\Omega_{\mathbf{u}}}:=D^{-1/2}\sum_{\mathbf{x}\in\mathbb Z_d^n}\omega_P^{\mathbf{u}\cdot\mathbf{g}_{\mathbf{x}}}\ket{\mathbf{x}}.
 \label{eq:momentcurve-state}
\end{equation}
The auxiliary prime $P$ is used only to choose phases; it is unrelated to the physical local dimension $d$. Throughout SM-\ref{app:phasemoment}, whenever the parameter space is $\F_P^k$, $\mathbf{u}$ is uniform on that finite set and
\begin{equation}
 \mathbb E_{\mathbf{u}}F(\mathbf{u}):=P^{-k}\sum_{\mathbf{u}\in\F_P^k}F(\mathbf{u}),
 \label{eq:H-expectation-convention}
\end{equation}
Thus every expectation over phase parameters is a finite average over an explicitly specified parameter space.

Let
\begin{equation}
 \ket\Phi=\bigotimes_{j=1}^n\ket{\phi_j},
 \qquad
 \ket{\phi_j}\in(\C^d)^{\otimes r},
 \label{eq:H-grouped-detector}
\end{equation}
be an arbitrary normalized product state with respect to the grouped laboratory partition.  Each local factor may be entangled across its $r$ copies.

For an $r$-tuple of global computational strings $\boldsymbol{\xi}=(\mathbf{x}^{(1)},\ldots,\mathbf{x}^{(r)})\in(\mathbb Z_d^n)^r$, write $\ket{\boldsymbol{\xi}}:=\bigotimes_{t=1}^r\ket{\mathbf{x}^{(t)}}$ and define
\begin{equation}
 a_{\boldsymbol{\xi}}:=\langle\Phi|\boldsymbol{\xi}\rangle.
 \label{eq:H-a-def}
\end{equation}
Because the vectors $\ket{\boldsymbol{\xi}}$ form an orthonormal basis of the $r$-copy global Hilbert space and $\ket\Phi$ is normalized, $\sum_{\boldsymbol{\xi}}|a_{\boldsymbol{\xi}}|^2=1$. Expanding $\ket{\Omega_{\mathbf{u}}}^{\otimes r}$ gives
\begin{equation}
 X_{\mathbf{u}}(\ket\Phi):=\langle\Phi|\Omega_{\mathbf{u}}^{\otimes r}\rangle=D^{-r/2}\sum_{\boldsymbol{\xi}}a_{\boldsymbol{\xi}}\,\omega_P^{\mathbf{u}\cdot\sum_{t=1}^r\mathbf{g}_{\mathbf{x}^{(t)}}}.
 \label{eq:Xu}
\end{equation}

We now compute its $2m$th moment over the uniformly distributed phase parameter $\mathbf{u}$. Expanding $X_{\mathbf{u}}^m\overline{X_{\mathbf{u}}}^m$ produces $m$ ordered $r$-tuples on each side, hence $h=mr$ global labels on each side. Flatten the $m$ positive-side blocks into the ordered sequence $\mathbf{X}=(\mathbf{x}_1,\ldots,\mathbf{x}_h)$ and define
\begin{equation}
 b_{\mathbf{X}}:=\prod_{s=1}^m a_{(\mathbf{x}_{(s-1)r+1},\ldots,\mathbf{x}_{sr})}.
 \label{eq:H-bX}
\end{equation}
Define $b_{\mathbf{Y}}$ analogously for the negative-side sequence $\mathbf{Y}=(\mathbf{y}_1,\ldots,\mathbf{y}_h)$.  Then
\begin{equation}
 |X_{\mathbf{u}}(\ket\Phi)|^{2m}=D^{-h}\sum_{\mathbf{X},\mathbf{Y}}b_{\mathbf{X}}\overline{b_{\mathbf{Y}}}\,\omega_P^{\mathbf{u}\cdot(\sum_{a=1}^h\mathbf{g}_{\mathbf{x}_a}-\sum_{a=1}^h\mathbf{g}_{\mathbf{y}_a})}.
 \label{eq:H-expanded-moment}
\end{equation}
Average uniformly over $\mathbf{u}\in\F_P^h$. Character orthogonality gives, for every $\mathbf{w}\in\F_P^h$,
\begin{equation}
 \frac1{P^h}\sum_{\mathbf{u}\in\F_P^h}\omega_P^{\mathbf{u}\cdot\mathbf{w}}=\begin{cases}1,&\mathbf{w}=\mathbf{0},\\0,&\mathbf{w}\ne\mathbf{0}.\end{cases}
 \label{eq:H-character-orthogonality}
\end{equation}
Thus a pair $(\mathbf{X},\mathbf{Y})$ survives the average only when
\begin{equation}
 \sum_{a=1}^h\mathbf{g}_{\mathbf{x}_a}=\sum_{a=1}^h\mathbf{g}_{\mathbf{y}_a}.
 \label{eq:H-survival}
\end{equation}
Because the map $\mathbf{x}\mapsto\alpha_{\mathbf{x}}$ is injective, Lemma~\ref{lem:replica-rigidity} implies that $\mathbf{X}$ and $\mathbf{Y}$ contain exactly the same multiset of global labels. Equivalently, there exists a permutation $\pi\in S_h$, where $S_h$ is the symmetric group on $h$ symbols, such that $\mathbf y_a=\mathbf x_{\pi(a)}$ for every $a$; when labels repeat, more than one permutation may represent the same ordered sequence.

For a multiset $M$ of $h$ global labels, let $F_M$ be the set of its distinct orderings. Regrouping all surviving terms by $M$ gives
\begin{equation}
 \mathbb E_{\mathbf{u}}|X_{\mathbf{u}}(\ket\Phi)|^{2m}=D^{-h}\sum_M\left|\sum_{\mathbf{X}\in F_M}b_{\mathbf{X}}\right|^2.
 \label{eq:H-multiset-sum}
\end{equation}
For each multiset, Cauchy--Schwarz yields
\begin{equation}
 \left|\sum_{\mathbf{X}\in F_M}b_{\mathbf{X}}\right|^2\le |F_M|\sum_{\mathbf{X}\in F_M}|b_{\mathbf{X}}|^2\le h!\sum_{\mathbf{X}\in F_M}|b_{\mathbf{X}}|^2.
 \label{eq:H-CS-multiset}
\end{equation}
The sets $F_M$ partition all ordered sequences $\mathbf{X}$, so summing Eq.~\eqref{eq:H-CS-multiset} over $M$ gives
\begin{equation}
 \mathbb E_{\mathbf{u}}|X_{\mathbf{u}}(\ket\Phi)|^{2m}\le\frac{h!}{D^h}\sum_{\mathbf{X}}|b_{\mathbf{X}}|^2.
 \label{eq:H-before-normalization}
\end{equation}
The block definition in Eq.~\eqref{eq:H-bX} factorizes the last sum:
\begin{equation}
 \sum_{\mathbf{X}}|b_{\mathbf{X}}|^2=\left(\sum_{\boldsymbol{\xi}}|a_{\boldsymbol{\xi}}|^2\right)^m=1.
 \label{eq:H-b-normalization}
\end{equation}  Therefore
\begin{equation}
 \mathbb E_{\mathbf{u}}|\langle\Phi|\Omega_{\mathbf{u}}^{\otimes r}\rangle|^{2m}
 \le\frac{h!}{D^h}
 =\frac{(mr)!}{D^{mr}},
 \label{eq:H-phase-moment}
\end{equation}
which is Eq.~\eqref{eq:phase-Am-main}.

Lemma~\ref{lem:moment-to-overlap}, applied to the grouped local dimensions $q_j=d^r$, now gives a state $\ket{\Omega_{\mathbf{u}_*}}$ satisfying
\begin{equation}
 \Gamma_r(\ket{\Omega_{\mathbf{u}_*}})\le\left[\frac{(mr)!}{D^{mr}}\binom{d^r+m-1}{m}^{n}\right]^{1/m}=D^{-r}\binom{d^r+m-1}{m}^{n/m}(mr)!^{1/m}.
 \label{eq:Gamma-flat-main}
\end{equation}
Since $\cA_{\mathbf{u}_*,d}$ is the local-unitary orbit of $\ket{\Omega_{\mathbf{u}_*}}$, $\Gamma_r(\cA_{\mathbf{u}_*,d})=\Gamma_r(\ket{\Omega_{\mathbf{u}_*}})$.  Proposition~\ref{prop:trace-overlap} then gives Eq.~\eqref{eq:flat-master-main}.

For convenience, define
\begin{equation}
 B_{n,r}^{(d)}(m)
 :=\frac1{d^n}\binom{d^r+m-1}{m}^{n/m}(mr)!^{1/m}.
 \label{eq:Bmaster}
\end{equation}
For each fixed $r$ and $m$, Eqs.~\eqref{eq:Gamma-flat-main} and~\eqref{eq:flat-master-main} show that there exists a phase parameter for which the collective-SEP success probability is at most $B_{n,r}^{(d)}(m)$.  We first optimize this quantity for a fixed block size and then select a common phase parameter for the block sizes required in Theorem~\ref{thm:collective-unbounded}.

\subsection{Detailed asymptotic optimization for a fixed block size}
The auxiliary prime can always be chosen larger than $\max\{D,mr\}$, so it imposes no restriction on the asymptotic choice of the moment order.  Put
\begin{equation}
 q:=d^r,
\end{equation}
and, for an integer moment order $m$, set $t:=m/q$.  We first minimize the resulting upper bound as a continuous function of $t>0$ and then round the corresponding value of $m=tq$ to an integer.  Define
\begin{equation}
 g(t):=(1+t)\ln(1+t)-t\ln t.
 \label{eq:H-g}
\end{equation}
For $N=q+m-1$ and $K=m$, set $x:=K/N$.  The binomial theorem gives
\begin{equation}
 1=(x+1-x)^N\ge\binom NK x^K(1-x)^{N-K},
\end{equation}
so, after taking logarithms,
$\ln\binom NK\le Nh(x)$ with $h(x):=-x\ln x-(1-x)\ln(1-x)$; the endpoint cases $K=0,N$ are immediate.  Here
\begin{equation}
 Nh(K/N)=(q+m-1)\ln(q+m-1)-m\ln m-(q-1)\ln(q-1).
\end{equation}
Substituting $m=tq$ into this expression and applying the mean-value theorem to the shifts by one gives
\begin{equation}
 \ln\binom{q+m-1}{m}\le qg(t)+O(\ln q+\ln(1+t)),
 \label{eq:binom-entropy}
\end{equation}
where the implicit constant is absolute, i.e., independent of $n,r,d,$ and $t$.  Stirling's formula, in the form $\ln N!=N\ln N-N+O(\ln N)$, gives
\begin{equation}
 \frac1m\ln(mr)!=r\ln(mr)-r+O\!\left(\frac{\ln(mr)}m\right).
 \label{eq:stirling}
\end{equation}
Starting from
\begin{equation}
 \ln B_{n,r}^{(d)}(m)=-n\ln d+\frac{n}{m}\ln\binom{q+m-1}{m}+\frac1m\ln(mr)!,
 \label{eq:H-logB-start}
\end{equation}
and using $m=tq$ and $\ln q=r\ln d$, we obtain
\begin{equation}
 \ln B_{n,r}^{(d)}(m)\le-n\ln d+r^2\ln d+r\ln r+r\ln t-r+\frac ntg(t)+\mathcal E_{n,r}(t),
 \label{eq:logB-detailed}
\end{equation}
where
\begin{equation}
 \mathcal E_{n,r}(t)=O\!\left(\frac{n[\ln q+\ln(1+t)]}{tq}\right)+O\!\left(\frac{\ln(trq)}{tq}\right).
 \label{eq:H-remainder}
\end{equation}
The $t$-dependent leading part of Eq.~\eqref{eq:logB-detailed} is
\begin{equation}
 F_{n,r}(t):=r\ln t+\frac ntg(t).
\end{equation}
Using $tg'(t)-g(t)=-\ln(1+t)$,
\begin{equation}
 F_{n,r}'(t)=\frac{rt-n\ln(1+t)}{t^2}.
\end{equation}
Thus the stationary equation is
\begin{equation}
 rt=n\ln(1+t).
 \label{eq:tstationary}
\end{equation}
The function $t/\ln(1+t)$ is strictly increasing on $(0,\infty)$ because $\ln(1+t)>t/(1+t)$.  Hence Eq.~\eqref{eq:tstationary} has a unique positive solution whenever $n/r>1$, and this solution is the unique minimum of $F_{n,r}$.  In the remainder of this subsection, $t=t_{n,r}$ denotes this minimizing solution.

Set $L:=n/r$.  Equation~\eqref{eq:tstationary} becomes $t=L\ln(1+t)$.  Writing $t=Ls$ gives
\begin{equation}
 s=\ln L+\ln(s+L^{-1}).
 \label{eq:H-s-identity}
\end{equation}
As $L\to\infty$, the unique solution has $t\to\infty$ and hence $s=\ln(1+t)\to\infty$.  For all sufficiently large $L$, Eq.~\eqref{eq:H-s-identity} and $s\ge1$ imply
$ s\le\ln L+\ln(2s)$.  Since $\ln(2s)\le s/2$ for all sufficiently large $s$, this gives $s\le2\ln L$ eventually and therefore $s=O(\ln L)$.  Returning to Eq.~\eqref{eq:H-s-identity}, we obtain $\ln(s+L^{-1})=O(\ln\ln L)$ and hence
$s=\ln L+O(\ln\ln L)$.  Consequently $s/\ln L=1+O(\ln\ln L/\ln L)$, so
$\ln(s+L^{-1})=\ln\ln L+o(1)$.  Substitution in Eq.~\eqref{eq:H-s-identity} yields
\begin{equation}
 s=\ln L+\ln\ln L+o(1),
\end{equation}
and therefore
\begin{equation}
 t=\frac nr\left[\ln\frac nr+\ln\ln\frac nr+o(1)\right].
 \label{eq:t-asympt}
\end{equation}
At this minimizing solution, for $1\le r\le\sqrt n$, $L=n/r\ge\sqrt n$, and the stationary equation gives $n/t=r/\ln(1+t)$.  Hence the first term in Eq.~\eqref{eq:H-remainder} is
\begin{equation}
 O\!\left(\frac{r}{q}\left[1+\frac{\ln q}{\ln(1+t)}\right]\right)=O_d(r),
 \label{eq:H-remainder-first}
\end{equation}
and it is $o(r)$ whenever $r\to\infty$, because $q=d^r$.  The second term in Eq.~\eqref{eq:H-remainder} is $o(1)$ uniformly in the same range.  Thus $\mathcal E_{n,r}(t)=O_d(r)$ uniformly for $1\le r\le\sqrt n$, and $\mathcal E_{n,r}(t)=o(r)$ when $r\to\infty$.

At the stationary point,
\begin{equation}
 \frac ntg(t)=\frac nt\ln(1+t)+n\ln(1+1/t)=r+n\ln(1+1/t).
\end{equation}
The first term cancels the $-r$ in Eq.~\eqref{eq:logB-detailed}.  Moreover, since $t\to\infty$,
\begin{equation}
 n\ln(1+1/t)=\frac nt+O\!\left(\frac n{t^2}\right)=O\!\left(\frac r{\ln(n/r)}\right)=o(r).
\end{equation}
Also Eq.~\eqref{eq:t-asympt} gives
\begin{equation}
 \ln t=\ln\frac nr+\ln\ln\frac nr+o(1).
\end{equation}
Hence
\begin{equation}
 r\ln r+r\ln t=r\ln n+r\ln\ln\frac nr+o(r).
\end{equation}
Substitution into Eq.~\eqref{eq:logB-detailed} therefore gives, after division by $\ln d$,
\begin{equation}
 -\log_d B_{n,r}^{(d)}(m)\ge n-r^2-r\log_dn-r\log_d\ln(n/r)-O_d(r).
 \label{eq:optimized-B}
\end{equation}
Let $m_*=tq$ denote the real optimizer of the continuous upper bound and choose an integer $\widehat m$ with $|\widehat m-m_*|\le1$.  It remains to verify that this rounding does not change the retained asymptotic order.  For consecutive integers, writing $C_m:=\binom{q+m-1}{m}$ and using $C_{m+1}/C_m=(q+m)/(m+1)$ gives
\begin{equation}
 \left|\frac{n}{m+1}\ln C_{m+1}-\frac{n}{m}\ln C_m\right|\le\frac{n\ln C_m}{m(m+1)}+\frac{n}{m+1}\ln\frac{q+m}{m+1}.
 \label{eq:H-round-binomial}
\end{equation}
For the consecutive integers adjacent to $m_*=tq$, one has $m\asymp tq$, and Eq.~\eqref{eq:binom-entropy} gives $\ln C_m=O(q\ln(1+t))$, while $\ln[(q+m)/(m+1)]=O(1/t)$.  The right-hand side of Eq.~\eqref{eq:H-round-binomial} is therefore $o(r)$ at the scale in Eq.~\eqref{eq:t-asympt}.  Similarly,
\begin{equation}
 \left|\frac{\ln((m+1)r)!}{m+1}-\frac{\ln(mr)!}{m}\right|\le\frac{\ln(mr)!}{m(m+1)}+\frac1{m+1}\ln\frac{((m+1)r)!}{(mr)!}=O\!\left(\frac{r\ln(mr)}m\right)=o(r).
 \label{eq:H-round-factorial}
\end{equation}
Thus replacing $m_*$ by $\widehat m$ changes $\ln B_{n,r}^{(d)}$ by $o(r)$, and Eq.~\eqref{eq:optimized-B} remains valid for the integer moment order used in the construction.

For $r=\alpha\sqrt n$ with fixed $0<\alpha<1$, the two logarithmic correction terms in Eq.~\eqref{eq:optimized-B} are $O_d(\sqrt n\ln n)$, and therefore
\begin{equation}
 -\log_d B_{n,r}^{(d)}(m)\ge(1-\alpha^2)n-O_d(\sqrt n\ln n).
 \label{eq:alpha-B}
\end{equation}

\subsection{A common moment-curve phase choice}

To prove the strong-converse statement for a single sequence of bases, the phase parameter must satisfy the required moment estimates simultaneously for all block sizes in the relevant range. Fix an integer $R\le\sqrt n$.  For each $1\le r\le R$, choose an integer $m_r$ by rounding the optimizer obtained above and set
\begin{equation}
 h_r:=m_rr,\qquad H:=\max_{1\le r\le R}h_r.
 \label{eq:H-common-H}
\end{equation}
Choose a prime $P>\max\{D,H\}$ and assign distinct nonzero elements $\alpha_{\mathbf{x}}\in\F_P$ to the $D$ global labels $\mathbf{x}\in\mathbb Z_d^n$.  Use the common $H$-coordinate moment curve
\begin{equation}
 \mathbf g_{\mathbf{x}}^{(H)}:=(\alpha_{\mathbf{x}},\alpha_{\mathbf{x}}^2,\ldots,\alpha_{\mathbf{x}}^H)^T\in\F_P^H.
 \label{eq:H-common-curve}
\end{equation}
For $\mathbf u\in\F_P^H$, define the corresponding flat state
\begin{equation}
 \ket{\Omega_{\mathbf u}^{(H)}}:=D^{-1/2}\sum_{\mathbf{x}\in\mathbb Z_d^n}\omega_P^{\mathbf u\cdot\mathbf g_{\mathbf{x}}^{(H)}}\ket{\mathbf{x}},
 \qquad \omega_P=e^{2\pi i/P}.
 \label{eq:H-common-state}
\end{equation}
The parameter $\mathbf u$ is taken uniformly from the finite set $\F_P^H$, so every expectation below is the normalized finite sum $P^{-H}\sum_{\mathbf u\in\F_P^H}$.

For each $r$, let
\begin{equation}
 \mathcal P_r:=\left\{\ket\Phi=\bigotimes_{j=1}^n\ket{\phi_j}:\ket{\phi_j}\in\mathbb S((\C^d)^{\otimes r})\right\},
 \label{eq:H-Pr}
\end{equation}
and let $\mu_r$ be the product of the normalized unitarily invariant probability measures on the $n$ local unit spheres.  Define the nonnegative product-Haar moment
\begin{equation}
 \mathcal R_r(\mathbf u):=\int_{\mathcal P_r}|\langle\Phi|(\Omega_{\mathbf u}^{(H)})^{\otimes r}\rangle|^{2m_r}\,d\mu_r(\Phi).
 \label{eq:H-Rr}
\end{equation}
For a fixed $r$, the phase average in the expansion of $\mathcal R_r$ is taken with the $H$-coordinate vectors in Eq.~\eqref{eq:H-common-curve}.  A surviving term must therefore satisfy equality of all $H$ coordinate sums, and in particular equality of the first $h_r$ power sums.  Since $P>H\ge h_r$, Lemma~\ref{lem:replica-rigidity}, applied to these first $h_r$ coordinates, implies equality of the two multisets of $h_r$ global labels.  The calculation leading to Eq.~\eqref{eq:H-phase-moment} thus applies without change and gives
\begin{equation}
 \mathbb E_{\mathbf u}\mathcal R_r(\mathbf u)\le \Lambda_r,
 \qquad
 \Lambda_r:=\frac{(m_rr)!}{D^{m_rr}},
 \qquad 1\le r\le R.
 \label{eq:H-common-average}
\end{equation}

Now combine all block sizes into one nonnegative functional,
\begin{equation}
 \mathcal F(\mathbf u):=\frac1R\sum_{r=1}^R\frac{\mathcal R_r(\mathbf u)}{\Lambda_r}.
 \label{eq:H-aggregate-functional}
\end{equation}
Using Eq.~\eqref{eq:H-common-average} and linearity of the finite average,
\begin{equation}
 \mathbb E_{\mathbf u}\mathcal F(\mathbf u)=\frac1R\sum_{r=1}^R\frac{\mathbb E_{\mathbf u}\mathcal R_r(\mathbf u)}{\Lambda_r}\le1.
 \label{eq:H-aggregate-average}
\end{equation}
Since $\mathcal F$ is nonnegative and its finite average is at most one, there exists at least one phase parameter $\mathbf u_*$ for which $\mathcal F(\mathbf u_*)\le1$.  Since every summand in Eq.~\eqref{eq:H-aggregate-functional} is nonnegative, this single choice obeys, simultaneously for every $1\le r\le R$,
\begin{equation}
 \mathcal R_r(\mathbf u_*)\le R\Lambda_r=R\frac{(m_rr)!}{D^{m_rr}}.
 \label{eq:H-simultaneous-R}
\end{equation}
Applying the Haar-to-maximum step from SM-C to Eq.~\eqref{eq:H-simultaneous-R}, with grouped local dimension $d^r$, yields
\begin{equation}
 \Gamma_r(\ket{\Omega_{\mathbf u_*}^{(H)}})\le\left[R\frac{(m_rr)!}{D^{m_rr}}\binom{d^r+m_r-1}{m_r}^{n}\right]^{1/m_r}.
 \label{eq:H-simultaneous-Gamma}
\end{equation}
The orbit basis $\cA_{\mathbf u_*,d}:=\{Z_d^{\mathbf y}\ket{\Omega_{\mathbf u_*}^{(H)}}:\mathbf y\in\mathbb Z_d^n\}$ has the same grouped product overlap for every basis state.  Proposition~\ref{prop:trace-overlap} therefore gives the simultaneous discrimination bound
\begin{equation}
 P_{\col}^{\SEP}(r;\cA_{\mathbf u_*,d})\le R^{1/m_r}B_{n,r}^{(d)}(m_r),
 \qquad 1\le r\le R.
 \label{eq:simultaneous-bound}
\end{equation}
For $R\le\sqrt n$, the optimizer in Eq.~\eqref{eq:t-asympt} satisfies $m_r\asymp d^r(n/r)\ln(n/r)$ uniformly for $1\le r\le R$.  In particular $m_r\ge c_d\sqrt n\ln n$ for some constant $c_d>0$ depending only on $d$ and all sufficiently large $n$, so
\begin{equation}
 \sup_{1\le r\le R}\frac{\ln R}{m_r}=o(1),\qquad R^{1/m_r}=1+o(1)
 \label{eq:H-R-factor}
\end{equation}
uniformly over the entire range.

\subsection{Completion of Theorem~\ref{thm:collective-unbounded}}

We first prove the copy-complexity lower bound.  By the meaning of the term $O_d(r)$ in Eq.~\eqref{eq:optimized-B}, there is a constant $K_d>0$, depending only on $d$, such that for all sufficiently large $n$ and all $1\le r\le\sqrt n$ the optimized integer moment order satisfies
\begin{equation}
 -\log_d B_{n,r}^{(d)}(m_r)\ge n-r^2-r\log_dn-r\log_d\ln(n/r)-K_dr.
 \label{eq:H-explicit-Kd}
\end{equation}
Let
\begin{equation}
 a_n:=\frac12\log_dn+\frac12\log_d\ln n+C_d,
 \qquad
 r_n^*:=\sqrt n-a_n,
 \qquad
 R_n:=\lfloor r_n^*\rfloor,
 \label{eq:H-Rn}
\end{equation}
where the constant $C_d>0$, depending only on $d$, will be fixed below.  Since $a_n=O_d(\ln n)=o(\sqrt n)$, $r_n^*>0$ for all sufficiently large $n$.  Define the right-hand side of Eq.~\eqref{eq:H-explicit-Kd} as a function of a real variable $r$,
\begin{equation}
 E_{n,d}(r):=n-r^2-r\log_dn-r\log_d\ln(n/r)-K_dr.
 \label{eq:H-E-function}
\end{equation}
For $1\le r\le\sqrt n$,
\begin{equation}
 E_{n,d}'(r)=-2r-\log_dn-K_d-\log_d\ln(n/r)+\frac{1}{\ln d\,\ln(n/r)}<0
 \label{eq:H-E-derivative}
\end{equation}
for all sufficiently large $n$.  Thus $E_{n,d}$ is decreasing on the relevant interval and, because $R_n\le r_n^*$,
\begin{equation}
 E_{n,d}(R_n)\ge E_{n,d}(r_n^*).
 \label{eq:H-floor-monotonicity}
\end{equation}
It remains to evaluate the continuous point $r_n^*$.  Directly,
\begin{equation}
 n-(r_n^*)^2=2a_n\sqrt n-a_n^2,
 \qquad
 r_n^*\log_dn=\sqrt n\,\log_dn-a_n\log_dn.
 \label{eq:H-first-two-expand}
\end{equation}
Furthermore,
\begin{equation}
 \ln\frac{n}{r_n^*}=\frac12\ln n-\ln\!\left(1-\frac{a_n}{\sqrt n}\right)=\frac12\ln n+O_d\!\left(\frac{\ln n}{\sqrt n}\right),
 \label{eq:H-log-n-over-r}
\end{equation}
so
\begin{equation}
 \log_d\ln\frac{n}{r_n^*}=\log_d\ln n-\log_d2+O_d\!\left(\frac{1}{\sqrt n}\right).
 \label{eq:H-loglog-expand}
\end{equation}
Consequently,
\begin{equation}
 r_n^*\log_d\ln\frac{n}{r_n^*}=\sqrt n\,[\log_d\ln n-\log_d2]+O_d(\ln n\,\ln\ln n).
 \label{eq:H-third-expand}
\end{equation}
Substituting Eqs.~\eqref{eq:H-first-two-expand} and~\eqref{eq:H-third-expand} into Eq.~\eqref{eq:H-E-function}, and using the definition of $a_n$, gives
\begin{equation}
 E_{n,d}(r_n^*)=[2C_d+\log_d2-K_d]\sqrt n+O_d((\ln n)^2).
 \label{eq:H-endpoint-exponent-continuous}
\end{equation}
Choose $C_d$ so that $2C_d+\log_d2-K_d>1$.  Since $(\ln n)^2=o(\sqrt n)$, Eqs.~\eqref{eq:H-explicit-Kd}, \eqref{eq:H-floor-monotonicity}, and~\eqref{eq:H-endpoint-exponent-continuous} imply
\begin{equation}
 -\log_d B_{n,R_n}^{(d)}(m_{R_n})\ge E_{n,d}(R_n)>0
 \label{eq:H-endpoint-exponent}
\end{equation}
for all sufficiently large $n$.

Apply the common phase construction with $R=R_n$ and denote the resulting orbit basis by $\cA_{n,d}$.  From Eq.~\eqref{eq:H-R-factor},
\begin{equation}
 \log_d R_n^{1/m_{R_n}}=\frac{\ln R_n}{m_{R_n}\ln d}=o(1).
 \label{eq:H-prefactor-log}
\end{equation}
Equations~\eqref{eq:simultaneous-bound}, \eqref{eq:H-endpoint-exponent}, and~\eqref{eq:H-prefactor-log} therefore imply
\begin{equation}
 P_{\col}^{\SEP}(R_n;\cA_{n,d})<1
 \label{eq:H-endpoint-below-one}
\end{equation}
for all sufficiently large $n$.  Lemma~\ref{lem:A-copy-monotonicity} then excludes perfect collective-SEP discrimination with every $r\le R_n$, so
\begin{equation}
 N_{\col}^{\SEP}(\cA_{n,d})>R_n=\sqrt n-\frac12\log_dn-\frac12\log_d\ln n-O_d(1),
\end{equation}
which proves the first assertion of Theorem~\ref{thm:collective-unbounded}.

For the strong-converse statement, fix $0<\alpha<1$ and set $r_n:=\lfloor\alpha\sqrt n\rfloor$.  Since $R_n/\sqrt n\to1$, $r_n\le R_n$ for all sufficiently large $n$.  Equation~\eqref{eq:alpha-B} gives
\begin{equation}
 -\log_d B_{n,r_n}^{(d)}(m_{r_n})\ge(1-\alpha^2)n-O_d(\sqrt n\ln n)=(1-\alpha^2)n+o(n).
 \label{eq:H-alpha-explicit}
\end{equation}
The factor $R_n^{1/m_{r_n}}$ contributes only $o(1)$ to the base-$d$ logarithm by Eq.~\eqref{eq:H-R-factor}.  Hence Eq.~\eqref{eq:simultaneous-bound} yields
\begin{equation}
 P_{\col}^{\SEP}(r_n;\cA_{n,d})\le d^{-(1-\alpha^2)n+o(n)}=D^{-(1-\alpha^2)+o(1)}.
 \label{eq:H-strong-converse}
\end{equation}
Finally, the information bound proved in SM-\ref{app:information} gives
\begin{equation}
 I_{\col}^{\SEP}(r_n;\cA_{n,d})\le(\alpha^2+o(1))\log_2D.
\end{equation}
This proves the remaining assertions of Theorem~\ref{thm:collective-unbounded}.

\end{document}